\theoremstyle{plain}
\newtheorem{theorem}{Theorem}[section]
\newtheorem{lemma}[theorem]{Lemma}
\newtheorem{corollary}[theorem]{Corollary}
\newtheorem{proposition}[theorem]{Proposition}
\theoremstyle{definition}
\newtheorem{definition}[theorem]{Definition}
\newtheorem{remark}[theorem]{Remark}
\newtheorem{example}[theorem]{Example}
\newtheorem*{property}{Required properties of the $\Delta$-function}
\newcommand*{\cB}{\mathcal{B}}
\newcommand*{\cE}{\mathcal{E}}
\newcommand*{\cF}{\mathcal{F}}
\newcommand*{\cI}{\mathcal{I}}
\newcommand*{\cN}{\mathcal{N}}
\newcommand*{\cM}{\mathcal{M}}
\newcommand*{\cP}{\mathcal{P}}
\newcommand*{\cR}{\mathcal{R}}
\newcommand*{\cS}{\mathcal{S}}
\newcommand*{\cT}{\mathcal{T}}
\newcommand*{\cU}{\mathcal{U}}
\newcommand*{\MM}{\mathbb{M}}
\newcommand*{\N}{\mathbb{N}}
\newcommand*{\R}{\mathbb{R}}
\newcommand*{\eps}{\varepsilon}
\newcommand*{\rank}{\mathrm{rank}}
\newcommand*{\id}{\mathrm{id}}
\newcommand*{\tr}{\mathrm{tr}}
\newcommand*{\ket}[1]{| #1 \rangle}
\newcommand*{\bra}[1]{\langle #1 |}
\newcommand*{\spr}[2]{\langle #1 | #2 \rangle}
\newcommand{\proj}[1]{|#1\rangle\!\langle #1|}
\newcommand*{\Pos}{\mathrm{Pos}}
\newcommand*{\D}{\mathrm{D}}
\newcommand*{\TPCP}{\mathrm{TPCP}}
\newcommand*{\MD}{D_{\mathbb{M}}}
\newcommand*{\dd}{|\hspace{-0.4mm}|}
\newcommand*{\ci}{\mathrm{i}} 
\newcommand{\norm}[1]{\left\lVert#1\right\rVert}
\begin{document}

\title{Universal recovery map for approximate Markov chains}

\author[1]{David Sutter}
\author[2,3]{Omar Fawzi}
\author[1]{Renato Renner}
\affil[1]{Institute for Theoretical Physics, ETH Zurich, Switzerland}
\affil[2]{Department of Computing and Mathematical Sciences, Caltech, USA}
\affil[3]{LIP\hspace{0.3mm}\footnote{UMR 5668 LIP - ENS Lyon - CNRS - UCBL - INRIA, Universit\'e de Lyon}, ENS de Lyon, France}
\date{}

\maketitle

\begin{abstract}
A central question in quantum information theory is to determine how well lost information can be reconstructed. Crucially, the corresponding recovery operation should perform well without knowing the information to be reconstructed. In this work, we show that the \emph{quantum conditional mutual information} measures the performance of such recovery operations. More precisely, we prove that the conditional mutual information $I(A\! :\! C | B)$ of a tripartite quantum state $\rho_{ABC}$ can be bounded from below by its distance to the closest recovered state $\cR_{B \to BC}(\rho_{AB})$, where the $C$-part is reconstructed from the $B$-part only and the recovery map $\cR_{B \to BC}$ merely depends on $\rho_{BC}$. One particular application of this result implies the equivalence between two different approaches to define topological order in quantum systems.

\end{abstract}

\section{Introduction}
A state $\rho_{ABC}$ on a tripartite quantum system $A \otimes B \otimes C$ forms a \emph{(quantum) Markov chain} if it can be recovered from its marginal $\rho_{AB}$ on $A\otimes B$ by a quantum operation $\cR_{B \to BC}$ from $B$ to $B\otimes C$, i.e.,
\begin{align}  \label{eq_markov}
\rho_{ABC} = \cR_{B \to BC} (\rho_{AB})\ .
\end{align}
An equivalent characterization of $\rho_{ABC}$ being a Markov chain is that the \emph{conditional mutual information} $I(A:C|B)_{\rho}:=H(AB)_{\rho}+H(BC)_{\rho}-H(B)_{\rho}-H(ABC)_{\rho}$ is zero~\cite{LieRus73,Pet86,Pet03} where $H(A)_{\rho} :=-\tr( \rho_A \log_2 \rho_A)$ is the von Neumann entropy. The structure of these states has been studied in various works. In particular, it has been shown that $A$ and $C$ can be viewed as independent conditioned on $B$, for a meaningful notion of conditioning~\cite{HJPW04}. Very recently it has been shown that Markov states can be alternatively characterized by having a generalized R\'enyi conditional mutual  information that vanishes~\cite{DW15}.

A natural question that is relevant for applications is whether the above statements are robust (see~\cite{Kim13} for an example and~\cite{FR14} for a discussion). Specifically, one would like to have a characterization of the states that have a small (but not necessarily vanishing) conditional mutual information, i.e., ${I(A:C|B)} \leq \eps$ for $\eps > 0$.
First results revealed that such states can have a large distance to Markov chains that is independent of $\eps$~\cite{CSW12,ILW08}, which has been taken as an indication that their characterization may be difficult.
However, it has subsequently been realized that a more appropriate measure instead of the distance to a (perfect) Markov chain is to consider how well~\eqref{eq_markov} is satisfied~\cite{WL12,Kim13,Zha12,BSW14}. This motivated the definition of \emph{approximate Markov chains} as states where~\eqref{eq_markov} approximately holds.

In recent work~\cite{FR14}, it has been shown that the set of approximate Markov chains indeed coincides with the set of states with small conditional mutual information. In particular, the distance between the two terms in~\eqref{eq_markov}, which may be measured in terms of their fidelity $F$, is bounded by the conditional mutual information.\footnote{The fidelity of $\rho$ and $\sigma$ is defined as $F(\rho,\sigma):=\norm{\sqrt{\rho}\sqrt{\sigma}}_1$.}
More precisely, for any state $\rho_{ABC}$ there exists a trace-preserving completely positive map $\cR_{B \to BC}$ (the \emph{recovery map}) such that
\begin{align} \label{eq_FR}
I(A:C|B)_{\rho} \geq - 2 \log_2 F \bigl(\rho_{ABC},\cR_{B \to BC}(\rho_{AB})\bigr) \ .
\end{align}
Furthermore, a converse inequality of the form $I(A:C|B)^2_{\rho} \leq -c^2 \log_2 F(\rho_{ABC},\cR_{B \to BC}(\rho_{AB}))$, where $c$ depends logarithmically on the dimension of $A$ can be shown to hold~\cite{BSW14,FR14}.

We also note that the fidelity term in~\eqref{eq_FR}, maximized over all recovery maps, i.e., 
\begin{align}
F(A;C|B)_{\rho}:= \sup_{\cR_{B \to BC}} F\bigl(\rho_{ABC},\cR_{B \to BC}(\rho_{AB})\bigr) \ 
\end{align}
is called \emph{fidelity of recovery},\footnote{We note that if $A$, $B$, and $C$ are finite-dimensional Hilbert spaces the supremum is achieved, since the set of recovery maps is compact (see Remark~\ref{rmk_TPCPcompact}) and the fidelity is continuous in the input state (see Lemma~B.9 in~\cite{FR14}).} and has been introduced and studied in~\cite{SW14,TB15}.
With this quantity the main result of~\cite{FR14} can be written as
\begin{align}
I(A:C|B)_{\rho} \geq - 2 \log_2 F(A;C|B)_{\rho} \ .
\end{align}
The fidelity of recovery has several natural properties, e.g., it is monotonous under local operations on $A$ and $C$, and it is multiplicative~\cite{TB15}.

The result of~\cite{FR14} has been extended in various ways. Based on quantum state redistribution protocols, it has been shown in~\cite{BHOS14} that~\eqref{eq_FR} still holds if the fidelity term is replaced by the \emph{measured relative entropy}\footnote{The measured relative entropy is defined in Appendix~\ref{ap_measRelEnt} (Definition~\ref{def_MeasRelEnt}).} $\MD(\cdot,\cdot)$, which is generally larger, i.e.,
  \begin{align} \label{eq_fernando}
    I(A:C|B)_{\rho} \geq \MD\bigl(\rho_{ABC}\dd\cR_{B \to BC}(\rho_{AB})\bigr) \geq -2 \log_2 F\bigl( \rho_{ABC},\cR_{B \to BC}(\rho_{AB})\bigr) \ .
  \end{align}
Furthermore, in~\cite{TB15} an alternative proof of~\eqref{eq_FR} has been derived that uses properties of the fidelity of recovery (in particular, multiplicativity). Another recent work~\cite{BLW14} showed how to generalize ideas from~\cite{FR14} to prove a remainder term for the monotonicity of the relative entropy in terms of a recovery map that satisfies~\eqref{eq_FR}.

All known proofs of~\eqref{eq_FR} are non-constructive, in the sense that the recovery map $\cR_{B \to BC}$ is not given explicitly. 
It is merely known~\cite{FR14} that if $A$, $B$, and $C$ are finite-dimensional then $\cR_{B \to BC}$ can always be chosen such that it has the form
\begin{align} \label{eq_mappingFR}
  X_B \mapsto V_{B C} \rho_{B C}^{\frac{1}{2}} (\rho_B^{-\frac{1}{2}} U_B  X_B U_B^{\dagger} \rho_B^{-\frac{1}{2}} \otimes \id_C) \rho_{B C}^{\frac{1}{2}} V_{B C}^{\dagger} 
\end{align}
on the support of $\rho_B$, where $U_B$ and $V_{BC}$ are unitaries on $B$ and $B\otimes C$, respectively. It would be natural to expect that the choice of the recovery map that satisfies~\eqref{eq_FR} only depends on $\rho_{BC}$, however this is only known in special cases. One such special case are Markov chains $\rho_{ABC}$, i.e., states for which~\eqref{eq_markov} holds perfectly. Here a map of the form~\eqref{eq_mappingFR} with $V_{BC}=\id_{BC}$ and $U_B = \id_B$ (sometimes referred to as \emph{transpose map} or \emph{Petz recovery map}) serves as a perfect recovery map~\cite{Pet86,Pet03}.
Another case where a recovery map that only depends on $\rho_{BC}$ is known explicitly are states with a classical $B$ system, i.e., qcq-states of the form $\rho_{ABC}=\sum_b P_{B}(b) \proj{b}\otimes \rho_{AC,b}$, where $P_B$ is a probability distribution, $\{\ket{b} \}_b$ an orthonormal basis on $B$ and $\{\rho_{AC,b}\}_b$ a family of states on $A \otimes C$. As discussed in~\cite{FR14}, for such states~\eqref{eq_FR} holds for the recovery map defined by $\cR_{B \to BC}(\proj{b})=\proj{b} \otimes \rho_{C,b}$ for all $b$, where $\rho_{C,b}=\tr_A(\rho_{AC,b})$. For general states, however, the previous results left open the possibility that the recovery map $\cR_{B \to BC}$ depends on the \emph{full state} $\rho_{ABC}$ rather than the marginal $\rho_{BC}$ only. In particular, the unitaries $U_B$ and $V_{BC}$ in~\eqref{eq_mappingFR}, although acting only on $B$ respectively $B \otimes C$, could have such a dependence. 

In this work we show that for any state $\rho_{BC}$ on $B \otimes C$ there exists a recovery map $\cR_{B \to BC}$ that is \emph{universal}|in the sense that the distance between \emph{any} extension $\rho_{ABC}$ of $\rho_{BC}$ and $\cR_{B \to BC}(\rho_{AB})$ is bounded from above by the conditional mutual information $I(A:C|B)_{\rho}$. In other words we show that~\eqref{eq_FR} remains valid if the recovery map is chosen depending on $\rho_{BC}$ only, rather than on $\rho_{ABC}$. This result implies a close connection between two different approaches to define topological order of quantum systems.

\section{Main result} \label{sec_mainRes}

\begin{theorem} \label{thm_main}
 For any density operator $\rho_{B C}$ on $B \otimes C$ there exists a trace-preserving completely positive map $\cR_{B \to B C}$ such that for any extension $\rho_{A B C}$ on $A \otimes B \otimes C$
   \begin{align} \label{eq_main_new_separable}
I(A:C|B)_{\rho} \geq - 2 \log_2 F \bigl(\rho_{ABC},\cR_{B \to BC}(\rho_{AB})\bigr) \ ,
  \end{align}
where $A$, $B$, and $C$ are separable Hilbert spaces.
\end{theorem}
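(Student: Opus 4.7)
The plan is to exhibit the recovery map explicitly as a fixed functional of $\rho_{BC}$ and then verify the bound via a multivariate complex-interpolation trace inequality. Because the map is written down before any extension $\rho_{ABC}$ is chosen, universality is automatic; the real work is shifted to establishing a sharper-than-Golden--Thompson trace inequality.

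I would take the universal recovery map to be an average of rotated Petz maps,
\begin{align*}
\cR_{B\to BC}(X_B) := \int_{-\infty}^{\infty} \rho_{BC}^{(1-\ci t)/2}\bigl(\rho_B^{-(1-\ci t)/2} X_B \rho_B^{-(1+\ci t)/2}\otimes\id_C\bigr)\rho_{BC}^{(1+\ci t)/2}\, \mathrm{d}\beta(t),
\end{align*}
where $\mathrm{d}\beta(t) = \tfrac{\pi}{2(\cosh(\pi t)+1)}\,\mathrm{d}t$. Each integrand is completely positive and trace preserving on the support of $\rho_B$, and complex-conjugate pairing makes the average self-adjoint; this map manifestly depends only on $\rho_{BC}$. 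To bound the fidelity I would first use $-2\log_2 F(\rho,\sigma) \leq \MD(\rho\dd\sigma)$ together with the variational formula $\MD(\rho\dd\sigma) = \sup_{\omega>0}\bigl[\tr(\rho\log\omega) - \log\tr(\sigma\omega)\bigr]$ to reduce the theorem to the operator inequality $I(A:C|B)_\rho \geq \MD\bigl(\rho_{ABC}\dd\cR_{B\to BC}(\rho_{AB})\bigr)$. Expanding the definitions and substituting the explicit form of $\cR_{B\to BC}$, this in turn reduces to a multivariate Golden--Thompson-type bound relating $\tr\exp(\log\rho_{AB}-\log\rho_B+\log\rho_{BC})$, which encodes the exponential of $-I(A:C|B)_\rho$, to an integral over complex-time rotations of products of fractional powers of $\rho_{AB}$, $\rho_B^{-1}$, and $\rho_{BC}$ against the measure $\beta$.

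The main obstacle is this multivariate trace inequality, since two-operator Golden--Thompson is insufficient for three noncommuting inputs. My plan is to establish it by applying Hirschman's strengthening of Hadamard's three-line theorem to an analytic function of the form $z \mapsto \log\bigl\lVert \rho_{BC}^{(1-z)/2}\rho_B^{-(1-z)/2}\rho_{AB}^{1/2}\bigr\rVert_2$, bounded and holomorphic on the strip $0\leq \operatorname{Re} z \leq 1$; H\"older bounds in Schatten norms on the two boundary lines, combined with the Poisson--Stein representation of such strip-bounded functions, produce exactly the kernel $\mathrm{d}\beta$. With the finite-dimensional case settled, I would pass to separable Hilbert spaces by truncating $\rho_{BC}$ to an increasing sequence of finite-dimensional subspaces, extracting a subsequential limit of the recovery maps by compactness of the set of TPCP maps, and invoking lower semicontinuity of the conditional mutual information together with continuity of the fidelity to close the argument.
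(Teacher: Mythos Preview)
Your approach is correct but genuinely different from the paper's. The paper does \emph{not} write down an explicit recovery map; instead it treats the Fawzi--Renner and Brand\~ao--Harrow--Oppenheim--Strelchuk results as black boxes, which give a state-dependent map for each individual $\rho_{ABC}$, and then proves an abstract compactness--convexity statement (Proposition~\ref{prop_finite}) showing that whenever such state-dependent maps exist one can upgrade to a single universal map. The argument goes by induction on the size of a finite set of extensions, mixing pairs of states with a classical flag and using concavity of the fidelity together with compactness of $\TPCP(B,B\otimes C)$ to pass to limits; the infinite-dimensional case is handled by a separate approximation layer. By contrast, you propose to exhibit the averaged rotated Petz map from the outset and prove the bound directly via a multivariate Golden--Thompson inequality obtained from Hirschman's three-line theorem; this is the route later taken in the literature (Wilde; Junge--Renner--Sutter--Wilde--Winter; Sutter--Berta--Tomamichel). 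Your method buys an explicit formula depending only on $\rho_{BC}$ and yields the stronger measured-relative-entropy bound uniformly, whereas the paper's method buys a modular template (Proposition~\ref{prop_finite}) that upgrades any per-state inequality satisfying Properties~1--4 to a universal one without needing the interpolation machinery. One small wrinkle in your sketch: since your map is already explicit, the separable-space step should not require ``extracting a subsequential limit of the recovery maps''---you only need to show that the truncated explicit maps converge to the explicit map on the full space and that the bound survives the limit; invoking compactness there is unnecessary and slightly misleading.
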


\begin{remark} \label{rmk_measRel}
If $B$ and $C$ are finite-dimensional Hilbert spaces, the statement of Theorem~\ref{thm_main} can be tightened to
  \begin{align} \label{eq_main_new}
    I(A:C|B)_{\rho} \geq \MD\bigl(\rho_{ABC}\dd\cR_{B \to BC}(\rho_{AB})\bigr) \ .
  \end{align}
\end{remark}

\begin{remark} \label{rmk_BtoBC}
  The recovery map $\cR_{B \to B C}$ predicted by Theorem~\ref{thm_main} has the property that it maps $\rho_B$ to $\rho_{B C}$. To see this, note that $I(A:C|B)_{\tilde \rho}=0$ for any density operator of the form $\tilde{\rho}_{A B C} = \rho_A\otimes \rho_{B C}$. Theorem~\ref{thm_main} thus asserts that $\tilde{\rho}_{A B C}$ must be equal to~$\cR_{B \to B C}(\tilde{\rho}_{A B})$, which implies that $\rho_{B C} = \cR_{B \to B C}(\rho_B)$. We note that so far it was unknown whether recovery maps that satisfy~\eqref{eq_FR} and have this property do exist.
\end{remark}

We note that Theorem~\ref{thm_main} does not reveal any information about the structure of the recovery map that satisfies~\eqref{eq_main_new_separable}. However, if we consider a linearized version of the bound~\eqref{eq_main_new_separable}, we can make more specific statements.

\begin{corollary} \label{cor_main}
 For any density operator $\rho_{B C}$ on $B\otimes C$ there exists a trace-preserving completely positive map $\cR_{B \to B C}$ such that for any extension $\rho_{A B C}$ on $A \otimes B \otimes C$
  \begin{align} \label{eq_main}
    F\bigl(\rho_{A B C}, \cR_{B \to BC}(\rho_{A B})\bigr) \geq 1- \frac{\ln(2)}{2} I(A : C | B)_{\rho} \ ,
  \end{align}
  where $A$, $B$, and $C$ are separable Hilbert spaces. Furthermore, if $B$ and $C$ are finite-dimensional then $\cR_{B \to BC}$ has the form
  \begin{align} \label{eq_unitalRecMap}
X_B \mapsto \rho_{BC}^{\frac{1}{2}} \, \cU_{BC \to BC}(\rho_B^{-\frac{1}{2}} X_B\rho_B^{-\frac{1}{2}}  \otimes \id_C) \, \rho_{BC}^{\frac{1}{2}}
\end{align}
on the support of $\rho_B$, where $\cU_{BC \to BC}$ is a unital trace-preserving map from $B \otimes C$ to $B \otimes C$.
\end{corollary}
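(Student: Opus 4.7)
The plan is to derive Corollary~\ref{cor_main} from Theorem~\ref{thm_main} in two stages: a simple linearisation of the logarithmic bound, followed by an explicit structural rewriting of the universal recovery map in the finite-dimensional case.

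For~\eqref{eq_main} I would invoke Theorem~\ref{thm_main} to obtain a TPCP map $\cR_{B\to BC}$ depending only on $\rho_{BC}$ such that $I(A:C|B)_\rho \geq -2\log_2 F\bigl(\rho_{ABC},\cR_{B\to BC}(\rho_{AB})\bigr)$ for every extension $\rho_{ABC}$. Combining this with the elementary inequality $-\ln x \geq 1-x$, valid for $x\in(0,1]$, yields $I(A:C|B)_\rho \geq 2(1-F)/\ln 2$, which rearranges to the claimed bound $F \geq 1 - \frac{\ln(2)}{2}\,I(A:C|B)_\rho$.

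For the structural form~\eqref{eq_unitalRecMap}, I expect that when $B$ and $C$ are finite-dimensional the universal recovery map delivered by Theorem~\ref{thm_main} can be taken to be the rotated-Petz average
\begin{equation*}
 \cR_{B\to BC}(X_B) = \int_{-\infty}^{\infty} \beta_0(t)\, \rho_{BC}^{(1+\ci t)/2}\bigl(\rho_B^{-(1+\ci t)/2} X_B \rho_B^{-(1-\ci t)/2} \otimes \id_C\bigr)\rho_{BC}^{(1-\ci t)/2}\,dt,
\end{equation*}
for a fixed probability density $\beta_0$ on $\R$, e.g.\ $\beta_0(t)=\pi/(2(\cosh(\pi t)+1))$. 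The key observation is that $W_t := \rho_{BC}^{\ci t/2}(\rho_B^{-\ci t/2}\otimes \id_C)$ is unitary on $B\otimes C$: both $\rho_{BC}^{\ci t/2}$ and $\rho_B^{\ci t/2}$ are unitary and $W_t W_t^\dagger = \id_{BC}$ by direct computation. Writing $\rho_B^{-(1\pm \ci t)/2} = \rho_B^{\mp \ci t/2}\rho_B^{-1/2}$ inside the integrand factors each term into $\rho_{BC}^{1/2}\, W_t(\rho_B^{-1/2}X_B\rho_B^{-1/2}\otimes \id_C)W_t^\dagger\,\rho_{BC}^{1/2}$, so the integral has the form~\eqref{eq_unitalRecMap} with $\cU_{BC\to BC}(Y) := \int \beta_0(t)\, W_t Y W_t^\dagger\, dt$; this is automatically unital and trace-preserving as a convex combination of unitary conjugations.

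The main obstacle I anticipate is matching the abstract recovery map guaranteed by Theorem~\ref{thm_main}---potentially obtained through a minimax or compactness argument and hence not a priori explicit---with the rotated-Petz integral above. One route is to adapt the proof of Theorem~\ref{thm_main} so that it already outputs a map of this form; an alternative is to symmetrise a general universal $\cR$ against the modular automorphisms $X_B\mapsto \rho_B^{\ci t/2}X_B\rho_B^{-\ci t/2}$ of $\rho_B$, exploiting the invariance of the conditional mutual information under local unitary conjugations. Once this is in place, trace-preservation of the resulting $\cR_{B\to BC}$---equivalent to $\tr_C \cU_{BC\to BC}^*(\rho_{BC}) = \rho_B$---is a routine check from the unitarity of $W_t$ and the commutation of $\rho_{BC}^{\ci t/2}$ with $\rho_{BC}$.
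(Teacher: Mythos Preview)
Your derivation of~\eqref{eq_main} from Theorem~\ref{thm_main} via $-\ln x \geq 1-x$ is correct and is exactly what the paper does.

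For the structural form~\eqref{eq_unitalRecMap}, however, your plan has a genuine gap and diverges from the paper's argument. Theorem~\ref{thm_main} is proved by a compactness argument over all of $\TPCP(B,B\otimes C)$; it gives no information about the internal structure of the limit map, and in particular it does \emph{not} assert that the universal recovery map is the rotated-Petz average with density $\beta_0$. That explicit formula is a separate (and later) result; you cannot invoke it here without independent justification. Your ``route 2'' (post-hoc symmetrisation of an abstract universal $\cR$ against the modular flow of $\rho_B$) does not obviously force the map into the Petz sandwich form $\rho_{BC}^{1/2}(\,\cdot\,)\rho_{BC}^{1/2}$, so it is unclear how it would yield~\eqref{eq_unitalRecMap}.

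The paper takes a different route, which is essentially your ``route 1'' but with different input. It does \emph{not} aim for the explicit $\beta_0$ integral. Instead it first shows (Lemma~\ref{lem_removeUnitaries}, using the structural result of~\cite{FR14} plus a limiting argument that forces the map to fix $\rho_{BC}$) that for each individual $\rho_{ABC}$ there is a \emph{non-universal} recovery map of the form
\[
X_B \mapsto \rho_{BC}^{1/2}\, W_{BC}\bigl(\rho_B^{-1/2} X_B \rho_B^{-1/2}\otimes \id_C\bigr)W_{BC}^\dagger\, \rho_{BC}^{1/2}
\]
with $W_{BC}$ unitary, satisfying the linearised bound. It then re-runs Proposition~\ref{prop_finite} (the same compactness/convexity machinery behind Theorem~\ref{thm_main}) with the function family $\tilde\Delta_{\cR}(\rho)=F(\rho,\cR(\rho_{AB}))-1+\tfrac{\ln 2}{2}I(A:C|B)_\rho$ and with $\cP$ taken to be the convex hull of maps of the displayed form. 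Since a convex combination of unitary conjugations is unital and trace-preserving, every element of $\cP$ already has the shape~\eqref{eq_unitalRecMap}, and the universal map produced by Proposition~\ref{prop_finite} inherits it. The key point you are missing is that the structure is fed in \emph{at the pointwise level} (via~\cite{FR14}) and then preserved by the compactness argument, rather than extracted from the abstract universal map after the fact.
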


\begin{remark} \label{rmk_diffRepRecMap}
Following the proof of Corollary~\ref{cor_main} we can deduce a more specific structure of the universal recovery map. In the finite-dimensional case the map $\cR_{B \to BC}$ satisfying~\eqref{eq_main} can be assumed to have the form
\begin{align} \label{eq_RecMapFormSpecific}
X_B \mapsto \int V^s_{B C} \rho_{B C}^{\frac{1}{2}} (\rho_B^{-\frac{1}{2}} U^s_B  X_B U^{s\, \dagger}_B \rho_B^{-\frac{1}{2}} \otimes \id_C) \rho_{B C}^{\frac{1}{2}} {V^{s \, \dagger}_{B C}} \, \mu(\mathrm{d}s) \ ,
\end{align}
where $\mu$ is a probability measure on some set $\cS$, $\{V^s_{BC} \}_{s\in \cS}$ is a family of unitaries on $B \otimes C$ that commute with $\rho_{BC}$, and $\{U^s_{B} \}_{s \in \cS}$ is a family of unitaries on $B$ that commute with $\rho_{B}$. However, the representation of the recovery map given in~\eqref{eq_unitalRecMap} has certain advantages compared to the representation~\eqref{eq_RecMapFormSpecific}. The fidelity maximized over all recovery maps of the form~\eqref{eq_unitalRecMap} can be phrased as a semidefinite program and therefore be computed efficiently, whereas it is unknown whether the same is possible for~\eqref{eq_RecMapFormSpecific}.

We note that for almost all density operators $\rho_{BC}$, i.e., for all $\rho_{BC}$ except for a set of measure zero, we can replace the unitaries $U_B^s$ and $V_{BC}^s$ by complex matrix exponentials of the form $\rho_B^{\ci t}$ and $\rho_{BC}^{\ci t}$, respectively, with $t \in \R$. This shows that~\eqref{eq_RecMapFormSpecific} without the integral (the integration in~\eqref{eq_RecMapFormSpecific} is only necessary to guarantee that the recovery map is universal) coincides with the recovery map found in~\cite{wilde15}.\footnote{This follows by the equidistribution theorem which is a special case of the strong ergodic theorem~\cite[Section~II.5]{simon_book} (see also~\cite{einsiedler_book}).}
\end{remark}

\begin{example} \label{cor_MapSpecified}
For density operators with a marginal on $B \otimes C$ of the form $\rho_{BC} = \rho_{B} \otimes \rho_{C}$, a universal recovery map that satisfies~\eqref{eq_main_new} is uniquely defined on the support of $\rho_{B}$|it is the transpose map, which in this case simplifies to $\cR_{B \to BC}  :   X_B \mapsto X_B \otimes \rho_C$. 
It is straightforward to see that \eqref{eq_main_new} holds. In fact, we even have equality if we consider the relative entropy (which is in general larger than the measured relative entropy), i.e.,
\begin{align}
I(A:C|B)_{\rho} = D\bigl(\rho_{ABC}\dd\cR_{B \to BC}(\rho_{AB})\bigr) \ .
\end{align}
The uniqueness of $\cR_{B \to BC}$ on the support of $\rho_B$ follows by using the fact that the universal recovery map should perfectly recover the Markov state $\rho_{AB} \otimes \rho_{C}$ where $\rho_{AB}$ is a purification of $\rho_B$. This forces $\cR_{B \to BC}$ to agree with the transpose map on the support of $\rho_B$~\cite{Pet86,Pet03}.
\end{example}



The proof of Theorem~\ref{thm_main}  is structured into two parts. We first prove the statement for finite-dimensional Hilbert spaces $B$, and $C$ in Section~\ref{sec_finiteDim} and then show that this implies the statement for general separable Hilbert spaces in Section~\ref{sec_infiniteSystems}. The proof of 
Corollary~\ref{cor_main} is given in Section~\ref{sec_pfCor}.

\section{Applications} \label{sec_applications}
A celebrated result known as \emph{strong subadditivity} states that the conditional quantum mutual information of any density operator is non-negative~\cite{LieRus73}, i.e.,
\begin{align}
I(A:C|B)_{\rho} \geq 0 \ ,
\end{align}
for any density operator $\rho_{ABC}$ on $A \otimes B \otimes C$. Theorem~\ref{thm_main} implies a strengthened version of this inequality with a remainder term that is universal in the sense that it only depends on $\rho_{BC}$. The conditional quantum mutual information is a useful tool in different areas of physics and computer science. It is helpful to characterize measures of entanglement~\cite{FR14,LiWin14}, analyze the correlations of quantum many-body systems~\cite{kim_phd,Kim13}, prove quantum de Finetti results~\cite{BraHar13_1,BraHar13_2} and make statements about quantum information complexity~\cite{JRS03,KLLRX12,Touch14}. It is expected that oftentimes when~\eqref{eq_FR} can be used, its universal version (predicted by Theorem~\ref{thm_main}) is even more helpful.   

In the following we sketch an application where the universality result is indispensable.
Theorem~\ref{thm_main} can be applied to establish a connection between two alternative definitions of \emph{topological order of quantum systems} (denoted by TQO and TQO'). Consider an $n$-spin system with $n \in \N$. 
While the following statements should be understood asymptotically, we omit the dependence on $n$ in our notation for simplicity.

According to~\cite{BHV06}, a family of states $\{\rho^{i}\}_{i \in \cI}$ with $\rho^{i} \in \cE$ for all $i \in \cI$ and $|\cI | < \infty$, exhibits topological quantum order (TQO) if and only if any two members of the family: (i) are (asymptotically) orthogonal, i.e., $ F(\rho^{i},\rho^{j}) = 0$ for all $i, j \in \cI$ and (ii) have (asymptotically) the same marginals on any sufficiently small subregion, i.e., $\tr_{G}\rho^i = \tr_{G}\rho^j$ for all $i,j \in \cI$ and $G$ sufficiently large.\footnote{More precisely, we require that $\|\tr_{G}(\rho^i) - \tr_{G}(\rho^j)\|_1 = o(n^{-2})$.}
Alternatively, for three regions $A$, $B$, and $C$ that form a certain topology $\cF$
(see Figure~\ref{fig_topo} and~\cite{KitPres06}), a state $\rho_{A B C}$ on such a subspace exhibits topological quantum order (TQO') if ${I(A:C|B)_{\rho}}=2 \gamma >  0$, where $\gamma$ denotes a \emph{topological entanglement entropy}~\cite{KitPres06,LW06}. 

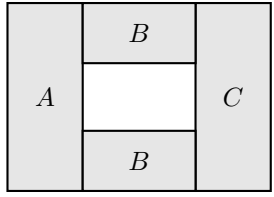
\begin{figure}[!htb]
\centering
\def \xs{1}
\def \ys{0.8}

\def \x{1.5}
\def \y{2.5}

\begin{tikzpicture}[scale=1,auto, node distance=1cm,>=latex']
	
\draw [thick,draw=black, fill=gray!20!, opacity=1]  (0,0) -- (\xs,0) -- (\xs,\y) -- (0,\y) -- cycle;    
 \node at (0.5*\xs,0.5*\y) {$A$};       

\draw [thick,draw=black, fill=gray!20!, opacity=1]  (\xs+\x,0) -- (2*\xs+\x,0) -- (2*\xs+\x,\y) -- (\xs+\x,\y) -- cycle;    
 \node at (1.5*\xs+\x,0.5*\y) {$C$};

\draw [thick,draw=black, fill=gray!20!, opacity=1]  (\xs,0) -- (\xs+\x,0) -- (\xs+\x,\ys) -- (\xs,\ys) -- cycle;    
 \node at (\xs+0.5*\x,0.5*\ys) {$B$};

\draw [thick,draw=black, fill=gray!20!, opacity=1]  (\xs,\y-\ys) -- (\xs+\x,\y-\ys) -- (\xs+\x,\ys+\y-\ys) -- (\xs,\ys+\y-\ys) -- cycle;    
 \node at (\xs+0.5*\x,0.5*\ys+\y-\ys) {$B$};
   
\end{tikzpicture}
\caption{Relevant topology of the subsystems $A$, $B$, and $C$ such that a state $\rho_{ABC}$ exhibits TQO' if ${I(A:C|B)_{\rho}}=2 \gamma >  0$.}
\label{fig_topo}
\end{figure}

It is an open problem to find out how these two characterizations are related, e.g., if a family of states on $\cF$ that exhibits TQO implies that most of its members have TQO'. This connection follows by Theorem~\ref{thm_main}. Suppose $\{\rho^{i} \}_{i \in \cI}$ with $\rho^{i} \in \cF$ for all $i \in \cI$ shows TQO. Then consider subsystems $A$, $B$ and $C$ that together form a non-contractible loop. By definition of TQO, the density operators $\{\rho^{i}\}_{i \in \cI}$ share (asymptotically) the same marginals on ${B \otimes C}$. Applying Theorem~\ref{thm_main} to this common marginal, together with the continuity of the conditional mutual information~\cite{AF03} ensures that there exits a recovery map $\cR_{B \to B C}$ such that for any $i \in \cI$,
\begin{align} \label{eq_topoOrderRec}
 I(A:C|B)_{\rho^{i}} \geq - 2 \log F\bigl( \rho^{i}_{A B C},\cR_{B \to BC}(\rho^{i}_{AB}) \bigr)  \ .
\end{align} 
Since the density operators $\{\rho^{i}_{A B C}\}_{i \in \cI}$ are (asymptotically) orthogonal, share (asymptotically) the same marginals on ${A \otimes B}$, and the fidelity is continuous in its inputs (see Lemma~B.9 in~\cite{FR14}), this implies that for all $i \in \cI$, except of a single element, we have
\begin{align}
 I(A:C|B)_{\rho^i}\geq \mathrm{const} > 0 \ .
\end{align}

\section{Proof for finite dimensions} \label{sec_finiteDim}
Throughout this section we assume that the Hilbert spaces $B$ and $C$ are finite-dimensional. In the proof Steps~\hyperlink{step_3_1}{1} - \hyperlink{step_3_3}{3} below, we also make the same assumption for $A$, but then drop it in Step~\hyperlink{step_3_4}{4}.
We start by explaining why~\eqref{eq_main_new} is a tightened version of~\eqref{eq_main_new_separable} which was noticed in~\cite{BHOS14}.
Let $D_{\alpha}(\cdot\dd\cdot)$ be the \emph{$\alpha$-Quantum R\'enyi Divergence} as defined in~\cite{MLDSFT13,WWY14} with $D_1(\rho \dd \sigma)=D(\rho \dd \sigma):= \tr (\rho(\log \rho - \log \sigma))$.
By definition of the measured relative entropy (see Definition~\ref{def_MeasRelEnt}) we find for any two states $\rho$ and $\sigma$
\begin{multline} \label{eq_fernandoRed}
\MD(\rho\dd\sigma) = \sup_{\cM \in \MM} D\bigl(\cM(\rho)\dd\cM(\sigma) \bigr) \geq \sup_{\cM \in \MM} D_{\frac{1}{2}}\bigl(\cM(\rho)\dd\cM(\sigma) \bigr) = - 2 \log_2 \inf_{\cM \in \MM} F\bigl(\cM(\rho),\cM(\sigma) \bigr) \\
=- 2 \log_2 F(\rho,\sigma) \ ,
\end{multline}
where $\MM:=\{\cM :  \cM(\rho) = \sum_{x} \tr(\rho M_x) \proj{x} \text{ with } \sum_{x} M_x = \id \}$ and $\{\ket{x}\}$ is a family of orthonormal vectors.
The inequality step uses that $\alpha \mapsto D_{\alpha}(\rho\dd\sigma)$ is a monotonically non-decreasing function in $\alpha$~\cite[Theorem~7]{MLDSFT13} and the final step follows from the fact that for any two states there exists an optimal measurement that does not increase their fidelity~\cite[Section 3.3]{Fuc96}.  
As a result, in order to prove Theorem~\ref{thm_main} for finite-dimensional $B$ and $C$ it suffices to prove~\eqref{eq_main_new}.

We first derive a proposition (Proposition~\ref{prop_finite} below) and then show how it can be used to prove~\eqref{eq_main_new} (and, hence, Theorem~\ref{thm_main}).
The proposition refers to a family of functions 
\begin{align} \label{eq_deltaFirst}
\D(A\otimes B \otimes C) \ni \rho \mapsto \Delta_{\cR}(\rho) \in \R \cup \{ - \infty \} \ , 
\end{align}
parameterized by recovery maps $\cR \in \TPCP(B,B \otimes C)$, where $\TPCP(B,B\otimes C)$ denotes the set of trace-preserving completely positive maps from $B$ to $B \otimes C$ and $\D(A\otimes B \otimes C)$ denotes the set of density operators on $A \otimes B \otimes C$. 
Subsequently in the proof, the function family $\Delta_{\cR}(\cdot)$ will be constructed as the difference of the two terms in~\eqref{eq_main_new} (see Equation~\eqref{eq_deltaDefMeasRel}) such that $\Delta_{\cR}(\rho)\geq 0$ corresponds to~\eqref{eq_main_new}. The proposition asserts that if for any extension $\rho_{ABC}$ of $\rho_{BC}$ we have $\Delta_{\cR}(\rho)\geq 0$ for some $\cR \in \TPCP(B,B \otimes C)$ and provided the function family $\Delta_{\cR}(\cdot)$ satisfies certain properties described below, then there exists a single recovery map $\cR$ for which $\Delta_{\cR}(\rho) \geq 0$ for all extensions $\rho_{ABC}$ of $\rho_{BC}$ on a fixed $A$ system. We note that the precise form of the function family $\Delta_{\cR}(\cdot)$ is irrelevant for Proposition~\ref{prop_finite} as long as it satisfies a list of properties as stated below.

As described above, our goal is to prove that there exists a recovery map $\cR_{B \to BC}$ such that ${\Delta_{\cR}(\rho) \geq 0}$ for all $\rho_{ABC} \in \D(A\otimes B \otimes C)$ with a fixed marginal $\rho_{BC}$ on $B \otimes C$. To formulate our argument more concisely, we introduce some notation. For any set $\cS$ of density operators $\rho_{ABC} \in  \D(A\otimes B \otimes C)$ we define
\begin{align}
\Delta_{\cR}(\cS):=\inf_{\rho \in \cS}\Delta_{\cR}(\rho) \ .
\end{align}
The desired statement then reads as $\Delta_{\cR}(\cS) \geq 0$, for any set $\cS$ of states on $A \otimes B \otimes C$ with a fixed marginal $\rho_{BC}$. Furthermore, for any fixed states $\rho^0_{ABC}$ and $\rho_{ABC}$ on $A \otimes B \otimes C$ and $p \in [0,1]$, we define
    \begin{align} \label{eq_rhop}
        \rho^p_{\hat{A} A B C} := (1-p) \proj{0}_{\hat{A}} \otimes \rho^0_{A B C} + p \proj{1}_{\hat{A}} \otimes \rho_{A B C} \ , 
    \end{align}    
where $\hat{A}$ is an additional system with two orthogonal states $\ket{0}$ and $\ket{1}$.    
More generally, for any fixed state $\rho^0_{ABC}$ and for any set $\cS$ of density operators $\rho_{ABC}$ we set 
\begin{align} \label{eq_Sp}
\cS^p := \bigl\{\rho^p_{\hat{A} A B C} : \, \rho_{A B C} \in \cS \bigr\} \ .
\end{align}
\begin{property} \ 
\begin{enumerate}
\item For any $\rho^0_{ABC}, \rho_{ABC} \in \D(A\otimes B \otimes C)$ with identical marginals $\rho^0_{BC} = \rho_{BC}$ on $B \otimes C$, for any $\cR \in \TPCP(B,B\otimes C)$, and for any $p \in [0,1]$ we have $\Delta_{\cR}(\rho^p) = (1-p) \Delta_{\cR}(\rho^0) + p \Delta_{\cR}(\rho)$. \label{property_1}
\item For any $\cR, \cR' \in \TPCP(B,B\otimes C)$, for any $\alpha \in [0,1]$, and $\bar \cR=\alpha \cR + (1-\alpha)\cR'$ we have $\Delta_{\bar \cR}(\rho) \geq \alpha \Delta_{\cR}(\rho) + (1-\alpha) \Delta_{\cR'}(\rho)$ for all $\rho \in \D(A \otimes B \otimes C)$.  \label{property_2}

\item For any $\cR \in \TPCP(B,B\otimes C)$, the function $\D(A \otimes B \otimes C) \ni \rho \mapsto \Delta_{\cR}(\rho) \in \R\cup \{-\infty\}$ is upper semicontinuous.\label{property_4}

\item For any $\rho \in \D(A\otimes B \otimes C)$, the function $\TPCP(B,B\otimes C) \ni \cR \mapsto \Delta_{\cR}(\rho) \in \R\cup \{-\infty\}$ is upper semicontinuous.\label{property_5}

\end{enumerate} 
\end{property}

Property~\ref{property_1} implies that for any state $\rho^0_{ABC}$, for any set $\cS$ of operators $\rho_{ABC}$ with $\rho_{BC} = \rho^0_{BC}$, and for any $p \in [0,1]$ we have
\begin{align} \label{eq_implicationProp1}
\Delta_{\cR}(\cS^p) =\inf \limits_{\rho \in \cS} \Delta_{\cR}(\rho^p) 
= (1-p) \Delta_{\cR}(\rho^0) + p \inf \limits_{\rho \in \cS} \Delta_{\cR}(\rho)
= (1-p) \Delta_{\cR}(\rho^0) + p  \Delta_{\cR}(\cS) \ .
\end{align}
Similarly, Property~\ref{property_2} implies
\begin{multline} \label{eq_implicationProp2}
  \Delta_{\bar{\cR}}(\cS) 
  = \inf_{\rho \in \cS} \Delta_{\bar{\cR}}(\rho) 
  \geq \inf_{\rho \in \cS}  \bigl\{ \alpha \Delta_{\cR}(\rho) + (1-\alpha) \Delta_{\cR'}(\rho) \bigr\} \\
  \geq \alpha \inf_{\rho \in \cS}  \Delta_{\cR}(\rho) + (1-\alpha) \inf_{\rho \in \cS}  \Delta_{\cR'}(\rho)
  = \alpha \Delta_{\cR}(\cS) +  (1-\alpha) \Delta_{\cR'}(\cS) \ .
\end{multline}

\begin{proposition} \label{prop_finite}
Let $A$, $B$, and $C$ be finite-dimensional Hilbert spaces, $\cP \subseteq \TPCP(B,B\otimes C)$ be compact and convex, $\cS$ be a set of density operators on $A \otimes B \otimes C$ with identical marginals on $B \otimes C$, and $\Delta_{\cR}(\cdot)$ be a family of functions of the form~\eqref{eq_deltaFirst} that satisfies Properties~\ref{property_1}-\ref{property_5}. Then
\begin{align} \label{eq_prop}
\forall \rho\in \cS \,\,  \exists \cR \in \cP \ : \ \Delta_{\cR}(\rho)\geq0 \quad \implies \quad \exists \bar{\cR} \in \cP \ : \ \Delta_{\bar{\cR}}(\cS) \geq 0 \ .
\end{align}
\end{proposition}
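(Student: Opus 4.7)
The plan is to reduce the statement to a finite subfamily of $\cS$ by compactness and then to apply a minimax theorem on the probability simplex together with Property~\ref{property_1} to exchange the order of the quantifiers. For each $\rho \in \cS$, set $G(\rho):=\{\cR \in \cP : \Delta_{\cR}(\rho) \geq 0\}$: Property~\ref{property_5} makes $G(\rho)$ closed in $\cP$, and Property~\ref{property_2} makes it convex (if both endpoints satisfy $\Delta_{\cR}(\rho) \geq 0$, then so does their convex combination). The hypothesis asserts that each $G(\rho)$ is non-empty. Since $\cP$ is compact, by the finite intersection property it is enough to prove $\bigcap_{i=1}^n G(\rho^i) \neq \emptyset$ for every finite subfamily $\{\rho^1,\ldots,\rho^n\} \subseteq \cS$.

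\emph{Finite case via minimax.} Given $\{\rho^1,\ldots,\rho^n\}$ with common $BC$-marginal, define $f: \cP \times \Delta^{n-1} \to \R \cup \{-\infty\}$ by $f(\cR,p):=\sum_{i=1}^n p_i \Delta_{\cR}(\rho^i)$, where $\Delta^{n-1}$ denotes the probability simplex on $n$ points. Properties~\ref{property_2} and~\ref{property_5} make $f(\cdot,p)$ concave and upper semicontinuous on the compact convex set $\cP$, while $f(\cR,\cdot)$ is affine on the compact convex simplex $\Delta^{n-1}$. Sion's minimax theorem then gives
\begin{align*}
\max_{\cR \in \cP}\, \min_{p \in \Delta^{n-1}} f(\cR,p) \;=\; \min_{p \in \Delta^{n-1}}\, \max_{\cR \in \cP} f(\cR,p).
\end{align*}
The left-hand side equals $\max_{\cR}\min_i \Delta_{\cR}(\rho^i)$, and showing it is non-negative furnishes the desired $\bar\cR \in \cP$ (Property~\ref{property_5} together with compactness guarantees the supremum is attained). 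To bound the right-hand side, for each $p \in \Delta^{n-1}$ introduce an $n$-dimensional register $\hat A$ and form $\sigma^p_{\hat A A B C} := \sum_i p_i \proj{i}_{\hat A} \otimes \rho^i_{A B C}$, whose marginal on $B \otimes C$ is again $\rho_{BC}$. Iterating Property~\ref{property_1} along a binary tree of pairwise applications of the construction in~\eqref{eq_rhop} yields $\Delta_{\cR}(\sigma^p) = \sum_i p_i \Delta_{\cR}(\rho^i) = f(\cR,p)$. Viewing $\hat A \otimes A$ as an enlarged $A$-system, $\sigma^p$ is itself an extension of $\rho_{BC}$, so applying the hypothesis to $\sigma^p$ produces $\cR^p \in \cP$ with $f(\cR^p,p) = \Delta_{\cR^p}(\sigma^p) \geq 0$, and hence $\max_\cR f(\cR,p) \geq 0$ for every $p$. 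Combining with the minimax equality closes the finite case.

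\emph{Main obstacle.} The delicate step is invoking the hypothesis on $\sigma^p$: strictly speaking $\sigma^p$ lives on $\hat A \otimes A \otimes B \otimes C$ rather than on $A \otimes B \otimes C$, so one needs either an \emph{a priori} enlargement of $A$ (finite, since only finitely many $\rho^i$ are involved) or the understanding that the hypothesis applies to every extension of $\rho_{BC}$ on any $A$-system, which is precisely the setting arising in the proof of Theorem~\ref{thm_main} from~\eqref{eq_FR}. A secondary technicality is making the iteration of Property~\ref{property_1} precise for $n > 2$ classical branches, which requires care with the ancilla conventions when embedding each $\rho^i$ into the enlarged Hilbert space but is otherwise routine bookkeeping. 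Property~\ref{property_4} does not explicitly enter this finite argument; it would be needed only to pass to limits in $\rho$ or to replace $\cS$ by its closure.
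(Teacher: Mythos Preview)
Your argument is correct and genuinely different from the paper's. The paper proceeds by induction on $|\cS|$: at the step $n\to n+1$ it forms the family $\cS^p$ of~\eqref{eq_Sp}, obtains via the induction hypothesis a map $\cR^p$ for each $p\in[0,1]$, and then uses the touching-sets Lemma~\ref{lem_meanVal} together with a carefully chosen convex combination $\alpha\cR^u+(1-\alpha)\cR^v$ to extract a single map that works up to an error $c\delta$; compactness of $\cP$ removes the error. Step~2 then passes to infinite $\cS$ via an $\eps$-net, and this is where Property~\ref{property_4} enters. You replace all of this with two standard moves: the finite intersection property handles infinite $\cS$ in one stroke (so Property~\ref{property_4} is never needed), and Sion's minimax dispatches the finite case directly, bypassing both the induction and the touching-sets balancing. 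Your route is shorter and more conceptual; the paper's is more self-contained, avoiding an appeal to a minimax theorem at the cost of considerable length. Both arguments share the obstacle you flag---applying the hypothesis to a state on the enlarged system $\hat A\otimes A$ rather than to a member of $\cS$---and the paper meets it at exactly the same place (when it invokes the induction hypothesis on $\cS^p$, whose own premise must be verified on $\hat A\otimes A$). In the intended application (Step~3) the premise is supplied for every finite-dimensional extension of $\rho_{BC}$, so this is harmless. Your iteration of Property~\ref{property_1} to $n$ branches is indeed routine once one notes, from Property~\ref{property_1} with $p=1$, that $\Delta_\cR(\proj{i}_{\hat A}\otimes\rho^i)=\Delta_\cR(\rho^i)$.
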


We now proceed in four steps. In the first, we prove Proposition~\ref{prop_finite} for finite sets $\cS$. This is done by induction over the cardinality of the set $\cS$. We show that if the statement of Proposition~\ref{prop_finite} is true for all sets $\cS$ with $|\cS|=n$, this implies that it remains valid for all sets $\cS$ with $|\cS|=n+1$.
In Step~\hyperlink{step_3_2}{2}, we use an approximation step to extend this to infinite sets $\cS$ which then completes the proof of Proposition~\ref{prop_finite}. In the final two steps, we show how to conclude the statement of Theorem~\ref{thm_main} for the finite-dimensional case from that. In Step~\hyperlink{step_3_3}{3} we prove~\eqref{eq_main_new} for the case where the recovery map that satisfies~\eqref{eq_main_new} could still depend on the dimension of the system $A$. In Step~\hyperlink{step_3_4}{4} we show how this dependency can be removed.

\subsection*{\hypertarget{step_3_1}{Step 1: }Proof of Proposition~\ref{prop_finite} for finite size sets $\cS$}
We proceed by induction over the cardinality $n:= |\cS|$ of the set $\cS$ of density operators. More precisely, the induction hypothesis is that for any finite-dimensional Hilbert space $A$ and any set $\cS$ of size $n$ consisting of density operators on $A \otimes B \otimes C$ with fixed marginal $\rho_{B C}$ on $B \otimes C$, the statement~\eqref{eq_prop} holds. 
For $n=1$, this hypothesis holds trivially for $\bar \cR = \cR$.\footnote{For $n=0$, we have $\Delta_{\cR}(\cS)=\infty\geq 0$ for any $\cR \in \cP$ since the infimum of an empty set is infinity.} 

We now prove the induction step. Suppose that the induction hypothesis holds for some $n$. Let $A$ be a finite-dimensional Hilbert space and let $\cS \cup \{\rho_{ABC}^0\}$ be a set of cardinality $n+1$ where $\cS$ is a set of states on $A\otimes B \otimes C$ with fixed marginal $\rho_{BC}$ on $B \otimes C$ of cardinality $n$ and $\rho^0_{ABC}$ is another state with $\rho^0_{BC}=\rho_{BC}$. 
We need to prove that there exists a recovery map $\bar \cR_{B \to BC} \in \cP$ such that 
\begin{align} \label{eq_inductionstep}
  \Delta_{\bar \cR}(\cS \cup \{\rho_{ABC}^0\})  \geq 0 \ .
\end{align}

Let $p \in [0,1]$ and consider the set $\cS^p$ as defined in~\eqref{eq_Sp}.
In the following we view the states $\rho^p$ (see Equation~\eqref{eq_rhop}) in this set as tripartite states on $(\hat A \otimes A) \otimes B \otimes C$, i.e., we regard the system $\hat A \otimes A$ as one (larger) system. 
The induction hypothesis applied to the extension space $\hat A \otimes A$ and the set $\cS^p$ (of size $n$) of states on $(\hat A  \otimes A) \otimes B \otimes C$ implies the existence of a map $\cR^{p}_{B \to BC} \in \cP$ such that
\begin{align}
\Delta_{\cR^{p}}(\cS^p) \geq 0 \ .
\end{align}
As by assumption the function $\D(A\otimes B \otimes C) \ni \rho \mapsto \Delta_{\cR^p}(\rho) \in \R\cup \{-\infty\}$ satisfies Property~\ref{property_1} (and hence also Equation~\eqref{eq_implicationProp1}) we obtain
\begin{align} \label{eq_DeltaCQdav}
(1-p) \Delta_{\cR^{p}}(\rho^0) + p \Delta_{\cR^{p}}(\cS) \geq 0 \ .
\end{align}
This implies that
\begin{align}
\Delta_{\cR^{p}}(\rho^0) \geq 0 \quad \textnormal{or} \quad \Delta_{\cR^{p}}(\cS) \geq 0 \ .
\end{align}
Furthermore, for $p=0$ the left inequality holds and for $p=1$ the right inequality holds. By choosing $K_0=\{p \in [0,1]: \Delta_{\cR^p}(\rho^0) \geq 0 \}$ and $K_1=\{ p \in [0,1] : \Delta_{\cR^p}(\cS) \geq 0 \}$, Lemma~\ref{lem_meanVal} implies that for any $\delta>0$ there exist $u,v \in[0,1]$ with $0\leq v-u \leq \delta$ such that
\begin{align} \label{eq_DeltaPos}
\Delta_{\cR^{u}}(\rho^0) \geq 0 \quad \textnormal{and} \quad \Delta_{\cR^{v}}(\cS) \geq 0 \ .
\end{align}
Note also that $\cR_{B \to BC}^{u}$, $\cR_{B \to BC}^{v} \in \cP$, since by the induction hypothesis $\cR^{p}_{B \to BC} \in \cP$ for any $p\in [0,1]$.

We will use this to prove that  the recovery map $\tilde \cR \in \cP$ defined by
\begin{align}
\tilde \cR:=\alpha \cR^{u} + (1-\alpha) \cR^{v} \ ,
\end{align}
for an appropriately chosen $\alpha \in [0,1]$, satisfies
\begin{align} \label{eq_toprove}
\Delta_{\tilde \cR}(\rho^0) \geq - c \delta \quad \textnormal{and} \quad \Delta_{\tilde \cR}(\cS) \geq - c \delta \ ,
\end{align}
where $c$ is a constant defined by 
\begin{align} \label{eq_constC}
c:=4 \max_{\cR \in \TPCP(B,B\otimes C)} \max_{\rho \in \D(A\otimes B \otimes C)}\Delta_{\cR}(\rho) < \infty \ .
\end{align}
Properties~\ref{property_4} and~\ref{property_5} together with Lemma~\ref{lem_DensityOpCompact} and Remark~\ref{rmk_TPCPcompact} ensure that the two maxima in~\eqref{eq_constC} are attained which implies by the definition of the codomain of $\Delta_{\cR}(\cdot)$ (see Equation~\eqref{eq_deltaFirst}) that $c$ is finite.
In other words, for any $\delta > 0$ there exists a recovery map $\tilde \cR^{\delta} \in \cP$ such that
\begin{align} \label{eq_deltaRec}
 \Delta_{\tilde \cR^{\delta}}(\cS \cup \{\rho^0\}) \geq - c \delta \ .
\end{align}
The compactness of $\cP$ ensures that there exists a recovery map $\bar \cR \in \cP$ and a sequence $\{\delta_n\}_{n \in \N}$ such that
\begin{align} \label{eq_seqCom}
\lim_{n \to \infty} \delta_n = 0 \quad \textnormal{and} \quad \lim_{n \to \infty} \tilde \cR^{\delta_n} = \bar \cR \ .
\end{align}
Because of~\eqref{eq_deltaRec} we have
\begin{align}
\limsup_{n \to \infty} \Delta_{\tilde \cR^{\delta_n}}(\cS \cup \{\rho^0\}) \geq \lim_{n \to \infty} - c \delta_n =0  \ , 
\end{align}
which together with Property~\ref{property_5} implies that
\begin{multline}
\Delta_{\bar \cR}(\cS \cup \{\rho^0\})
=\min_{\rho \in \cS \cup \{\rho^0\}}\Delta_{\bar \cR}(\rho) 
\geq \min_{\rho \in \cS \cup \{\rho^0\}} \limsup_{n \to \infty} \Delta_{\tilde \cR^{\delta_n}}(\rho) 
\geq \limsup_{n \to \infty} \min_{\rho \in \cS \cup \{\rho^0\}} \Delta_{\tilde \cR^{\delta_n}}(\rho) \\
= \limsup_{n \to \infty} \Delta_{\tilde \cR^{\delta_n}}(\cS \cup \{\rho^0\}) 
\geq 0 \ ,
\end{multline}
and thus proves~\eqref{eq_inductionstep}.

It thus remains to show~\eqref{eq_toprove}. To simplify the notation let us define
\begin{align}
\Lambda^0:=\Delta_{\cR^u}(\rho^0) \quad \textnormal{and} \quad \Lambda^1:=\Delta_{\cR^v}(\cS)
\end{align}
as well as
\begin{align}
\bar \Lambda^0:=\Delta_{\cR^v}(\rho^0) \quad \textnormal{and} \quad \bar \Lambda^1:=\Delta_{\cR^u}(\cS) \ .
\end{align}
It follows from~\eqref{eq_DeltaCQdav} that 
\begin{align} \label{eq_one_dav}
(1-u) \Lambda^0 + u \bar \Lambda^1  \geq 0 \ .
\end{align}
Similarly, we have
\begin{align}\label{eq_two_dav}
(1-v)\bar \Lambda^0 + v \Lambda^1  \geq 0 \ .
\end{align}
As by assumption the function $\Delta_{\cR}(\cdot)$ satisfies Property~\ref{property_2} we find together with~\eqref{eq_two_dav} that for any $\alpha \in [0,1]$ and $\bar \cR = \alpha \cR^u + (1-\alpha) \cR^v$,
\begin{align} \label{eq_done1}
\Delta_{\bar \cR}(\rho^0) \geq \alpha \Delta_{\cR^u}(\rho^0) + (1-\alpha) \Delta_{\cR^v}(\rho^0) = \alpha \Lambda^0 + (1-\alpha) \bar \Lambda^0 \geq \alpha \Lambda^0 - (1-\alpha) \frac{v}{1-v} \Lambda^1 \ .
\end{align}
(If $v = 1$ it suffices to consider the case $\alpha = 1$ so that the last term can be omitted; cf.~Equation~\eqref{eq_alphatwo} below.)
Analogously, using~\eqref{eq_implicationProp2} and~\eqref{eq_one_dav}, we find
\begin{align} \label{eq_done2}
\Delta_{\bar \cR}(\cS) \geq \alpha \Delta_{\cR^u} (\cS) + (1-\alpha) \Delta_{\cR^v}(\cS) = \alpha \bar \Lambda^1 + (1-\alpha) \Lambda^1 \geq - \alpha \frac{1-u}{u} \Lambda^0 + (1-\alpha) \Lambda^1 \ .
\end{align}
(If $u=0$ it suffices to consider the case $\alpha = 0$; cf.~Equation~\eqref{eq_alphaone} below.)

To conclude the proof of~\eqref{eq_toprove}, it suffices to choose $\alpha \in [0,1]$  such that the terms on the right hand side of~\eqref{eq_done1} and~\eqref{eq_done2} satisfy
\begin{align}
  \alpha \Lambda^0 - (1-\alpha) \frac{v}{1-v} \Lambda^1 & \geq - c \delta \label{eq_dzero1} 
  \end{align}
  and
  \begin{align}
  - \alpha \frac{1-u}{u} \Lambda^0 + (1-\alpha) \Lambda^1 & \geq -c \delta  \label{eq_dzero2} \ .
\end{align}
Let us first assume that $u \geq \frac{1}{2}$. Since $\Lambda^0$ and $\Lambda^1$ are non-negative (see Equation~\eqref{eq_DeltaPos}), we may choose  $\alpha \in [0,1]$ such that
\begin{align} \label{eq_alphatwo}
\alpha (1-v) \Lambda^0 = (1-\alpha)v \Lambda^1 \ .
\end{align}
This immediately implies that the left hand side of~\eqref{eq_dzero1} equals~$0$, so that the inequality holds. As $\frac{1}{2} \leq u\leq v \leq 1$ and $v-u \leq \delta$ we have
\begin{align} \label{eq_implicDeltaClose}
\left| \frac{1-u}{u}-\frac{1-v}{v} \right| \leq 4 \delta \ .
\end{align}
Combining this with~\eqref{eq_alphatwo} we find
\begin{align}
 - \alpha \frac{1-u}{u} \Lambda^0 + (1-\alpha) \Lambda^1 & \geq - \alpha \Lambda^0 \Bigl( \frac{1-v}{v}+ 4 \delta \Bigr)  + (1-\alpha) \Lambda^1 =- 4 \alpha \Lambda^0 \delta \geq - 4 \Lambda^0 \delta\ ,
\end{align}
which proves~\eqref{eq_dzero2} because by~\eqref{eq_constC} we have $\Lambda^0 \leq \frac{c}{4}$. 

Analogously,  if $u < \frac{1}{2}$, choose $\alpha \in [0,1]$ such that
\begin{align} \label{eq_alphaone}
\alpha (1-u) \Lambda^0 = (1-\alpha)u \Lambda^1 \ .
\end{align}
This immediately implies that the left hand side of~\eqref{eq_dzero2} equals~$0$, so that the inequality holds. Furthermore, for $\delta > 0$ sufficiently small such that $v \leq \frac{1}{2}$, we obtain
\begin{align} \label{eq_imp2Delta}
\left|  \frac{v}{1-v} - \frac{u}{1-u} \right| < 4\delta \ .
\end{align}
Together with~\eqref{eq_alphaone} this implies
\begin{align}
 \alpha \Lambda^0 - (1-\alpha) \frac{v}{1-v} \Lambda^1 \geq \alpha \Lambda^0 - (1-\alpha) \Lambda^1 \Bigl(\frac{u}{1-u}+4\delta \Bigr) = - 4 (1-\alpha) \Lambda^1 \delta \geq -4 \Lambda^1 \delta \ ,
\end{align}
which establishes~\eqref{eq_dzero1}. This concludes the proof of Proposition~\ref{prop_finite} for sets $\cS$ of finite size.

\subsection*{\hypertarget{step_3_2}{Step 2: }Extension to infinite sets $\cS$}
All that remains to be done to prove Proposition~\ref{prop_finite} is to generalize the statement to arbitrarily large sets $\cS$. In fact, we show that there exists a recovery map $\cR_{B \to BC} \in \cP$ such that $\Delta_{\cR}(\cS) \geq 0$, where $\cS$ is the set of all density operators on $A \otimes B \otimes C$ for a fixed finite-dimensional Hilbert space $A$ and a fixed marginal $\rho_{BC}$. 

Note first that this set $\cS$ of all density operators on $A \otimes B \otimes C$ with fixed marginal $\rho_{B C}$ on $B \otimes C$ is compact (see Lemma~\ref{lem_DensityOpCompactFixedMarginal}).  
This implies that for any $\eps>0$ there exists a finite set $\cS^\varepsilon$ of density operators on $A \otimes B \otimes C$ such that any $\rho \in \cS$ is $\varepsilon$-close to an element of $\cS^\varepsilon$. 
We further assume without loss of generality that $\cS^{\eps'} \subset \cS^{\eps}$ for $\eps' \geq \eps$. Let $\cR^{\eps} \in \TPCP(B,B\otimes C)$ be a map such that $\Delta_{\cR^\eps}(\cS^\eps) \geq 0$, whose existence follows from the validity of Proposition~\ref{prop_finite} for sets of finite size (which we proved in Step~\hyperlink{step_3_1}{1}).
Since the set $\TPCP(B,B\otimes C)$ is compact (see Remark~\ref{rmk_TPCPcompact}) there exists a decreasing sequence $\{\eps_n \}_{n \in \N}$ and $\bar \cR \in \TPCP(B,B\otimes C)$ such that 
\begin{align}
\lim_{n \to \infty} \eps_n =0 \quad \textnormal{and} \quad \bar \cR = \lim_{n \to \infty} \cR^{\eps_n} \ .
\end{align}
Combining this with Property~\ref{property_5} gives for all $n \in \N$
\begin{multline} \label{eq_claimRR}
\Delta_{\bar \cR}(\cS^{\eps_n}) = \inf_{\rho \in \cS^{\eps_n}} \Delta_{\bar \cR}(\rho) 
\geq \inf_{\rho \in \cS^{\eps_n}} \limsup_{m \to \infty} \Delta_{\bar \cR^{\eps_m}}(\rho) 
\geq \limsup_{m \to \infty}   \inf_{\rho \in \cS^{\eps_n}} \Delta_{\bar \cR^{\eps_m}}(\rho)\\
\geq \limsup_{m \to \infty}  \inf_{\rho \in \cS^{\eps_m}} \Delta_{\bar \cR^{\eps_m}}(\rho)
=\limsup_{m \to \infty}  \Delta_{\cR^{\eps_m}}(\cS^{\eps_m}) \geq 0 \ ,
\end{multline}
where the third inequality holds since $\cS^{\eps_n} \subset \cS^{\eps_m}$ for $\eps_n \geq \eps_m$, respectively $n \leq m$. The final inequality follows from the defining property of $\cR^\eps$. For any fixed $\rho \in \cS$ and for all $n \in \N$, let $\rho^n \in \cS^{\eps_n}$ be such that $ \lim_{n \to \infty} \rho^{n} = \rho \in \cS$. (By definition of $\cS^{\eps_n}$ it follows that such a sequence $\{\rho^n\}_{n \in \N}$ with $\rho^n \in \cS^{\eps_n}$ always exists.) Property~\ref{property_4} together with~\eqref{eq_claimRR}  yields
\begin{align} \label{eq_esssDone}
\Delta_{\bar \cR}(\rho) = \Delta_{\bar \cR}(\lim_{n \to \infty} \rho^n) \geq \limsup_{n \to \infty} \Delta_{\bar \cR}(\rho^n) \geq \limsup_{n \to \infty} \Delta_{\bar \cR}(\cS^{\eps_n}) \geq 0 \ .
\end{align}
Since~\eqref{eq_esssDone} holds for any $\rho \in \cS$, we obtain $\Delta_{\bar \cR}(\cS) \geq 0$, which completes the proof of Proposition~\ref{prop_finite}.

\subsection*{\hypertarget{step_3_3}{Step 3: }From Proposition~\ref{prop_finite} to Theorem~\ref{thm_main} for fixed system $A$}
We next show that Theorem~\ref{thm_main}, for the case where $A$ is a fixed finite-dimensional system, follows from Proposition~\ref{prop_finite}. For this we use Proposition~\ref{prop_finite} for the function family
\begin{align} \label{eq_deltaDefMeasRel}
\Delta_{\cR}: \, \,\, \, &\D(A \otimes B \otimes C) \to \R\cup \{-\infty\} \nonumber \\
& \rho_{ABC} \mapsto I(A:C|B)_{\rho} - \MD\bigl(\rho_{ABC},\cR_{B \to BC}(\rho_{AB})\bigr)\ ,
\end{align}
with $\cR_{B \to BC} \in \TPCP(B,B\otimes C)$. We note that since $C$ is finite-dimensional this implies that $\Delta_{\cR}(\rho) < \infty$ for all $\rho \in \D(A \otimes B \otimes C)$.
To apply Proposition~\ref{prop_finite}, we have to verify that the function family $D(A\otimes B \otimes C) \ni \rho \mapsto \Delta_{\cR}(\rho) \in \R\cup \{-\infty\}$ of the form~\eqref{eq_deltaDefMeasRel} satisfies the assumptions of the proposition. This is ensured by the following lemma.

\begin{lemma} \label{lem_deltaMeasRelProp1}
Let $A$ be a separable and $B$ and $C$ finite-dimensional Hilbert spaces. The function family $\Delta_{\cR}(\cdot)$ defined by~\eqref{eq_deltaDefMeasRel} satisfies Properties~\ref{property_1}-\ref{property_5}.
\end{lemma}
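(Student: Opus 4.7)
The plan is to verify each of the four properties separately, exploiting the decomposition $\Delta_{\cR}(\rho)=I(A:C|B)_\rho-\MD(\rho_{ABC}\dd\cR_{B\to BC}(\rho_{AB}))$ in which the first summand is independent of $\cR$ and the second is a measured relative entropy between states depending affinely on $\rho$ and on $\cR$. The technical facts I will use repeatedly are joint convexity of the classical Kullback--Leibler divergence and, as a consequence, joint convexity and joint lower semi-continuity of $\MD(\cdot\dd\cdot)$, viewed as the supremum of the jointly convex, jointly continuous functionals $(\rho,\sigma)\mapsto D(\cM(\rho)\dd\cM(\sigma))$ indexed by the POVMs $\cM\in\mathbb{M}$.

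For Property~\ref{property_1} I first compute $I(\hat{A} A:C|B)_{\rho^p}$ directly. Since $\rho^0_{BC}=\rho_{BC}$, we have $\rho^p_{BC}=\rho_{BC}$ and $\rho^p_B=\rho_B$ independent of $p$, so the $H(BC)-H(B)$ contribution is constant in $p$. The state $\rho^p_{\hat{A} A B}$ is block-diagonal in $\hat{A}$ with classical distribution $(1-p,p)$, so $H(\hat{A}AB)_{\rho^p}=h(p)+(1-p)H(AB)_{\rho^0}+pH(AB)_{\rho}$ and analogously for $H(\hat{A}ABC)_{\rho^p}$; subtracting, the binary entropies $h(p)$ cancel and the claimed affinity in $p$ drops out. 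For the measured-relative-entropy piece, both $\rho^p_{\hat{A}ABC}$ and $\cR_{B\to BC}(\rho^p_{\hat{A}AB})$ are block-diagonal across $\hat{A}$ with the same marginal $(1-p,p)$, and I will show that $\MD(\rho^p_{\hat{A}ABC}\dd\cR(\rho^p_{\hat{A}AB}))=(1-p)\MD(\rho^0_{ABC}\dd\cR(\rho^0_{AB}))+p\MD(\rho_{ABC}\dd\cR(\rho_{AB}))$. The ``$\ge$'' direction is immediate by taking the direct sum of near-optimal POVMs in each block, composed with a projective measurement on $\hat{A}$. The ``$\le$'' direction is the main obstacle: for an arbitrary POVM $\{M_x\}$ on $\hat{A}\otimes A\otimes B\otimes C$, the two output distributions are $(1-p,p)$-mixtures of the distributions produced by the blockwise effects $M_x^{(i)}:=\bra{i}M_x\ket{i}$ (which still sum to the identity on each block and hence form valid POVMs), so the classical log-sum inequality reduces $D(\cM(\rho^p)\dd\cM(\cR(\rho^p_{\hat{A}AB})))$ to the required convex combination, after which taking the supremum over $\cM$ passes through.

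For Property~\ref{property_2}, since $I(A:C|B)_\rho$ is independent of $\cR$ it suffices to observe that $\cR\mapsto\MD(\rho_{ABC}\dd\cR(\rho_{AB}))$ is convex: $\cR\mapsto\cR(\rho_{AB})$ is affine and $\MD(\rho_{ABC}\dd\cdot)$ is convex, being a supremum of the convex functionals $\sigma\mapsto D(\cM(\rho_{ABC})\dd\cM(\sigma))$. For Property~\ref{property_4}, continuity of $\rho\mapsto I(A:C|B)_\rho$ on $\D(A\otimes B\otimes C)$ (via the identity $I(A:C|B)=H(C|B)-H(C|AB)$ with $C$ finite-dimensional, using Alicki--Fannes--Winter continuity of the conditional entropies with finite-dimensional target system as cited in~\cite{AF03} to accommodate a separable $A$) together with lower semi-continuity of $\MD(\cdot\dd\cdot)$ in its first argument and continuity of $\rho\mapsto\cR(\rho_{AB})$ yields upper semi-continuity of $\rho\mapsto\Delta_\cR(\rho)$. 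For Property~\ref{property_5}, the $I(A:C|B)$ term is constant in $\cR$, and the evaluation $\cR\mapsto\cR(\rho_{AB})$ is continuous in any reasonable topology on $\TPCP(B,B\otimes C)$, so composition with lower semi-continuity of $\MD(\rho_{ABC}\dd\cdot)$ in its second argument delivers upper semi-continuity of $\cR\mapsto\Delta_\cR(\rho)$.
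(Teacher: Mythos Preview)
Your proposal is correct and takes essentially the same route as the paper: the paper factors the block-diagonal additivity of $\MD$ (your Property~\ref{property_1} argument) and its convexity in the second slot (your Property~\ref{property_2} argument) into separate appendix lemmas (Lemmas~\ref{lem_cqMeasRelEnt} and~\ref{lem_convexityMeasRelEnt}), but the substance and the semicontinuity arguments for Properties~\ref{property_4} and~\ref{property_5} are identical. One small inaccuracy worth fixing: the functionals $(\rho,\sigma)\mapsto D(\cM(\rho)\dd\cM(\sigma))$ are not jointly \emph{continuous}---the classical KL divergence blows up when the second argument acquires zeros---only jointly lower semicontinuous; your conclusion that $\MD$ is jointly lower semicontinuous is still correct since a supremum of lower semicontinuous functions is lower semicontinuous, and this joint lower semicontinuity (rather than ``lower semicontinuity in the first argument'' alone) is what you actually need in the verification of Property~\ref{property_4}, where both arguments of $\MD$ vary with $\rho$.
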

\begin{proof}
We first verify that the function $\Delta_{\cR}(\cdot)$ satisfies Property~\ref{property_1}. For any state $\rho^p$ of the form~\eqref{eq_rhop}, we have by the definition of the mutual information
\begin{align}
I(\hat A A :C|B)_{\rho^p} = H(C|B)_{\rho^p} - H(C|B A \hat{A})_{\rho^p} \ .
\end{align}
Because $\rho^0_{B C} = \rho_{B C}$, the first term, $H(C|B)_{\rho^p}$, is independent of $p$, i.e., $H(C|B)_{\rho^p} = H(C|B)_{\rho^0} =H(C|B)_{\rho}$. The second term can be written as an expectation over $\hat{A}$, i.e., 
\begin{align}
H(C|B A \hat{A})_{\rho^p} = {(1-p)}H(C|B A)_{\rho^0}  + p H(C|B A)_{\rho} \ .
\end{align}
As a result we find  
    \begin{align} \label{eq_measDec}
      I(\hat A A :C|B)_{\rho^p} = (1-p) I(A:C|B)_{\rho^0} + p I(A:C|B)_{\rho} \ .
    \end{align}
   The density operator $ \cR_{B \to B C}(\rho^p_{\hat{A} A B})$ can be written as
    \begin{align}
   \cR_{B \to B C}(\rho^p_{\hat{A} A B}) = (1-p) \proj{0}_{\hat{A}} \otimes \cR_{B \to B C}(\rho^0_{A B}) + p \proj{1}_{\hat{A}} \otimes \cR_{B \to B C}(\rho_{A B}) \ .
   \end{align}    
We can thus apply Lemma~\ref{lem_cqMeasRelEnt}, from which we obtain   
 \begin{align} \label{eq_measRelEntDec}
 \MD\bigl(\rho_{\hat A ABC}^p \dd \cR_{B \to BC}(\rho_{\hat A A B}^p)\bigr) = (1-p) \MD\bigl(\rho_{ ABC}^0 \dd \cR_{B \to BC}(\rho_{ A B}^0)\bigr) + p  \MD\bigl(\rho_{ ABC}^p \dd \cR_{B \to BC}(\rho_{ A B}^p)\bigr) \ .
 \end{align} 
    Equations~\eqref{eq_measDec} and~\eqref{eq_measRelEntDec} imply that
    \begin{align} \label{eq_DeltadecMeas}
      \Delta_{\cR}(\rho^p) = (1-p) \Delta_{\cR}(\rho^0) + p \Delta_{\cR}(\rho) \ ,
    \end{align}
which concludes the proof of Property~\ref{property_1}.    
    
That $\Delta_{\cR}(\cdot)$ satisfies Property~\ref{property_2} can be seen as follows. Let $\cR_{B \to BC}, \cR'_{B \to BC} \in \TPCP(B,B\otimes C)$, $\alpha \in [0,1]$ and $\bar \cR_{B \to BC} = \alpha \cR_{B \to BC} + (1-\alpha) \cR'_{B \to BC}$. Lemma~\ref{lem_convexityMeasRelEnt} implies that for any state $\rho_{ABC}$ on $A \otimes B \otimes C$ we have
\begin{multline}
\MD\bigl(\rho_{ABC} \dd \bar \cR_{B \to BC}(\rho_{AB}) \bigr) = \MD\bigl(\rho_{ABC} \dd \alpha \cR_{B \to BC}(\rho_{AB}) + (1-\alpha) \cR'_{B \to BC}(\rho_{AB}) \bigr) \\
 \leq \alpha \MD\bigl(\rho_{ABC} \dd \cR_{B \to BC}(\rho_{AB}) \bigr) + (1-\alpha)\MD\bigl(\rho_{ABC} \dd \cR'_{B \to BC}(\rho_{AB}) \bigr)
\end{multline}
and hence
\begin{align}
\Delta_{\bar \cR}(\rho) \geq \alpha \Delta_{\cR}(\rho) + (1-\alpha) \Delta_{\cR'}(\rho) \ .
\end{align}    

We next verify that the function $\Delta_{\cR}(\cdot)$ satisfies Property~\ref{property_4}. The Alicki-Fannes inequality ensures that $\D(A \otimes B \otimes C) \ni \rho \mapsto {I(A:C|B)_{\rho}} \in \R^+$ is continuous since $C$ is finite-dimensional~\cite{AF03}. By the definition of $\Delta_{\cR}(\cdot)$ it thus suffices to show that $\D(A \otimes B \otimes C) \ni \rho_{ABC} \mapsto \MD(\rho_{ABC}\dd \cR_{B \to BC}(\rho_{AB})) \in \R^+$ is lower semicontinuous.
  Let $\{\rho_{ABC}^n\}_{n \in \N}$ be a sequence of states on $A \otimes B \otimes C$ such that $\lim_{n \to \infty} \rho_{ABC}^n = \rho_{ABC} \in {\D(A \otimes B \otimes C)}$. 
 By definition of the measured relative entropy (see Definition~\ref{def_MeasRelEnt}), we find for $\MM:=\{\cM :  \cM(\rho) = \sum_{x} \tr(\rho M_x) \proj{x} \text{ with } \sum_{x} M_x = \id \}$,
\begin{align}
\liminf_{n \to \infty} \MD\bigl(\rho_{ABC}^n\dd\cR_{B\to BC}(\rho_{AB}^n)\bigr) 
&= \liminf_{n \to \infty} \sup_{\cM \in \MM} D\Bigl(\cM(\rho_{ABC}^n)\dd\cM\bigl(\cR_{B \to BC}(\rho_{AB}^n)\bigr)\Bigr) \nonumber \\
&\geq \sup_{\cM \in \MM} \liminf_{n \to \infty} D\Bigl(\cM(\rho_{ABC}^n)\dd\cM\bigl(\cR_{B \to BC}(\rho_{AB}^n)\bigr)\Bigr) \nonumber\\
 &\geq \sup_{\cM \in \MM} D\Bigl(\cM(\rho_{ABC})\dd\cM\bigl(\cR_{B \to BC}(\rho_{AB})\bigr)\Bigr)\nonumber \\
 &= \MD\bigl(\rho_{ABC}\dd\cR_{B \to BC}(\rho_{AB})\bigr)  \ .
\end{align}
In the penultimate step, we use that the relative entropy is lower semicontinuous~\cite[Exercise~7.22]{holevo_book} and that $\cM$ as well as $\cR_{B \to BC}$ are linear and bounded operators and hence continuous.

We finally show that $\Delta_{\cR}(\cdot)$ fulfills Property~\ref{property_5}. It suffices to verify that ${\TPCP(B,B\otimes C)} \ni \cR \mapsto \MD(\rho_{ABC} \dd \cR(\rho_{AB})) \in \R^+$ is lower semicontinuous where by definition of the measured relative entropy (see Definition~\ref{def_MeasRelEnt}) we have $\MD(\rho_{ABC} \dd \cR(\rho_{AB})) = \sup_{\cM \in \MM}D(\cM(\rho_{ABC})\dd \cM(\cR_{B \to BC}(\rho_{AB}))) $. Note that since $\cR$ and $\cM$ are linear bounded operators and hence continuous and the relative entropy for two states $\sigma_1$ and $\sigma_2$ is defined by $D(\sigma_1 \dd \sigma_2):=\tr(\sigma_1 (\log \sigma_1 - \log \sigma_2))$ we find that $\cR \mapsto D(\cM(\rho_{ABC})\dd \cM(\cR_{B \to BC}(\rho_{AB})))$ is continuous as the logarithm $\R^+ \ni x \mapsto \log x \in \R$ is continuous. Since the supremum of continuous functions is lower semicontinuous~\cite[Chapter~IV, Section~6.2, Theorem~4]{bourbaki_book}, the assertion follows. 

\end{proof}

What remains to be shown in order to apply Proposition~\ref{prop_finite} is that for any $\rho \in \cS$ where $\cS$ is the set of states on $A \otimes B \otimes C$ with a fixed marginal $\rho_{BC}$ on $B \otimes C$, there exists a recovery map $\cR_{B \to BC} \in \cP$ such that $\Delta_{\cR}(\rho) \geq 0$. By choosing $\cP=\TPCP(B,B\otimes C)$, the main result of~\cite{BHOS14} however precisely proves this. We have thus shown that $\Delta_{\cR}(\rho) \geq 0$ holds for a universal recovery map $\cR_{B \to BC} \in \cP$, so that~\eqref{eq_main_new} follows for any fixed dimension of the $A$ system.
This proves the statement of Remark~\ref{rmk_measRel} (and, hence, Theorem~\ref{thm_main}) for the case where $A$ is a fixed finite-dimensional Hilbert space. 

\subsection*{\hypertarget{step_3_4}{Step 4: }Independence from the $A$ system}

Let $\cS$ be the set of all density operators on $\bar A \otimes B \otimes C$ with a fixed marginal $\rho_{BC}$ on $B \otimes C$, where $B$ and $C$ are finite-dimensional Hilbert spaces and $\bar A$ is the infinite-dimensional Hilbert space $\ell^2$ of square summable sequences. We now show that there exists a recovery map $\cR_{B \to BC}$ such that $\Delta_{\cR}(\cS) \geq 0$.

Let $\{\Pi_{\bar{A}}^{a}\}_{a \in \N}$ be a sequence of finite-rank projectors on $\bar A$ that converges to $\id_{\bar A}$ with respect to the weak operator topology.
Let $\cS^a$ denote the set of states whose marginal on $\bar A$ is contained in the support of $\Pi^a_{\bar A}$ and with the same fixed marginal $\rho_{BC}$ on $B \otimes C$ as the elements of $\cS$.
For all $a \in \N$, let $\cR^{a}_{B\to BC}$ denote a recovery map that satisfies $\Delta_{\cR^{a}}( \cS^{a})\geq 0$. Note that the existence of such maps is already established by the proof of Theorem~\ref{thm_main} for the finite-dimensional case.
As the set of trace-preserving completely positive maps on finite-dimensional systems is compact (see Remark~\ref{rmk_TPCPcompact}) there exists a subsequence $\{ a_i\}_{i \in \N}$ such that $\lim_{i \to \infty} a_i = \infty$ and $\lim_{i \to \infty} \cR^{a_i}= \bar \cR \in \TPCP(B,B\otimes C) $. 
 For every $\rho \in  \cS$ there exists a sequence of states $\{\rho^a\}_{a\in \N}$ with $\rho^a \in \cS^a$ that converges to $\rho$ in the trace norm (see Lemma~\ref{lem_projSequence}).
Lemma~\ref{lem_deltaMeasRelProp1} (in particular Properties~\ref{property_4} and~\ref{property_5}), yields for any $\rho \in  \cS$
\begin{multline}
\Delta_{\bar \cR}(\rho) \geq \limsup_{a \to \infty} \Delta_{\bar \cR}(\rho^a) 
\geq \limsup_{a \to \infty} \limsup_{i \to \infty} \Delta_{\cR^{a_i}}(\rho^a) 
\geq \limsup_{a \to \infty} \limsup_{i \to \infty} \inf_{\rho \in \cS^a} \Delta_{\cR^{a_i}}(\rho)\\
\geq \limsup_{i \to \infty} \inf_{\rho \in \cS^{a_i}} \Delta_{\cR^{a_i}}(\rho)
=  \limsup_{i \to \infty} \Delta_{\cR^{a_i}}( \cS^{a_i}) \geq 0 \ .
\end{multline}
The fourth inequality follows since $a_i \geq a$ for large enough $i$ and since this implies that $ \cS^{a_i} \supset  \cS^a$, and the final inequality follows by definition of $\cR^{a_i}$. This shows that $\Delta_{\bar \cR}(\cS) \geq 0$. 

To retrieve the statement of Remark~\ref{rmk_measRel} (and hence Theorem~\ref{thm_main} for finite-dimensional $B$ and $C$), we need to argue that this same map $\bar \cR$ remains valid when we consider any separable space $A$. In order to do this, observe that any separable Hilbert space $A$ can be isometrically embedded into $\bar{A}$~\cite[Theorem~II.7]{simon_book}. To conclude, it suffices to remark that $\Delta_{\bar{\cR}}$ is invariant under isometries applied on the space $A$.

\section{Extension to infinite dimensions} \label{sec_infiniteSystems}
In this section we show how to obtain the statement of Theorem~\ref{thm_main} for separable (not necessarily finite-dimensional) Hilbert spaces $A$, $B$, $C$ from the finite-dimensional case that has been proven in Section~\ref{sec_finiteDim}. For trace non-increasing completely positive maps $\cR_{B \to BC}$ we define the function family
\begin{align} \label{eq_deltaInfinite}
\bar \Delta_{\cR}: \, \,\, \,& \D(A\otimes B \otimes C) \to  \R\cup \{-\infty\}  \nonumber \\
&\rho_{ABC} \mapsto F\bigl(\rho_{ABC},\cR_{B \to BC}(\rho_{AB}) \bigr) - 2^{-\frac{1}{2} I(A:C|B)_{\rho}} \ ,
\end{align}
where $ \D(A\otimes B \otimes C)$ denotes the set of states on $A \otimes B \otimes C$.
We will use the same notation as introduced at the beginning of Section~\ref{sec_finiteDim}. 
In addition, we take $\cS$ to be the set of all  states on $A \otimes B \otimes C$ with a fixed marginal $\rho_{BC}$ on $B \otimes C$. The proof proceeds in two steps where we first show that there exists a sequence of recovery maps $\{\cR_{B \to BC}^{k}\}_{k \in \N}$ such that $\lim_{k \to \infty} \bar \Delta_{\cR^{k}}(\cS) \geq 0$, where the property that all elements of $\cS$ have the same marginal on the $B \otimes C$ system will be important.
In the second step we conclude by an approximation argument that there exists a recovery map $\cR_{B \to BC}$ such that $\bar \Delta_{\cR}(\cS) \geq 0$.

\subsection*{\hypertarget{step_4_1}{Step 1: }Existence of a sequence of recovery maps} 
We start by introducing some notation that is used within this step.
Let $\{\Pi_B^{b}\}_{b \in \N}$ and $\{\Pi_C^c\}_{c \in \N}$ be  sequences of finite-rank projectors on $B$ and $C$ which converge to $\id_B$ and $\id_C$ with respect to the weak operator topology. For any given $\rho_{ABC} \in \D(A \otimes B \otimes C)$ consider the normalized projected states
\begin{align} \label{eq_rho_bc}
  \rho_{A B C}^{b,c} := \frac{(\id_A \otimes \Pi^b_{B} \otimes \Pi^c_{C}) \rho_{A B C} (\id_A \otimes \Pi_{B}^{b} \otimes \Pi_{C}^c)}{\tr\bigl((\id_A \otimes \Pi_{B}^{b} \otimes \Pi_{C}^c) \rho_{A B C}\bigr)} \ 
\end{align}
and
\begin{align}
  \rho_{A B C}^{c} := \frac{(\id_A \otimes \id_{B} \otimes \Pi^c_{C}) \rho_{A B C} (\id_A \otimes \id_{B} \otimes \Pi_{C}^c)}{\tr\bigl((\id_A \otimes \id_{B}\otimes \Pi_{C}^c) \rho_{A B C}\bigr)} \ ,
\end{align}
where for any $c \in \mathbb{N}$, the sequence $\{\rho_{A B C}^{b, c}\}_{b \in \mathbb{N}}$ converges to $\rho^{c}_{A B C}$ in the trace norm (see, e.g., Corollary~2 of~\cite{FAR11}) and the sequence $\{\rho_{A B C}^{c}\}_{c \in \mathbb{N}}$ converges to $\rho_{A B C}$ also in the trace norm. 
Let $\cS^{b,c}$ be the set of states that is generated by~\eqref{eq_rho_bc} for all $\rho_{ABC} \in \cS$. We note that for any given $b$, $c$ all elements of $\cS^{b,c}$ have an identical marginal on $B \otimes C$.
Let $\cR^{b,c}_{B \to BC}$ denote a recovery map that satisfies $\bar \Delta_{\cR^{b,c}}(\cS^{b,c}) \geq 0$ whose existence is established in the proof of Theorem~\ref{thm_main} for finite-dimensional systems $B$ and $C$ (see Section~\ref{sec_finiteDim}).
We next state a lemma that explains how $\bar \Delta_{\cR}(\rho)$ changes when we replace $\rho$ by a projected state $\rho^{b,c}$.

\begin{lemma} \label{lem_DeltaProjector}
For any $\rho_{BC} \in \D(B \otimes C)$  there exists a sequence of reals $\{\xi^{b,c} \}_{b,c \in \N}$ with\footnote{The precise form of the sequence $\{\xi^{b,c} \}_{b,c \in \N}$ is given in the proof (see Equation~\eqref{eq_defXi}).} $\lim_{c \to \infty} \lim_{b \to \infty} \xi^{b,c}=0$, such that for any $\cR \in \TPCP(B,B \otimes C)$, any extension $\rho_{ABC}$ of $\rho_{BC}$, and $\rho_{ABC}^{b,c}$ as given in~\eqref{eq_rho_bc} we have
\begin{align}
\bar \Delta_{\cR}(\rho^{b,c}) - \bar \Delta_{\cR}(\rho) \leq \xi^{b,c} \quad \textnormal{for all} \quad b,c \in \N \ .
\end{align}
\end{lemma}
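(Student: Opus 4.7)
I would decompose
\[
\bar\Delta_\cR(\rho^{b,c}) - \bar\Delta_\cR(\rho) = \Bigl[F\bigl(\rho^{b,c},\cR(\rho^{b,c}_{AB})\bigr) - F\bigl(\rho,\cR(\rho_{AB})\bigr)\Bigr] + \Bigl[2^{-\tfrac{1}{2}I(A:C|B)_\rho} - 2^{-\tfrac{1}{2}I(A:C|B)_{\rho^{b,c}}}\Bigr],
\]
interpose the ``only-$C$-projected'' intermediate state $\rho^c_{ABC}$ (using $\id_A\otimes\id_B\otimes\Pi_C^c$), and bound each bracket as a sum of a $C$-projection contribution and a $B$-projection contribution via the triangle inequality. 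Introduce $p^c := \tr(\Pi_C^c\rho_C)$ and $q^b := \tr(\Pi_B^b\rho^c_B)$; both depend only on $\rho_{BC}$, and $q^b\to 1$ as $b\to\infty$ for fixed $c$ while $p^c\to 1$ as $c\to\infty$.

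For the fidelity bracket, joint continuity of $F$ in the trace norm combined with trace-norm contractivity of $\cR$ yields $|F(\rho^{b,c},\cR(\rho^{b,c}_{AB})) - F(\rho^c,\cR(\rho^c_{AB}))| \leq 2\sqrt{\|\rho^{b,c}-\rho^c\|_1}$, and similarly $|F(\rho^c,\cR(\rho^c_{AB})) - F(\rho,\cR(\rho_{AB}))| \leq 2\sqrt{\|\rho^c-\rho\|_1}$; the Gentle Measurement Lemma then controls each trace distance by an explicit function of $q^b$ and $p^c$ respectively (namely $2\sqrt{1-q^b}+(1-q^b)$ and $2\sqrt{1-p^c}+(1-p^c)$), which is uniform in the extension and in $\cR$.

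For the CMI bracket, the $C$-projection piece uses an isometric-dilation trick: define $V_C := \Pi_C^c\otimes|0\rangle_F + (\id-\Pi_C^c)\otimes|1\rangle_F$, set $\omega := V_C\rho V_C^\dagger$, dephase the flag $F$, and combine data processing on the ``information'' side $CF$ (giving $I(A:CF|B)_{\omega_{\mathrm{deph}}}\leq I(A:CF|B)_\omega=I(A:C|B)_\rho$) with the chain rule $I(A:CF|B) = I(A:F|B)+I(A:C|BF)$ for the resulting classical-$F$ state to obtain $p^c\,I(A:C|B)_{\rho^c} \leq I(A:C|B)_\rho$. Together with the elementary inequality $x\cdot 2^{-x/2} \leq 2/(e\ln 2)$ valid for $x\geq 0$ and $1-e^{-y}\leq y$, this gives $2^{-\tfrac12 I(A:C|B)_\rho} - 2^{-\tfrac12 I(A:C|B)_{\rho^c}} \leq (1/p^c - 1)/e$, uniform in the extension $\rho_{ABC}$ (and vacuously satisfied when $I(A:C|B)_\rho = \infty$ since then $2^{-I/2}=0$). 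The $B$-projection piece cannot be handled the same way, since $B$ lies on the conditioning side of the CMI and the analogous dilation yields no monotonicity inequality; instead I use that both $\rho^{b,c}$ and $\rho^c$ have $C$-support inside the finite-dimensional range of $\Pi_C^c$, so the Alicki--Fannes--Winter continuity bound for the conditional entropy applies with $\log d_C$ replaced by $\log\mathrm{rank}\,\Pi_C^c$, yielding $|I(A:C|B)_{\rho^c} - I(A:C|B)_{\rho^{b,c}}| = O\bigl(\log\mathrm{rank}\,\Pi_C^c\cdot T(\rho^{b,c},\rho^c)\bigr)$, with $T(\rho^{b,c},\rho^c)$ again controlled by Gentle Measurement in terms of $q^b$ alone.

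Setting $\xi^{b,c}$ equal to the sum of the four resulting estimates gives a sequence depending only on $\rho_{BC}$ and on the fixed projector sequences. In the iterated limit $\lim_{c\to\infty}\lim_{b\to\infty}$, for fixed $c$ the two $B$-projection contributions vanish as $b\to\infty$ because $q^b\to 1$ (the pre-factor $\log\mathrm{rank}\,\Pi_C^c$ being $b$-independent); then the two $C$-projection contributions vanish as $c\to\infty$ because $p^c\to 1$. The main technical obstacle is the $B$-projection step: in infinite dimensions the CMI is not in general continuous with respect to perturbations on the conditioning system, so the iterated-limit structure of the lemma is essential, since it allows one to first truncate $C$ to a finite-dimensional subspace, rendering Alicki--Fannes--Winter applicable uniformly over extensions.
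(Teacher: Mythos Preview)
Your proposal is correct and follows essentially the same architecture as the paper's proof: split $\bar\Delta_{\cR}(\rho^{b,c})-\bar\Delta_{\cR}(\rho)$ into a fidelity part and a CMI part, interpose the intermediate state $\rho^c$, control the $C$-projection of the CMI by data processing (the paper's~\eqref{eq_MIdecreaseLoc}, your isometric-dilation argument) and the $B$-projection by Alicki--Fannes on the now finite-rank $C$ (the paper's~\eqref{eq_cmisec}), and handle the fidelity part via metric properties plus gentle measurement (the paper's~\eqref{eq_purifMid}--\eqref{eq_Fend}). The only differences are cosmetic: the paper works with the purified distance and the algebraic inequalities $x^y\le x-y+1$ and $2^{-x}\ge 1-\ln(2)x$ to convert CMI bounds into bounds on $2^{-I/2}$, whereas you use a Powers--St{\o}rmer-type trace-norm continuity for $F$ and the cleaner estimate $\sup_{x\ge 0}x\,2^{-x/2}=2/(e\ln 2)$ to get $(1/p^c-1)/e$ directly.
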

\begin{proof}
We note that local projections applied to the subsystem $C$ can only decrease the mutual information, i.e.,
\begin{align} \label{eq_MIdecreaseLoc}
\tr(\Pi^c_C \rho_{C})I(A:C|B)_{\rho^c}  \leq  I(A:C|B)_{\rho} \ .
\end{align}
The Alicki-Fannes inequality~\cite{AF03} ensures that for a fixed finite-dimensional system $C$ the conditional mutual information $I(A:C|B)_{\rho} = H(C|B)_{\rho}-H(C|AB)_{\rho}$ is continuous in $\rho$ with respect to the trace norm, i.e.,
\begin{align}  \label{eq_cmisec}
I(A:C|B)_{\rho^{b,c}} \leq I(A:C|B)_{\rho^c} + 8 \eps^{b,c} \log(\rank\, \Pi^c_C) + 4 h(\eps^{b,c}) \ ,
\end{align}
where $\eps^{b,c}=\| \rho^{b,c}_{ABC} - \rho^c_{ABC}\|_1$ and $h(\cdot)$ denotes the binary Shannon entropy function defined by $h(p):=-p \log_2(p)-(1-p)\log_2(1-p)$ for $0\leq p \leq 1$. 
Using the Fuchs-van de Graaf inequality~\cite{fuchs99} and Lemma~\ref{lem_varGentleMeas}, we find
\begin{align} \label{eq_epsiUB}
\eps^{b,c}\leq 2 \sqrt{1-F(\rho^{b,c}_{ABC},\rho^c_{ABC})^2}
\leq 2 \sqrt{1 - \tr(\Pi^b_{B}\otimes \Pi^c_C \rho_{BC})/\tr(\Pi^c_C \rho_C)}\ .
\end{align}
Combining~\eqref{eq_MIdecreaseLoc} and~\eqref{eq_cmisec} yields
\begin{align} \label{eq_endCMI}
 I(A:C|B)_{\rho^{b,c}} \leq \frac{1}{\tr(\Pi^c_C \rho_C)}I(A:C|B)_{\rho}+ 8 \eps^{b,c} \log(\rank\, \Pi^c_C) + 4 h(\eps^{b,c})  \ .
\end{align}
Since $x^y \leq x-y +1$ for $x,y \in [0,1]$,\footnote{For $x=0$ the statement clearly holds. For $(0,1]\times [0,1] \ni (x,y) \mapsto f(x,y):=x^y-x+y-1 \in \R$ we find by using the convexity of $y \mapsto f(x,y)$ that $\max_{x \in (0,1]} \max_{y \in [0,1]} f(x,y) = 0$.
} we find
\begin{align} \label{eq_stepinbetween}
2^{-\frac{1}{2}I(A:C|B)_{\rho}} - 2^{-\frac{1}{2}I(A:C|B)_{\rho^{b,c}}} & \leq 2^{-\frac{1}{2}I(A:C|B)_{\rho}} - 2^{-\frac{1}{2} \tr(\Pi^c_C \rho_C )I(A:C|B)_{\rho^{b,c}}} - \tr(\Pi^c_C \rho_C) +1 \ .
\end{align}
According to~\eqref{eq_endCMI} and since $2^{-x} \geq 1 - \ln(2) x$ for $x \in \R$, we have
\begin{align} 
2^{-\frac{1}{2} \tr(\Pi^c_C \rho_C) I(A:C|B)_{\rho^{b,c}}} &\geq 2^{-\frac{1}{2}I(A:C|B)_{\rho}} 2^{-\frac{1}{2}\tr(\Pi^c_C \rho_C)(8 \eps^{b,c} \log(\rank\, \Pi^c_C) + 4 h(\eps^{b,c}))} \nonumber \\
&\geq 2^{-\frac{1}{2}I(A:C|B)_{\rho}} - \frac{\ln(2)}{2}\tr(\Pi^c_C \rho_C)  \bigl(8 \eps^{b,c} \log(\rank\, \Pi^c_C) + 4 h(\eps^{b,c})\bigr) \ . \label{eq_juststep}
\end{align}
Combining~\eqref{eq_stepinbetween} and~\eqref{eq_juststep} yields
\begin{align}
2^{-\frac{1}{2}I(A:C|B)_{\rho}} - 2^{-\frac{1}{2}I(A:C|B)_{\rho^{b,c}}} &\leq \frac{\ln(2)}{2} \tr(\Pi^c_C \rho_C) \bigl(8 \eps^{b,c} \log(\rank\, \Pi^c_C) + 4 h(\eps^{b,c}) \bigr) +\bigl(1 - \tr(\Pi^c_C \rho_C)\bigr) \nonumber\\
&\leq \frac{\ln(2)}{2} \bigl(8 \eps^{b,c} \log(\rank\, \Pi^c_C) + 4 h(\eps^{b,c}) \bigr) +\bigl(1 - \tr(\Pi^c_C \rho_C)\bigr)   \ . \label{eq_doneDSS}
\end{align}

For two states $\sigma_1$ and $\sigma_2$ let $P(\sigma_1,\sigma_2):=\sqrt{1-F(\sigma_1,\sigma_2)^2}$ denote the purified distance. Applying the Fuchs-van de Graaf inequality~\cite{fuchs99} and Lemma~\ref{lem_varGentleMeas} gives
\begin{align}
P(\rho_{ABC}, \rho^{b,c}_{ABC})^2 
= 1-F(\rho_{ABC}, \rho^{b,c}_{ABC})^2
\leq 1- \tr(\Pi^b_{B} \otimes \Pi^c_C  \, \rho_{BC} ) \ .  \label{eq_gentle1}
\end{align} 
Since the purified distance is a metric~\cite{TCR10} that is monotonous under trace-preserving completely positive maps~\cite[Theorem~3.4]{tomamichel_phd},~\eqref{eq_gentle1} gives
\begin{align}
&P\bigl(\rho_{ABC},\cR_{B\to BC}(\rho_{AB})\bigr) \nonumber\\
&\hspace{20mm}\leq P(\rho_{ABC}, \rho^{b,c}_{ABC}) + P\bigl(\rho^{b,c}_{ABC},\cR_{B\to BC}(\rho^{b,c}_{AB})\bigr)+ P\bigl(\cR_{B\to BC}( \rho^{b,c}_{AB} ),\cR_{B\to BC}(\rho_{AB})\bigr)  \nonumber\\
 &\hspace{20mm}\leq 2 P(\rho_{ABC}, \rho^{b,c}_{ABC} ) + P\bigl(\rho^{b,c}_{ABC} ,\cR_{B\to BC}(\rho^{b,c}_{AB})\bigr)\nonumber \\
 &\hspace{20mm}\leq P\bigl( \rho^{b,c}_{ABC} ,\cR_{B\to BC}(\rho^{b,c}_{AB})\bigr) + 2 \sqrt{1- \tr(\Pi^b_{B} \otimes \Pi^c_C \, \rho_{BC}) } \ . \label{eq_purifMid}
\end{align}
As the fidelity for states lies between zero and one, \eqref{eq_purifMid} implies 
\begin{align}
&F\bigl(\rho^{b,c}_{ABC} ,\cR_{B\to BC}(\rho^{b,c}_{AB})\bigr)^2 \nonumber \\
&\hspace{20mm}\leq F\bigl(\rho_{ABC},\cR_{B\to BC}(\rho_{AB})\bigr)^2 + 4 \bigl(1-\tr(\Pi^b_{B} \otimes \Pi^c_C \rho_{BC}) \bigr) + 4 \sqrt{1-\tr(\Pi^b_{B} \otimes \Pi^c_C \rho_{BC}) }\nonumber\\
&\hspace{20mm}\leq F\bigl(\rho_{ABC},\cR_{B\to BC}(\rho_{AB})\bigr)^2 + 8\sqrt{1-\tr(\Pi^b_{B} \otimes \Pi^c_C \rho_{BC})} \nonumber\\
&\hspace{20mm}\leq \Bigl(F\bigl(\rho_{ABC},\cR_{B\to BC}(\rho_{AB})\bigr) + 2 \sqrt{2} \bigl(1-\tr(\Pi^b_{B} \otimes \Pi^c_C \rho_{BC})\bigr)^{\frac{1}{4}} \Bigr)^2 \ .
\label{eq_ddddone}
\end{align} 
This implies that
\begin{align} \label{eq_Fend}
F\bigl(\rho^{b,c}_{ABC},\cR_{B \to BC}(\rho^{b,c}_{AB})\bigr) \leq F\bigl( \rho_{ABC},\cR_{B \to BC}( \rho_{AB})\bigr) +2 \sqrt{2} \bigl(1-\tr(\Pi^b_{B} \otimes \Pi^c_C \rho_{BC})\bigr)^{\frac{1}{4}} \ .
\end{align}

By definition of the quantity $\bar\Delta_{\cR}(\cdot)$ (see Equation~\eqref{eq_deltaInfinite}) the combination of~\eqref{eq_doneDSS} and~\eqref{eq_Fend} yields
\begin{align}
&\bar \Delta_{\cR}(\rho^{b,c}) - \bar \Delta_{\cR}(\rho) \nonumber \\
&\hspace{5mm}\leq \frac{\ln(2)}{2} \bigl(8 \eps^{b,c} \log(\rank\, \Pi^c_C)  + 4 h(\eps^{b,c}) \bigr) +\bigl(1 - \tr(\Pi^c_C \rho_C) \bigr) + 2 \sqrt{2} \bigl(1- \tr(\Pi^b_B \otimes \Pi^c_C \rho_{BC}) \bigr)^{\frac{1}{4}} \nonumber \\
&\hspace{5mm}=:\xi^{b,c} \ ,  \label{eq_defXi}
\end{align}
where $\eps^{b,c}$ is bounded by~\eqref{eq_epsiUB}.
By Lemma~\ref{lem_convergenceWeak}, we find $\lim_{b \to \infty} \tr(\Pi_B^b \otimes \Pi_C^c \rho_{BC}) =\tr(\Pi_C^c \rho_C)$ for all $c \in \N$ and hence $\lim_{b \to \infty} \eps^{b,c}=0$ for any $c\in \N$. Furthermore, we have $\lim_{c \to \infty} \tr(\Pi_C^c \rho_C) =1$ and $\lim_{c \to \infty} \lim_{b \to \infty} \tr(\Pi_B^b \otimes \Pi_C^c \rho_{BC}) =1$ which implies that $\lim_{c \to \infty} \lim_{b \to \infty} \xi^{b,c} = 0$. This proves the assertion.
%
%
\end{proof}

By Lemma~\ref{lem_DeltaProjector}, using the notation defined at the beginning of Step~\hyperlink{step_4_1}{1}, we find \begin{multline} \label{eq_seqneceImportantStep}
\limsup_{c\to \infty} \limsup_{b \to \infty}  \bar \Delta_{\cR^{b,c}}(\cS) 
=\limsup_{c\to \infty} \limsup_{b \to \infty} \inf_{\rho \in \cS} \bar \Delta_{\cR^{b,c}}(\rho) 
\geq \limsup_{c\to \infty}  \limsup_{b \to \infty}  \inf_{\rho \in \cS} \bigl\{\bar \Delta_{\cR^{b,c}}(\rho^{b,c}) -\xi^{b,c} \bigr\} \\
=\limsup_{c\to \infty}  \limsup_{b \to \infty} \bigl\{ \inf_{\rho^{b,c} \in \cS^{b,c}} \bar \Delta_{\cR^{b,c}}(\rho^{b,c})\bigr\}  -\xi^{b,c} 
=\limsup_{c\to \infty}  \limsup_{b \to \infty}  \bar \Delta_{\cR^{b,c}}(\cS^{b,c}) 
\geq 0 \ ,
\end{multline}
where the second equality step is valid since all states in $\cS$ have the same fixed marginal on ${B \otimes C}$ and since the sequence $\{\xi^{b,c}\}_{b,c \in\N}$ only depends on this marginal.
The penultimate step uses that $\lim_{c \to \infty} \lim_{b \to \infty} \xi^{b,c}=0$.
The final inequality follows by definition of $\cR_{B \to BC}^{b,c}$.
Inequality~\eqref{eq_seqneceImportantStep} implies that there exist sequences $\{b_k\}_{k\in \N}$ and $\{c_k\}_{k \in \N}$ such that $\limsup_{k\to \infty}\Delta_{\cR^{b_k,c_k}}(\cS) \geq 0$. Setting $\cR_{B \to BC}^k = \cR_{B \to BC}^{b_k,c_k}$ then implies that there exists a sequence $\{\cR_{B \to BC}^{k}\}_{k \in \mathbb{N}}$ of recovery maps that satisfies
\begin{align} \label{eq_startingpoint}
  \limsup_{k \to \infty} \bar \Delta_{\cR^{k}}(\cS) \geq 0 \ .
\end{align}


\subsection*{\hypertarget{step_4_2}{Step 2: }Existence of a limit}
Recall that $\cS$ is the set of density operators on $A \otimes B \otimes C$ with a fixed marginal $\rho_{B C}$ on $B \otimes C$. 
The goal of this step is to use~\eqref{eq_startingpoint} to prove that there exists a recovery map $\cR_{B \to BC}$ such that
\begin{align} \label{eq_Rtoprove}
  \bar \Delta_{\cR}(\cS) \geq 0 \ .
\end{align}

Let $\{\Pi_B^m\}_{m \in \mathbb{N}}$ and $\{\Pi_C^m\}_{m \in \mathbb{N}}$ be sequences of projectors with rank $m$ that weakly converge to $\id_B$ and $\id_C$, respectively. Furthermore, for any $m$ and any $\cR \in \TPCP(B,B\otimes C)$ let $[\cR]^m$ be the trace non-increasing map obtained from $\cR$ by projecting the input and output with $\Pi_B^m$ and $\Pi_B^m \otimes \Pi_C^m$, respectively.  We start with a preparatory lemma that proves a relation between $\bar \Delta_{[\cR]^m}(\cS)$ and $\bar \Delta_{\cR}(\cS)$.
\begin{lemma} \label{lem_DeltaProj}
For any $\rho_{BC} \in \D(B \otimes C)$ there exists a sequence of reals $\{\delta^m\}_{m \in \N}$ with $\lim_{m \to \infty} \delta^m = 0$,\footnote{The precise form of the sequence $\{\delta^m\}_{m \in \N}$ can be found in the proof (see Equation~\eqref{eq_deltaLem}).} such that for any $\cR\in \TPCP(B,B\otimes C)$ we have
\begin{align} \label{eq_GMLimplic}
  \bar \Delta_{[\cR]^m}(\cS) \geq   \bar \Delta_{\cR}(\cS) - \delta^m - 4 \,  \eps^{\frac{1}{4}}  \ ,
\end{align}
where $\norm{\cR(\rho_B)-\rho_{BC}}_1 \leq \eps$.
\end{lemma}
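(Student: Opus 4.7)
The plan is to control the difference $\bar\Delta_{[\cR]^m}(\rho) - \bar\Delta_{\cR}(\rho)$ uniformly over $\rho \in \cS$ by first bounding $\norm{\cR(\rho_{AB}) - [\cR]^m(\rho_{AB})}_1$ and then converting the estimate into a fidelity bound. Since the term $2^{-\frac{1}{2}I(A:C|B)_{\rho}}$ appearing in $\bar\Delta_{\cR}(\rho)$ does not depend on the recovery map,
\begin{align}
\bar\Delta_{[\cR]^m}(\rho) - \bar\Delta_{\cR}(\rho) = F\bigl(\rho_{ABC}, [\cR]^m(\rho_{AB})\bigr) - F\bigl(\rho_{ABC}, \cR(\rho_{AB})\bigr) \ ,
\end{align}
so it is enough to lower bound the right-hand side by $-\delta^m - 4\eps^{\frac{1}{4}}$ in a manner that depends only on the common marginal $\rho_{BC}$ and on $\eps$. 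This uniformity is exactly what permits passing to the infimum over $\cS$.

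First I would handle the input projection. Since every $\rho \in \cS$ shares the same marginal $\rho_B$, the gentle measurement lemma (Lemma~\ref{lem_varGentleMeas}) yields $\norm{\Pi_B^m \rho_{AB} \Pi_B^m - \rho_{AB}}_1 \leq 2\sqrt{1-\tr(\Pi_B^m \rho_B)}$, and monotonicity of the trace norm under the completely positive map $\cR$ preserves this bound. Next, for the output projection, the $B\otimes C$-marginal of $\cR(\Pi_B^m \rho_{AB}\Pi_B^m)$ is $\cR(\Pi_B^m \rho_B\Pi_B^m)$; combining the input-side estimate with the hypothesis $\norm{\cR(\rho_B)-\rho_{BC}}_1\leq \eps$ and the triangle inequality gives
\begin{align}
\norm{\cR(\Pi_B^m \rho_B\Pi_B^m) - \rho_{BC}}_1 \leq 2\sqrt{1-\tr(\Pi_B^m \rho_B)} + \eps \ ,
\end{align}
and hence $\tr\bigl((\Pi_B^m \otimes \Pi_C^m)\cR(\Pi_B^m \rho_B\Pi_B^m)\bigr) \geq \tr\bigl((\Pi_B^m \otimes \Pi_C^m)\rho_{BC}\bigr) - 2\sqrt{1-\tr(\Pi_B^m \rho_B)} - \eps$. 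A second application of the gentle measurement lemma, now with the projector $\id_A \otimes \Pi_B^m \otimes \Pi_C^m$ acting on $\cR(\Pi_B^m \rho_{AB}\Pi_B^m)$, controls the effect of the output projection by a quantity that again depends only on $\rho_{BC}$ and $\eps$.

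Assembling the two estimates via the triangle inequality gives a uniform upper bound on $\norm{\cR(\rho_{AB}) - [\cR]^m(\rho_{AB})}_1$ of the form $\delta^m_0 + 2\sqrt{\eps}$, where
\begin{align}
\delta^m_0 := 2\sqrt{1-\tr(\Pi_B^m \rho_B)} + 2\sqrt{\bigl(1-\tr((\Pi_B^m \otimes \Pi_C^m)\rho_{BC})\bigr) + 2\sqrt{1-\tr(\Pi_B^m \rho_B)}} \ ,
\end{align}
which tends to $0$ as $m\to\infty$ by weak convergence of the projectors and the identities $\lim_m \tr(\Pi_B^m \rho_B)=1$ and $\lim_m \tr((\Pi_B^m \otimes \Pi_C^m)\rho_{BC})=1$ (cf.\ Lemma~\ref{lem_convergenceWeak}). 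Converting this trace-distance bound into a fidelity bound via the Fuchs--van de Graaf inequality, exactly as in the passage leading to~\eqref{eq_Fend}, introduces one further square root; using $\sqrt{a+b}\leq \sqrt{a}+\sqrt{b}$ twice to separate the $\eps$-contribution from the $m$-contribution produces an error of the form $\delta^m + 4\eps^{\frac{1}{4}}$. Taking the infimum over $\rho \in \cS$, which is permitted because the error bound is independent of $\rho \in \cS$, yields the desired inequality~\eqref{eq_GMLimplic}.

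The main obstacle is the bookkeeping around sub-normalization: since $[\cR]^m$ is only trace non-increasing, $[\cR]^m(\rho_{AB})$ is sub-normalized, and the Fuchs--van de Graaf step must be formulated for generalized fidelity (or, equivalently, by dilating $[\cR]^m$ to a trace-preserving map on a larger space). Care is also needed to arrange the successive applications of $\sqrt{a+b}\leq\sqrt{a}+\sqrt{b}$ so that the final constant multiplying $\eps^{\frac{1}{4}}$ is exactly $4$ and the remaining, purely $m$-dependent contributions can all be absorbed into a single sequence $\delta^m\to 0$ that depends only on $\rho_{BC}$.
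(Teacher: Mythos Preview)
Your strategy is correct and structurally parallel to the paper's: control the input projection and the output projection separately, use the hypothesis $\norm{\cR(\rho_B)-\rho_{BC}}_1\leq\eps$ to make the output bound depend only on $\rho_{BC}$ and $\eps$, and then observe that everything is uniform over $\cS$.  The difference is in the choice of metric.  The paper never passes through the trace norm: it bounds $P(\rho_{AB},\hat\rho^m_{AB})$ and $P\bigl(\cR(\hat\rho^m_{AB}),[\cR]^m(\rho_{AB})\bigr)$ directly using Lemma~\ref{lem_varGentleMeas}, applies the triangle inequality for the purified distance $P$, and only at the very end converts the resulting bound on $P\bigl(\rho_{ABC},[\cR]^m(\rho_{AB})\bigr)$ into a bound on the fidelity.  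This produces the explicit $\delta^m$ of~\eqref{eq_deltaLem} and the term $\sqrt{2\eps}$ at the level of $P$, which after the final $P\to F$ conversion becomes $4\eps^{1/4}$.

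Your route through the trace norm is viable, but the step ``converting this trace-distance bound into a fidelity bound \dots\ exactly as in the passage leading to~\eqref{eq_Fend} introduces one further square root'' is where you must be careful.  That passage starts from a purified-distance bound, not a trace-norm bound.  If you first convert your trace-norm bound $\eta=\delta^m_0+2\sqrt{\eps}$ to a purified-distance bound via Fuchs--van de Graaf (costing one square root) and \emph{then} mimic~\eqref{eq_ddddone}--\eqref{eq_Fend} (costing another), you end up with $\eta^{1/4}$ and hence an $\eps^{1/8}$ term, which is weaker than the stated $4\eps^{1/4}$.  To stay with the trace norm and still get $\eps^{1/4}$ you would need the subadditivity $F(\rho,\sigma+\Delta)\leq F(\rho,\sigma)+\sqrt{\tr\Delta}$ for $\Delta\geq 0$ (a consequence of $\tr\sqrt{X+Y}\leq\tr\sqrt{X}+\tr\sqrt{Y}$), which is a different tool from what is used in~\eqref{eq_Fend}.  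The cleanest fix is simply to work in purified distance from the start, as the paper does; all your gentle-measurement estimates translate directly, and the sub-normalization of $[\cR]^m(\rho_{AB})$ causes no trouble since Lemma~\ref{lem_varGentleMeas} and Lemma~\ref{lem_tracedistancefidelity} already handle non-normalized operators.
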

\begin{proof}
For any $\rho_{ABC} \in \cS$ and any $m \in \N$ let us define the non-negative operator $\hat \rho^m_{AB}:= {(\id_A \otimes \Pi_B^m)}\rho_{AB}$ ${(\id_A \otimes \Pi_B^m)}$.
By definition of $\bar \Delta_{\cR}(\cdot)$ (see Equation~\eqref{eq_deltaInfinite}), it suffices to show that for any $\rho_{ABC} \in \cS$, any $\cR \in \TPCP(B,B \otimes C)$, $\eps \in [0,2]$ such that $\norm{\cR(\rho_B) - \rho_{BC}}_1 \leq \eps$ and 
\begin{align}
 \tilde \rho^m_{ABC}:= (\id_A \otimes \Pi_B^m \otimes \Pi_C^m) \cR_{B \to BC}(\hat \rho^m_{AB} ) (\id_A \otimes \Pi_B^m \otimes \Pi_C^m)
\end{align}
we have $F\bigl(\rho_{ABC}, \tilde \rho^m_{ABC} \bigr) \geq F\bigl(\rho_{ABC},\cR_{B \to BC}(\rho_{AB}) \bigr) - \delta^m -  4\, \eps^{\frac{1}{4}}$. As in Step~\hyperlink{step_4_1}{1} let $P(\cdot,\cdot)$ denote the purified distance. Lemma~\ref{lem_varGentleMeas} implies that
\begin{align} \label{eq_firstPartGM}
P(\rho_{AB},\hat \rho^m_{AB})^2 = 1- F(\rho_{AB}, \hat \rho^m_{AB})^2 \leq  1 - \tr(\rho_B \Pi_B^m)^2 \ .
\end{align}
Similarly, we obtain 
\begin{align} \label{eq_inthemiddlestep}
P\bigl(\cR_{B \to BC}(\hat \rho^m_{AB}), \tilde \rho^m_{ABC} \bigr)^2 &\leq 1- \tr\bigl(\cR_{B \to BC}(\hat \rho^m_{AB}) \Pi_B^m \otimes \Pi_C^m\bigr)^2 = 1 - \tr\bigl( \cR_{B \to BC}(\hat \rho^m_B) \Pi_B^m \otimes \Pi_C^m \bigr)^2 \ .
\end{align}
By H\"older's inequality, monotonicity of the trace norm for trace-preserving completely positive maps~\cite[Example~9.1.8 and Corollary~9.1.10]{wilde_book} and~\eqref{eq_firstPartGM} together with the Fuchs-van de Graaf inequality~\cite{fuchs99} and Lemma~\ref{lem_varGentleMeas} we find
\begin{multline} \label{eq_holderstep}
\left| \tr \Bigl( \bigl(\cR_{B \to BC}(\hat \rho^m_B) - \cR_{B \to BC}(\rho_B) \bigr) \Pi_B^m \otimes \Pi_C^m \Bigr) \right| \leq \norm{\cR_{B \to BC}(\rho_B)-\cR_{B \to BC}(\hat \rho^m_B)}_1 \norm{\Pi_B^m \otimes \Pi_C^m}_{\infty} \\
= \norm{\cR_{B \to BC}(\rho_B)-\cR_{B \to BC}(\hat \rho^m_B)}_1 \leq \norm{\rho_B-\hat \rho^m_B}_1\leq \norm{\rho_{AB}-\hat \rho^m_{AB}}_1 \leq  2\sqrt{1 - \tr(\rho_B \Pi_B^m)^2} \ .
\end{multline}
Combining~\eqref{eq_inthemiddlestep},~\eqref{eq_holderstep} and H\"older's inequality together with the assumption $\norm{\cR(\rho_B) - \rho_{BC}}_1 \leq \eps$ gives
\begin{align}
P\bigl(\cR_{B \to BC}(\hat \rho^m_{AB}),  \tilde \rho^m_{ABC} \bigr)^2 
&\leq 1 - \tr\bigl( \cR_{B \to BC}(\rho_B) \Pi_B^m \otimes \Pi_C^m \bigr)^2 +4 \sqrt{1 - \tr(\rho_B \Pi_B^m)^2}   \nonumber \\
&\leq 1 - \tr(\rho_{BC} \Pi_B^m \otimes \Pi_C^m)^2  + 4 \sqrt{1 - \tr(\rho_B \Pi_B^m)^2}+ 2\eps\ . \label{eq_secondPartGM}
\end{align}

Inequalities~\eqref{eq_firstPartGM},~\eqref{eq_secondPartGM} and the monotonicity of the purified distance under trace-preserving and completely positive maps~\cite[Theorem~3.4]{tomamichel_phd} show that 
\begin{align}
&P(\rho_{ABC}, \tilde \rho^m_{ABC}) \nonumber \\
&\hspace{4mm}\leq P\bigl(\rho_{ABC},\cR_{B \to BC}(\rho_{AB})\bigr) + P\bigl(\cR_{B\to BC}(\rho_{AB}),\cR_{B \to BC}(\hat \rho^m_{AB})\bigr) +P\bigl(\cR_{B\to BC}(\hat \rho^m_{AB}), \tilde\rho^m_{ABC}\bigr) \nonumber\\
&\hspace{4mm}\leq P\bigl(\rho_{ABC},\cR_{B \to BC}(\rho_{AB})\bigr) + P\bigl(\rho_{AB},\hat \rho^m_{AB}\bigr) +P\bigl(\cR_{B\to BC}(\hat \rho^m_{AB}), \tilde \rho^m_{ABC}\bigr) \nonumber\\
&\hspace{4mm}\leq P\bigl(\rho_{ABC},\cR_{B \to BC}(\rho_{AB})\bigr) + (\delta^m)^2/8 + \sqrt{2 \eps}   \ , \label{eq_impStepMid}
\end{align}
for 
\begin{align} \label{eq_deltaLem}
\delta^m := \sqrt{8}\left(\sqrt{1 - \tr(\rho_B \Pi_B^m)^2} + \sqrt{1 - \tr(\rho_{BC} \Pi_B^m \otimes \Pi_C^m)^2  + 4 \sqrt{1 - \tr(\rho_B \Pi_B^m)^2}}\right)^{\frac{1}{2}} \ .
\end{align}
As the purified distance between two states lies inside the interval $[0,1]$ and since $(\delta^m)^2/8 + \sqrt{2 \eps} \in [0,6]$,~\eqref{eq_impStepMid} implies that whenever $F(\rho_{ABC},\cR_{B \to BC}(\rho_{AB}))^2 \geq (\delta^m)^2 + 8 \sqrt{2 \eps}$, we have
\begin{multline}
F(\rho_{ABC}, \tilde \rho^m_{ABC})^2 \geq F\bigl(\rho_{ABC},\cR_{B \to BC}(\rho_{AB}) \bigr)^2 - (\delta^m)^2 - 8 \sqrt{2 \eps} \\
\geq \Bigl(F\bigl(\rho_{ABC},\cR_{B \to BC}(\rho_{AB}) \bigr) - \sqrt{(\delta^m)^2 + 8 \sqrt{2 \eps}} \Bigr)^2 \ .
\end{multline}
As a result, we find
\begin{align}
F(\rho_{ABC}, \tilde \rho^m_{ABC}) \geq F\bigl(\rho_{ABC},\cR_{B \to BC}(\rho_{AB}) \bigr) - \delta^m - \sqrt{8} (2 \eps)^{\frac{1}{4}} \ ,
\end{align}
which proves~\eqref{eq_GMLimplic} since $\sqrt{8} \, 2^{\frac{1}{4}} \leq 4$. 

Recall that $B$ and $C$ are separable Hilbert spaces and that $\{\Pi_B^m\}_{m\in \N}$ and $\{\Pi_{B}^{m}\otimes \Pi_{C}^{m}\}_{m\in \N}$ converge weakly to $\id_{B}$ and $\id_B \otimes \id_C$ respectively. Lemma~\ref{lem_convergenceWeak} thus shows that $\lim_{m \to \infty} \delta^m =0$ since $\lim_{m\to \infty} \tr(\rho_B \, \Pi_{B}^{m}) =1$ and $\lim_{m \to \infty} \tr(\rho_{BC} {\Pi_{B}^{m}\otimes \Pi_{C}^{m}}) =1$. 
\end{proof}

The following lemma proves that for sufficiently large $m$ and a recovery map $\cR_{B \to BC}$ that maps $\rho_B$ to density operators that are close to $\rho_{B C}$, the operator $[\cR]^m(\rho_{AB})$ has a trace that is bounded from below by essentially one. 
\begin{lemma} \label{lem_uniformConv}
Let $A$, $B$, and $C$ be separable Hilbert spaces. For any density operator $\rho_{AB} \in \D(A \otimes B)$ and any $\cR \in \TPCP(B,B \otimes C)$ we have
\begin{align}
\tr \bigl([\cR]^m(\rho_{AB}) \bigr) \geq \tr(\Pi_B^m \otimes \Pi_C^m \rho_{BC}) - 2 \sqrt{1-\tr(\Pi_B^m \rho_B)} - \norm{\cR(\rho_B)- \rho_{BC}}_1 \ .
\end{align}
\end{lemma}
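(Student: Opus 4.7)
The plan is to reduce the claim to a short telescoping identity on the $B$-marginal, then bound the two error terms using H\"older's inequality, contractivity of the trace norm under CPTP maps, and the gentle-measurement bound (Lemma~\ref{lem_varGentleMeas}) that has already been invoked earlier in the section.

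First I would eliminate the $A$-system. Since $[\cR]^m$ acts trivially on $A$ apart from the projector $\id_A$, the cyclicity of the trace together with the idempotence of $\Pi_B^m$ and $\Pi_B^m \otimes \Pi_C^m$ gives
\begin{align*}
\tr\bigl([\cR]^m(\rho_{AB})\bigr) \;=\; \tr\bigl[(\Pi_B^m \otimes \Pi_C^m)\, \cR_{B\to BC}(\Pi_B^m \rho_B \Pi_B^m)\bigr].
\end{align*}
This is the key reduction: it replaces $\rho_{AB}$ by its marginal $\rho_B$ and lets us work entirely on $B\otimes C$.

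Next I would telescope the right-hand side against the target quantity $\tr[(\Pi_B^m \otimes \Pi_C^m)\rho_{BC}]$ by writing it as $\tr[(\Pi_B^m \otimes \Pi_C^m)\rho_{BC}] + T_1 + T_2$ with
\begin{align*}
T_1 := \tr\bigl[(\Pi_B^m \otimes \Pi_C^m)(\cR(\Pi_B^m \rho_B \Pi_B^m) - \cR(\rho_B))\bigr], \qquad T_2 := \tr\bigl[(\Pi_B^m \otimes \Pi_C^m)(\cR(\rho_B) - \rho_{BC})\bigr].
\end{align*}
H\"older's inequality together with $\|\Pi_B^m \otimes \Pi_C^m\|_\infty \leq 1$ directly yields $|T_2| \leq \|\cR(\rho_B) - \rho_{BC}\|_1$, which produces the last term in the claimed bound.

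For $T_1$, H\"older again gives $|T_1| \leq \|\cR(\Pi_B^m \rho_B \Pi_B^m) - \cR(\rho_B)\|_1$, and monotonicity of the trace norm under the trace-preserving completely positive map $\cR$ upgrades this to $|T_1| \leq \|\Pi_B^m \rho_B \Pi_B^m - \rho_B\|_1$. A direct application of Lemma~\ref{lem_varGentleMeas} to the projector $\Pi_B^m$ on the state $\rho_B$ then bounds this by $2\sqrt{1 - \tr(\Pi_B^m \rho_B)}$, which is exactly the remaining term in the stated inequality. Combining the three estimates yields the result. There is no real obstacle here; the only thing to be careful about is performing the partial trace over $A$ \emph{before} introducing the reference state $\rho_{BC}$, so that $\|\cR(\rho_B)-\rho_{BC}\|_1$ rather than a larger $A$-dependent quantity appears in the final bound.
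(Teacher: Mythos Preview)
Your proposal is correct and follows essentially the same route as the paper: trace out $A$, telescope against $\tr[(\Pi_B^m\otimes\Pi_C^m)\rho_{BC}]$, bound one error term by H\"older plus $\norm{\cR(\rho_B)-\rho_{BC}}_1$ and the other by H\"older plus trace-norm contractivity of $\cR$ plus the gentle-measurement bound. The only small imprecision is that Lemma~\ref{lem_varGentleMeas} gives a fidelity bound, so the trace-norm estimate $\norm{\Pi_B^m\rho_B\Pi_B^m-\rho_B}_1\leq 2\sqrt{1-\tr(\Pi_B^m\rho_B)}$ additionally uses the Fuchs--van de Graaf type inequality (Lemma~\ref{lem_tracedistancefidelity}), exactly as the paper does.
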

\begin{proof}
We first note that by H\"older's inequality and monotonicity of the trace norm for trace-preserving completely positive maps~\cite[Example~9.1.8 and Corollary~9.1.10]{wilde_book} we have
\begin{align}
\left| \tr\left(\Pi^m_B \otimes \Pi^m_C \bigl(\cR(\rho_B) - \cR(\Pi_B^m \rho_B \Pi_B^m) \bigr) \right) \right| \leq \norm{\cR(\rho_B) - \cR(\Pi_B^m \rho_B \Pi_B^m)}_1 \leq \norm{\rho_B- \Pi_B^m \rho_B \Pi_B^m}_1 \ .
\end{align}
Together with H\"older's inequality this implies
\begin{align}
\tr \bigl([\cR]^m(\rho_{AB}) \bigr) 
&= \tr\bigl(\Pi_B^m\otimes \Pi_C^m \, \cR(\Pi_B^m \rho_{AB} \Pi_B^m) \bigr) 
= \tr\bigl(\Pi_B^m\otimes \Pi_C^m \, \cR(\Pi_B^m \rho_{B} \Pi_B^m) \bigr) \nonumber\\
 &\geq \tr\bigl(\Pi_B^m\otimes \Pi_C^m \, \cR(\rho_{B}) \bigr) - \norm{\rho_B- \Pi_B^m \rho_B \Pi_B^m}_1 \nonumber \\
  &\geq \tr\bigl(\Pi_B^m \otimes \Pi_C^m\rho_{BC} \bigr) - \norm{\rho_B- \Pi_B^m \rho_B \Pi_B^m}_1 - \norm{\cR(\rho_B)- \rho_{BC}}_1 \ . \label{eq_lemOmarstep}
\end{align}
Combining Lemma~\ref{lem_tracedistancefidelity} and Lemma~\ref{lem_varGentleMeas} gives
\begin{multline}
\norm{\rho_B - \Pi_B^m \rho_B \Pi_B^m}_1 
\leq 2 \sqrt{1 - F(\rho_B, \Pi_B^m \rho_B \Pi_B^m)^2} 
= 2 \sqrt{1-\tr(\Pi_B^m \rho_B) F\bigl(\rho_B, \Pi_B^m \rho_B \Pi_B^m/\tr(\Pi_B^m \rho_B)\bigr)^2}\\
\leq 2 \sqrt{1-\tr(\Pi_B^m \rho_B)} \ ,
\end{multline}
which together with~\eqref{eq_lemOmarstep} proves the assertion.
\end{proof}

According to~\eqref{eq_startingpoint} the mappings $\cR^{k}$ satisfy 
\begin{align} \label{eq_specifyEpsK}
\bar \Delta_{\cR^k}(\cS) \geq - \tilde \eps^k \ ,
\end{align}
with $\tilde \eps^k \geq 0$ such that $\liminf_{k \to \infty} \tilde \eps^k =0$. As explained in Remark~\ref{rmk_BtoBC}, by considering a state $\bar \rho_{ABC} = \rho_A \otimes \rho_{BC} \in \cS$, \eqref{eq_specifyEpsK} implies $F(\rho_{BC}, \cR^k(\rho_B)) \geq - \tilde \eps^k +1$. Applying the Fuchs-van de Graaf inequality~\cite{fuchs99} gives
\begin{align} \label{eq_RkBalmostToBC}
\norm{\rho_{BC} - \cR^k(\rho_B)}_1 \leq 2 \sqrt{\tilde \eps^k (2-\tilde \eps^k )} =: \eps^k \ ,
\end{align}
where $\liminf_{k \to \infty} \eps^k = 0$ because $\liminf_{k \to \infty} \tilde \eps^k = 0$.

By Lemma~\ref{lem_DeltaProj} we have
\begin{align}
  \bar \Delta_{[\cR^{k}]^m}(\cS) \geq   \bar \Delta_{\cR^{k}}(\cS) - 4 (\varepsilon^k)^{\frac{1}{4}} - \delta^m  \ .
\end{align}
Hence, using our starting point~\eqref{eq_startingpoint} above,
\begin{align}
  \limsup_{k \to \infty}  \bar \Delta_{[\cR^{k}]^m}(\cS)  \geq \limsup_{k \to \infty} \bar \Delta_{\cR^{k}}(\cS)  -4(\varepsilon^k)^{\frac{1}{4}}  - \delta^m \geq - \delta^m \ .
\end{align}
Because, for any fixed $m\in \N$, the mappings $[\cR^{k}]^m$, for $k \in \mathbb{N}$, are all contained in the same finite-dimensional subspace (i.e., the set of trace non-increasing maps from  operators on the support of $\Pi^m_B$ to operators on the support of  $\Pi^m_B \otimes \Pi^m_C$), and because the space of all such mappings is compact (see Remark~\ref{rmk_TPCPcompact}), for any fixed $m \in \N$ there exists a subsequence of the sequence $\{[\cR^{k}]^m\}_{k \in \N}$ that converges. Specifically for any fixed $m \in \N$  there exists a sequence $\{k^m_i\}_{i \in \mathbb{N}}$ such that 
\begin{align}
  \bar{\cR}^{m} := \lim_{i \to \infty} [\cR^{k^m_i}]^m 
\end{align}
is well defined. Furthermore, because of the continuity of $\cR \mapsto \bar \Delta_{\cR}(\rho_{ABC})$ on the set of maps from  operators on the support of $\Pi^m_B$ to operators on the support of  $\Pi^m_B \otimes \Pi^m_C$ (see Lemma~\ref{lem_contInR}), we have
\begin{multline}
  \bar \Delta_{\bar{\cR}^{m}}(\cS)
  = \inf_{\rho \in \cS} \bar \Delta_{\bar{\cR}^{m}}(\rho) 
  =  \inf_{\rho \in \cS} \lim_{i \to \infty} \bar \Delta_{[\cR^{k^m_i}]^m}(\rho)  
  \geq \limsup_{i \to \infty} \inf_{\rho \in \cS}  \bar \Delta_{[\cR^{k^m_i}]^m}(\rho) \\
  = \limsup_{i \to \infty}  \bar \Delta_{[\cR^{k^m_i}]^m}(\cS) 
  \geq -\delta^m \ ,
\end{multline}
and, hence,
\begin{align} \label{eq_almostThere}
  \liminf_{m \to \infty} \bar \Delta_{\bar{\cR}^{m}}(\cS) \geq 0 \ .
\end{align}

For any $m \in \mathbb{N}$, let $\rho^{m}_{B C : \bar{B}}$ be the operator obtained by applying $\bar{\cR}^{m}$ to a purification $\rho_{B : \bar{B}}$ of $\rho_B$. Without loss of generality we can assume that the projector $\Pi_B^m$ is in the eigenbasis of $\rho_B$.
Let $\{k_i^{m+1}\}_{i\in \N}$ be a subsequence of $\{k_{i}^{m}\}_{i \in \N}$.
Using the definition of $\bar{\cR}^{m}$ and that $\Pi_B^m \leq \Pi_B^{m'}$, $\Pi_C^m \leq \Pi_C^{m'}$, and $\Pi_{\bar B}^m \leq \Pi_{\bar B}^{m'}$ for $m \leq m'$, we obtain
\begin{align}
\rho^{m}_{B C : \bar{B}} 
&= \bar{\cR}^m(\rho_{B : \bar{B}}) 
= \lim_{i \to \infty} [\cR^{k_{i}^{m+1}}]^{m}(\rho_{B : \bar B}) 
= \lim_{i \to \infty} (\Pi_{B}^{m}\otimes \Pi_C^m)[\cR^{k_{i}^{m+1}}]^{m+1}(\Pi_{B}^{m} \rho_{B : \bar B} \Pi_B^m) (\Pi_{B}^{m}\otimes \Pi_{C}^m) \nonumber\\
&= \lim_{i \to \infty} (\Pi_{B}^{m}\otimes \Pi_C^m \otimes \Pi_{\bar B}^{m})[\cR^{k_{i}^{m+1}}]^{m+1}( \rho_{B : \bar B} ) (\Pi_{B}^{m}\otimes \Pi_C^m \otimes \Pi_{\bar B}^{m}) \nonumber \\
&= (\Pi_{B}^m\otimes \Pi_C^m\otimes \Pi_{\bar B}^{m}) \bar{\cR}^{m+1}(\rho_{B : \bar B}) (\Pi_{B}^m\otimes \Pi_C^m \otimes \Pi_{\bar B}^{m}) \nonumber \\
&= (\Pi_{B}^m\otimes \Pi_C^m\otimes \Pi_{\bar B}^{m}) \rho^{m+1}_{BC : \bar B} (\Pi_{B}^m\otimes \Pi_C^m \otimes \Pi_{\bar B}^{m})\ .
\end{align}
As a result, since $\Pi_B^m \leq \Pi_B^{m'}$, $\Pi_C^m \leq \Pi_C^{m'}$, and $\Pi_{\bar B}^m \leq \Pi_{\bar B}^{m'}$ for $m \leq m'$, we have for any $m\leq m'$
\begin{align} \label{eq_mPrimeAndm}
  \rho^{m}_{B C : \bar{B}} = (\Pi^m_B \otimes \Pi^m_C \otimes \Pi^m_{\bar{B}}) \rho^{m'}_{B C : \bar{B}} (\Pi^m_B \otimes \Pi^m_C \otimes \Pi^m_{\bar{B}}) \ .
\end{align}
 Lemma~\ref{lem_varGentleMeas} together with~\eqref{eq_mPrimeAndm} implies
\begin{multline}
F(\rho_{BC:\bar B}^{m}, \rho_{BC:\bar B}^{m'}) 
= F(\Pi_B^m \otimes \Pi_C^m \otimes \Pi_{\bar B}^m \rho_{BC:\bar B}^{m'} \Pi_B^m \otimes \Pi_C^m \otimes \Pi_{\bar B}^m,  \rho_{BC:\bar B}^{m'}) \\
\geq \tr(\rho_{BC:\bar B}^{m'}\Pi_B^m \otimes \Pi_C^m \otimes \Pi_{\bar B}^m) 
= \tr(\rho_{BC:\bar B}^{m}) \ . \label{eq_GMState}
\end{multline}
Lemma~\ref{lem_tracedistancefidelity} yields for $m' \geq m$
\begin{align} \label{eq_CauchyJust}
\norm{\rho_{BC:\bar B}^{m} - \rho_{BC:\bar B}^{m'}}_1 
\leq  2 \sqrt{\tr(\rho_{BC:\bar B}^{m'})^2 - F(\rho_{BC:\bar B}^{m},\rho_{BC:\bar B}^{m'})^2 } \leq 2 \sqrt{\tr(\rho_{BC:\bar B}^{m'})^2 - \tr(\rho_{BC:\bar B}^{m})^2 }  \ .
\end{align}
We now prove that as $m \to \infty$, $\tr(\rho_{BC:\bar B}^{m})$ goes to $1$. Note that since $B$ is a separable Hilbert space and $\rho_{B:\bar B}$ is normalized it can be written as $\rho_{B:\bar B}= \proj{\psi}$, where $\ket{\psi}$ is a state on $B\otimes \bar B$. Furthermore as $\Pi^m_B \otimes \Pi^m_C \otimes \Pi^m_{\bar{B}} \leq \id_{BC\bar{B}}$,~\eqref{eq_mPrimeAndm} implies that 
\begin{align} \label{eq_UBrhom}
\tr(\rho_{BC:\bar B}^{m})\leq \tr(\rho_{BC:\bar B}^{m'})\leq 1 \quad \textnormal{for} \quad m'\geq m \ .
\end{align}
By definition of $\rho_{BC:\bar B}^{m}$, Lemma~\ref{lem_uniformConv} together with~\eqref{eq_RkBalmostToBC} implies that
\begin{align}
\lim_{m \to \infty} \tr(\rho_{BC:\bar B}^{m}) &= \lim_{m \to \infty} \lim_{i \to \infty} \tr\bigl([\cR^{k^m_i}]^m (\rho_{B:\bar B})\bigr) \nonumber\\
&\geq \lim_{m \to \infty} \tr(\Pi_B^m \otimes \Pi_C^m \rho_{BC}) - \norm{\rho_B - \Pi_B^m \rho_B \Pi_B^m}_1 - \liminf_{i \to \infty} \eps^{k^m_i}  \nonumber \\
&\geq \lim_{m \to \infty} \tr(\Pi_B^m \otimes \Pi_C^m \rho_{BC}) - 2 \sqrt{1 - \tr(\Pi_B^m \rho_B)^2}
= 1 \ , \label{eq_weakC} 
\end{align}
where the second inequality uses Lemma~\ref{lem_tracedistancefidelity}, Lemma~\ref{lem_varGentleMeas}, and that $\liminf_{i\to \infty} \eps^{k^m_i} = 0$ for all $m \in \N$. The final step follows by Lemma~\ref{lem_convergenceWeak}.

Equations~\eqref{eq_CauchyJust}, \eqref{eq_UBrhom} and~\eqref{eq_weakC} show that, $\{\rho_{BC:\bar B}^{m}\}_{m \in \mathbb{N}}$ is a Cauchy sequence. Because the set of sub-normalized non-negative operators (i.e.,  the set of sub-normalized density operators) is complete\footnote{We note that the set of sub-normalized density operators on a Hilbert space is clearly closed. Since every Hilbert space is complete and as every closed subspace of a complete space is complete~\cite[Chapter~II, Section~3.4, Proposition~8]{bourbaki_book} this implies that the set of sub-normalized density operators is complete.}, this sequence converges towards such an operator, i.e., we can define a density operator
\begin{align}
  \tilde{\rho}_{B C : \bar{B}} := \lim_{m \to \infty} \rho^{m}_{B C : \bar{B}}  \ .
\end{align}
We note that the operators $\rho_{BC:\bar B}^{m}$ are not normalized in general. However~\eqref{eq_weakC} shows that $\tilde \rho_{BC:\bar B}$ has unit trace.
We now define the recovery map $\cR_{B \to BC}$ as the one that maps $\rho_{B: \bar{B}}$ to $\tilde{\rho}_{B C : \bar{B}}$. We note that this does not uniquely define the recovery map $\cR_{B \to BC}$, which is not a problem as Theorem~\ref{thm_main} proves the \emph{existence} of a recovery map that satisfies~\eqref{eq_main_new_separable} and does not claim that this map is unique.
It remains to show that $\cR_{B \to BC}$ has the property~\eqref{eq_Rtoprove}. This follows from the observation that any density operator $\rho_{A B}$ can be obtained from the purification $\rho_{B : \bar{B}}$ by applying a trace-preserving completely positive map $\cT_{\bar B \to A}$ from $\bar{B}$ to $A$. By Lemma~\ref{lem_contInRho} and because $\cT_{\bar B \to A}$ commutes with any recovery map $\cR_{B \to BC}$ from $B$ to $B \otimes C$, we have 
\begin{align}
  \cR_{B \to BC}(\rho_{A B}) 
  &= (\cR_{B \to BC} \circ \cT_{\bar B \to A})(\rho_{B : \bar{B}}) 
  = (\cT_{\bar B \to A} \circ \cR_{B \to BC})(\rho_{B : \bar{B}}) 
  = \cT_{\bar B \to A}(\tilde{\rho}_{B C : \bar{B}}) \nonumber \\ 
  &=   \cT_{\bar B \to A}(\lim_{m \to \infty} \rho^{m}_{B C : \bar{B}})  
  = \lim_{m \to \infty} \cT_{\bar B \to A}(\rho^{m}_{B C : \bar{B}}) 
  = \lim_{m \to \infty} (\cT_{\bar B \to A} \circ \bar{\cR}_{B \to BC}^{m})(\rho_{B : \bar{B}})  \nonumber \\
  &= \lim_{m \to \infty} (\bar{\cR}_{B \to BC}^{m} \circ \cT_{\bar B \to A})(\rho_{B : \bar{B}}) 
  = \lim_{m \to \infty} \bar{\cR}_{B \to BC}^{m}(\rho_{A B}) \ .
\end{align}
Using the continuity of the fidelity (see, e.g., Lemma~B.9 in~\cite{FR14}), this implies that
\begin{align} \label{eq_essDone}
    \bar \Delta_{\cR}(\rho) =   \lim_{m \to \infty} \bar \Delta_{\bar{\cR}^{m}}(\rho) \ ,
 \end{align}
  for any $\rho \in \cS$. Combining this with~\eqref{eq_almostThere} gives
 \begin{align}
       \bar \Delta_{\cR}(\cS) = \inf_{\rho \in \cS} \bar \Delta_{\cR}(\rho)  =  \inf_{\rho \in \cS} \lim_{m \to \infty}    \bar \Delta_{\bar \cR^{m}}(\rho)
     \geq  \liminf_{m \to \infty} \inf_{\rho \in \cS} \bar \Delta_{\bar{\cR}^{m}}(\rho) =  \liminf_{m \to \infty} \bar \Delta_{\bar{\cR}^{m}}(\cS)  \geq 0 \ ,
\end{align}
which concludes Step~\hyperlink{step_4_2}{2} and thus completes the proof of Theorem~\ref{thm_main} in the general case where $B$ and $C$ are no longer finite-dimensional.


%
%
%
%


\section{Proof of Corollary~\ref{cor_main}} \label{sec_pfCor}
The first statement of Corollary~\ref{cor_main} that holds for separable Hilbert spaces follows immediately from Theorem~\ref{thm_main}, since $2^{-\frac{1}{2}I(A:C|B)_{\rho}} \geq 1 - \frac{\ln(2)}{2}I(A:C|B)_{\rho}$.
The proof of the second statement of Corollary~\ref{cor_main} is partitioned into three steps.\footnote{Although Corollary~\ref{cor_main} does not immediately follow from Theorem~\ref{thm_main} it is justified to term it as such, as it follows by the same proof technique that is used to derive Theorem~\ref{thm_main} (in particular it makes use of Proposition~\ref{prop_finite}).} We first show that a similar method as used in Section~\ref{sec_finiteDim} can be used to reveal certain insights about the structure of the recovery map $\cR_{B \to BC}$ (which is not universal) that satisfies
  \begin{align} \label{eq_prop_main}
   F\bigl(\rho_{A B C}, \cR_{B \to BC}(\rho_{A B})\bigr) \geq 1-\frac{\ln(2)}{2} I(A : C | B)_{\rho}  \ .
  \end{align}
In a second step, by invoking Proposition~\ref{prop_finite}, we use this knowledge to prove that for a fixed $A$ system there exists a recovery map that satisfies~\eqref{eq_prop_main} which is universal and preserves the structure of the non-universal recovery map from before. Finally, in Step~\hyperlink{step_5_3}{3} we show how the dependency on the fixed $A$ system can be removed.

\subsection*{\hypertarget{step_5_1}{Step 1: }Structure of a non-universal recovery map}
We will show that for any density operator $\rho_{ABC}$ on $A \otimes B \otimes C$, where $A$, $B$, and $C$ are finite-dimensional Hilbert spaces there exists a  trace-preserving completely positive map $\cR_{B \to BC}$ that satisfies~\eqref{eq_prop_main} and is of the form
\begin{align} \label{eq_recoveryMapForm}
X_B \mapsto    \rho_{BC}^{\frac{1}{2}}W_{BC} (\rho_B^{-\frac{1}{2}} X_B \rho_B^{-\frac{1}{2}} \otimes \id_C) W_{BC}^{\dagger} \rho_{BC}^{\frac{1}{2}} \ ,
\end{align}
on the support of $\rho_B$, where $W_{BC}$ is a unitary on $B\otimes C$.
We start by proving the following preparatory lemma.

\begin{lemma} \label{lem_removeUnitaries}
For any density operator $\rho_{ABC}$ on $A\otimes B \otimes C$, where $A$, $B$, and $C$ are finite-dimensional Hilbert spaces there exists a trace-preserving completely positive map $\cR_{B \to BC}$ of the form
\begin{align} \label{eq_mapForm}
X_B \mapsto V_{BC}  \rho_{BC}^{\frac{1}{2}}(\rho_B^{-\frac{1}{2}} U_B X_B  U_B^{\dagger}\rho_B^{-\frac{1}{2}} \otimes \id_C) \rho_{BC}^{\frac{1}{2}} V_{BC}^{\dagger} \ ,
\end{align}
where $V_{BC}$ is a unitary on $B\otimes C$ that commutes with $\rho_{BC}$ and $U_B$ is a unitary on $B$ that commutes with $\rho_B$ such that 
  \begin{align} \label{eq_boundM}
   F\bigl(\rho_{A B C}, \cR_{B \to BC}(\rho_{A B})\bigr) \geq 1-\frac{\ln(2)}{2} I(A : C | B)_{\rho}  \ .
  \end{align}
\end{lemma}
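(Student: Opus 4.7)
\textbf{Proof proposal for Lemma~\ref{lem_removeUnitaries}.}
The plan is to derive the claim as an essentially immediate consequence of the non-universal recoverability bound obtained in~\cite{BHOS14,wilde15} (see also Remark~\ref{rmk_diffRepRecMap}). That result guarantees, for every $\rho_{ABC}$ on a finite-dimensional $A\otimes B\otimes C$, the existence of some $t\in\R$ such that the swiveled Petz map
\begin{align*}
\cR^{t}_{B\to BC}:\, X_B \longmapsto \rho_{BC}^{(1+\ci t)/2}\bigl(\rho_B^{-(1+\ci t)/2} X_B \rho_B^{-(1-\ci t)/2}\otimes \id_C\bigr)\rho_{BC}^{(1-\ci t)/2}
\end{align*}
satisfies $-2\log_2 F\bigl(\rho_{ABC},\cR^{t}_{B\to BC}(\rho_{AB})\bigr) \leq I(A:C|B)_{\rho}$.

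Combining this with the elementary inequality $2^{-x/2}\geq 1-\tfrac{\ln 2}{2}x$, valid for all $x\geq 0$ (which was already used in Section~\ref{sec_pfCor} to linearize~\eqref{eq_main_new_separable}), the fidelity bound immediately yields~\eqref{eq_boundM}. Therefore the only task left is to recognise that $\cR^{t}_{B\to BC}$ is of the structural form~\eqref{eq_mapForm} for a suitable choice of $U_B$ and $V_{BC}$ commuting with $\rho_B$ and $\rho_{BC}$, respectively.

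To this end, I would define $U_B:=\rho_B^{-\ci t/2}$ on the support of $\rho_B$ and $V_{BC}:=\rho_{BC}^{\ci t/2}$ on the support of $\rho_{BC}$, extending both by zero outside these supports. Because $\rho_B$ and $\rho_{BC}$ are positive, the spectral theorem tells us that $\rho_B^{\ci s}$ and $\rho_{BC}^{\ci s}$ are unitaries (on the relevant supports) for every $s\in\R$, and since they arise from the functional calculus applied to $\rho_B$ and $\rho_{BC}$, they commute with these operators. Using this commutativity, one factorises the complex powers as $\rho_{BC}^{(1+\ci t)/2}=V_{BC}\rho_{BC}^{1/2}=\rho_{BC}^{1/2}V_{BC}$, $\rho_B^{-(1+\ci t)/2}=U_B\rho_B^{-1/2}=\rho_B^{-1/2}U_B$, and similarly on the Hermitian-conjugated sides via $U_B^{\dagger}=\rho_B^{\ci t/2}$ and $V_{BC}^{\dagger}=\rho_{BC}^{-\ci t/2}$. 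Substituting these factorisations into $\cR^{t}_{B\to BC}$ produces exactly
\begin{align*}
X_B \longmapsto V_{BC}\rho_{BC}^{1/2}\bigl(\rho_B^{-1/2}U_B X_B U_B^{\dagger}\rho_B^{-1/2}\otimes \id_C\bigr)\rho_{BC}^{1/2}V_{BC}^{\dagger},
\end{align*}
which is the form displayed in~\eqref{eq_mapForm}.

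I do not expect any serious obstacle: given the recoverability result from~\cite{BHOS14,wilde15} as a black box, the lemma reduces to a functional-calculus identity. The only genuine bookkeeping concern is the case where $\rho_B$ or $\rho_{BC}$ are not full rank, in which one must interpret the map as being defined on the respective supports (with an arbitrary trace-preserving completion off the support, exactly as for the recovery map in~\eqref{eq_mappingFR}); the unitaries $U_B$ and $V_{BC}$ are then unitaries on these supports and, trivially, commute with $\rho_B$ and $\rho_{BC}$.
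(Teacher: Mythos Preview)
Your approach is correct and considerably more direct than the paper's own argument, but it relies on a different (and stronger) black box. You invoke the rotated-Petz recoverability result of~\cite{wilde15}, which already produces a single map of the form $X_B\mapsto\rho_{BC}^{(1+\ci t)/2}(\rho_B^{-(1+\ci t)/2}X_B\rho_B^{-(1-\ci t)/2}\otimes\id_C)\rho_{BC}^{(1-\ci t)/2}$ achieving the fidelity bound; the functional calculus then immediately identifies $U_B=\rho_B^{-\ci t/2}$ and $V_{BC}=\rho_{BC}^{\ci t/2}$ as the required commuting unitaries. The paper instead starts from the weaker structural input of~\cite{FR14}, where one only knows that $U_B$ is diagonal in the eigenbasis of $\rho_B$ with $U_BU_B^\dagger\leq\id_B$ and that $V_{BC}$ is merely \emph{some} unitary with no commutation hypothesis. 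To upgrade this, the paper couples $\rho_{ABC}$ with a product Markov state $\rho^0_{ABC}=\rho_A\otimes\rho_{BC}$ via the classical-quantum mixture $\rho^p_{\hat AABC}$, applies~\cite{FR14} to $\rho^p$, and then uses a compactness/contradiction argument (letting $p\to 0$) to force the recovery map to leave $\rho^0$ invariant; from $\cR_{B\to BC}(\rho_B)=\rho_{BC}$ one then extracts $U_BU_B^\dagger=\id_B$ on $\mathrm{supp}(\rho_B)$ and $[V_{BC},\rho_{BC}]=0$. Your route is shorter precisely because~\cite{wilde15} already did this structural work; the paper's route is self-contained relative to~\cite{FR14}.

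Two minor remarks on your write-up. First, the citation to~\cite{BHOS14} is misplaced for the structural claim: that paper establishes~\eqref{eq_fernando} via state redistribution and says nothing about the rotated-Petz form; only~\cite{wilde15} gives you what you need here. Second, extending $U_B$ and $V_{BC}$ \emph{by zero} off the supports would produce partial isometries, not unitaries on the full spaces $B$ and $B\otimes C$ as the lemma demands; you should extend by the identity (or any unitary) on the orthogonal complements, which preserves commutation with $\rho_B$ and $\rho_{BC}$ trivially since those operators vanish there.
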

\begin{proof}
Let $\rho_{A B C}$ be an arbitrary state on $A\otimes B \otimes C$ and let $\rho^0_{A B C}$ be a Markov chain with the same marginal on the $B\otimes C$ system, i.e., $\rho^0_{B C} = \rho_{B C}$.
For $p \in (0,1]$, define the state
\begin{align}\label{eq_rhoCQ}
  \rho^p_{\hat A A B C} := (1-p) \proj{0}_{\hat A} \otimes \rho^0_{A B C} + p \proj{1}_{\hat A}  \otimes \rho_{A B C} \ .
\end{align}

The main result of~\cite{FR14} (see Theorem~5.1 and Remark~4.3 in~\cite{FR14}) implies that there exists a recovery map $\cR_{B \to B C}$ of the form
\begin{align} \label{eq_mapForm2}
X_B \mapsto V_{BC}  \rho_{BC}^{\frac{1}{2}}(\rho_B^{-\frac{1}{2}} U_B X_B  U_B^{\dagger}\rho_B^{-\frac{1}{2}} \otimes \id_C) \rho_{BC}^{\frac{1}{2}} V_{BC}^{\dagger} \ ,
\end{align}
where $U_B$ is diagonal with respect to the eigenbasis of $\rho_B$, $U_B U_B^\dagger \leq \id_B$ and $V_{BC}$ is a unitary on $B \otimes C$, such that
\begin{align} 
  F\bigl(\rho^p_{\hat A A B C}, \cR_{B \to B C}(\rho^p_{\hat A A B})\bigr) \geq 1- \frac{\ln(2)}{2} I(\hat A A : C | B)_{\rho^p} \ .
\end{align}
By Lemma~\ref{lem_linearizedProp}, using that $I(A:C|B)_{\rho^0}=0$ since $\rho_{ABC}^0$ is a Markov chain, this may be rewritten as
\begin{align} \label{eq_FRboundr}
p \Bigl(1-F\bigl(\rho_{A B C}, \cR_{B \to BC}(\rho_{A B})\bigr)\Bigr) + (1-p) \Bigl(1-F\bigl(\rho^0_{ A B C}, \cR_{B \to BC}(\rho^0_{A B})\bigr)\Bigr) 
 \leq p \frac{\ln(2)}{2} I(A : C | B)_{\rho} \ .
\end{align}

Let us assume by contradiction that any recovery map $\cR_{B \to B C}$ that satisfies~\eqref{eq_FRboundr} does not leave $\rho^0_{A B C}$ invariant, i.e., $\rho^0_{A B C} \neq  \cR_{B \to B C}(\rho^0_{A B})$. This implies that there exists a $\delta_{ \cR} \in (0,1]$, which may depend on the recovery map $\cR_{B \to B C}$, such that $1-F (\rho^0_{A B C}, \cR_{B \to BC}(\rho^0_{A B}))=\delta_{\cR}$. In the following we argue that there exists a universal (i.e., independent of $\cR_{B \to BC}$) constant $\delta \in (0,1]$ such that $1-F(\rho^0_{ A B C},  \cR_{B \to BC}(\rho^0_{A B}))\geq\delta$ for all recovery maps $\cR_{B \to B C}$ that satisfy~\eqref{eq_FRboundr}.
Since the set of trace-preserving completely positive maps from $B$ to $B \otimes C$ that satisfy~\eqref{eq_FRboundr} is compact\footnote{This set is bounded as the set of trace-preserving completely positive maps from $B$ to $B \otimes C$ is bounded (see Remark~\ref{rmk_TPCPcompact}). Furthermore, this set is closed since the set of trace-preserving completely positive maps from $B$ to $B \otimes C$ is closed (see Remark~\ref{rmk_TPCPcompact}) and the mapping $\cR_{B \to BC} \mapsto F(\rho_{ABC},\cR_{B \to BC}(\rho_{AB}))$ is continuous for all states $\rho_{ABC}$ (see Lemma~\ref{lem_contInR}). The Heine-Borel theorem then implies compactness.} and the function $f: \TPCP(B,B\otimes C) \ni \cR_{B \to BC} \mapsto 1- F(\rho^0_{ABC},\cR_{B\to BC}(\rho^0_{AB}) )  \in [0,1]$ is continuous (see Lemma~\ref{lem_contInR}), Weierstrass' theorem ensures that $\delta:=\min_{\cR_{B \to B C}} f( \cR_{B \to BC})$, where we optimize over the set of trace-preserving completely positive maps from $B$ to $B \otimes C$ that satisfy~\eqref{eq_FRboundr}, exists. By assumption, for every recovery map $\cR_{B \to BC}$ that satisfies~\eqref{eq_FRboundr} we have $f(\cR_{B \to BC})>0$ and hence $\delta \in (0,1]$.
If we insert any such recovery map $\cR_{B \to B C}$ into~\eqref{eq_FRboundr}, this gives
\begin{align} \label{eq_contradiction}
1-F\bigl(\rho_{A B C}, \cR_{B \to BC}(\rho_{A B})\bigr)+ \frac{\delta}{p} - \delta \leq  \frac{\ln(2)}{2} I(A : C | B)_{\rho}\ ,
\end{align}
which cannot be valid for sufficiently small $p$. More rigorously, assuming $1- F\bigl(\rho_{A B C}, \cR_{B \to BC}(\rho_{A B})\bigr) < \frac{\ln(2)}{2}I(A : C | B)_{\rho}$, as otherwise~\eqref{eq_contradiction} is violated for every $p \in (0,1]$,~\eqref{eq_contradiction} can be rewritten as
\begin{align}
p \geq \frac{\delta}{\frac{\ln(2)}{2}I(A : C | B)_{\rho} + \delta + F\bigl(\rho_{A B C},  \cR_{B \to BC}(\rho_{A B})\bigr) -1} =: \gamma \in (0,1]\ ,
\end{align}
since $C$ is assumed to be a finite-dimensional system and as such $I(A : C | B)_{\rho} < \infty$. Hence for $p < \gamma$, inequality~\eqref{eq_contradiction} is violated.
Since by~\cite{FR14} for any $p\in(0,1]$ there exists a recovery map $\cR_{B \to BC}$ of the form~\eqref{eq_mapForm} that satisfies~\eqref{eq_FRboundr} we conclude that for sufficiently small $p$ there exists a recovery map $\cR_{B \to BC}$ of the form~\eqref{eq_mapForm} that satisfies~\eqref{eq_FRboundr} and leaves $\rho_{ABC}^0$ invariant. However for recovery maps that leave $\rho_{ABC}^0$ invariant,~\eqref{eq_FRboundr} simplifies to~\eqref{eq_boundM} for all $p$.
Thus, there exists a recovery map $\cR_{B \to B C}$ of the form~\eqref{eq_mapForm} satisfying~\eqref{eq_boundM} that leaves $\rho^0_{A B C}$ invariant, i.e., $\cR_{B\to BC}(\rho^0_{A B}) = \rho^0_{A B C}$.
Since $\rho^0_{ABC}:= \rho_A \otimes \rho_{BC}$ is a Markov chain with marginal $\rho^{0}_{BC}=\rho_{BC}$, the condition $\cR_{B \to BC}(\rho^0_{AB})=\rho^0_{ABC}$ implies that $\cR_{B \to BC}(\rho_B) = \rho_{BC}$.

Have have thus shown that there exists a recovery map $\cR_{B \to BC}$ that satisfies~\eqref{eq_boundM} and fulfills
\begin{align} \label{eq_BtoBC}
\cR_{B \to BC} (\rho_B) = V_{BC} \rho_{BC}^{\frac{1}{2}}(U_B U_B^{\dagger} \otimes \id_C)  \rho_{BC}^{\frac{1}{2}} V_{BC}^\dagger  = \rho_{BC} \ .
\end{align}
Using the fact that $\cR_{B \to BC}$ is trace preserving and the invariance of the trace under unitaries we find
\begin{align}
1 = \tr\bigl(V_{BC} \rho_{BC}^{\frac{1}{2}}(U_B U_B^{\dagger} \otimes \id_C)  \rho_{BC}^{\frac{1}{2}} V_{BC}^\dagger \bigr) =\tr\bigl(\rho_{BC}(U_B U_B^{\dagger} \otimes \id_C)  \bigr) =\tr\bigl(\rho_{B} U_B U_B^{\dagger}  \bigr) \ .
\end{align}
This implies that $\tr\bigl(\rho_B (\id_B - U_B U_B^{\dagger}) \bigr) = 0$. Using the fact that $U_B U_B^{\dagger} \leq \id_B$, we conclude that $U_B U_B^\dagger = \id_B$ on the support of $\rho_B$. This simplifies~\eqref{eq_BtoBC} to $V_{BC} \rho_{BC} V_{BC}^\dagger = \rho_{BC}$, i.e., $V_{BC}$ and $\rho_{BC}$ commute which concludes the proof.
\end{proof}
Lemma~\ref{lem_removeUnitaries} implies that the mapping~\eqref{eq_mapForm} can be written as
\begin{align} \label{eq_mapDAV}
X_B \mapsto   \rho_{BC}^{\frac{1}{2}}W_{BC} (\rho_B^{-\frac{1}{2}} X_B  \rho_B^{-\frac{1}{2}} \otimes \id_C)  W_{BC}^{\dagger}\rho_{BC}^{\frac{1}{2}} \ ,
\end{align}
with $W_{BC} = V_{BC} U_{B}\otimes \id_C$ which is a unitary as $V_{BC}$ and $U_B$ are unitaries. Furthermore, $W_{BC}$ is such that~\eqref{eq_mapDAV} is trace-preserving.

\subsection*{\hypertarget{step_5_2}{Step 2: }Structure of a universal recovery map for fixed $A$ system}
In this step we show that the recovery map satisfying~\eqref{eq_prop_main} of the form~\eqref{eq_recoveryMapForm}, whose existence has been established in Step~\hyperlink{step_5_1}{1}, can be made universal without sacrificing the (partial) knowledge about its structure.
The idea is to apply Proposition~\ref{prop_finite} for the function family
\begin{align} 
\tilde \Delta_{\cR}(\rho): \, \,\, \,&\D(A\otimes B \otimes C)  \to \R \cup \{-\infty\} \nonumber \\
& \rho_{ABC} \mapsto F\bigl(\rho_{ABC},\cR_{B \to BC}(\rho_{AB}) \bigr) - 1 +\frac{\ln(2)}{2} I(A:C|B)_{\rho}\ .\label{eq_deltaLinearized}
\end{align}
We therefore need to verify that the assumptions of Proposition~\ref{prop_finite} are fulfilled. This is done by the following lemma. We first note that since $C$ is finite-dimensional this implies that $\tilde \Delta_{\cR}(\rho) < \infty$ for all $\rho \in \D(A \otimes B \otimes C)$.
\begin{lemma} \label{lem_linearizedProp}
Let $A$ be a separable and $B$ and $C$ finite-dimensional Hilbert spaces.
The function family $\tilde \Delta_{\cR}(\cdot)$ defined by~\eqref{eq_deltaLinearized} satisfies Properties~\ref{property_1}-\ref{property_5}.
\end{lemma}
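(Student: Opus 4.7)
The plan is to verify each of Properties~\ref{property_1}--\ref{property_5} for $\tilde\Delta_{\cR}(\cdot)$ separately, following the same pattern as Lemma~\ref{lem_deltaMeasRelProp1} but with the fidelity term in place of the measured relative entropy. The conditional mutual information part was already handled in that lemma and can be reused verbatim, so the work is to analyze the contribution $F(\rho_{ABC},\cR_{B\to BC}(\rho_{AB}))$.

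For Property~\ref{property_1}, consider the state $\rho^p_{\hat A A B C}$ defined in~\eqref{eq_rhop} with $\rho^0_{BC} = \rho_{BC}$. Equation~\eqref{eq_measDec} from the proof of Lemma~\ref{lem_deltaMeasRelProp1} already gives
$I(\hat A A : C | B)_{\rho^p} = (1-p) I(A:C|B)_{\rho^0} + p\, I(A:C|B)_{\rho}$.
For the fidelity part, the key observation is that both $\rho^p_{\hat A A B C}$ and $\cR_{B\to BC}(\rho^p_{\hat A A B})$ are block-diagonal in the orthonormal basis $\{\ket{0},\ket{1}\}$ on $\hat A$, with identical classical weights $(1-p)$ and $p$. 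Since the fidelity of such block-diagonal states decomposes as $F(\sum_i p_i \proj{i}\otimes \alpha_i, \sum_i p_i \proj{i} \otimes \beta_i) = \sum_i p_i F(\alpha_i,\beta_i)$, we obtain
\begin{align}
F\bigl(\rho^p_{\hat A A BC}, \cR_{B\to BC}(\rho^p_{\hat A A B})\bigr) = (1-p) F\bigl(\rho^0_{ABC}, \cR_{B\to BC}(\rho^0_{AB})\bigr) + p F\bigl(\rho_{ABC}, \cR_{B\to BC}(\rho_{AB})\bigr),
\end{align}
which combines with the decomposition of the conditional mutual information to yield Property~\ref{property_1}.

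For Property~\ref{property_2}, note that $I(A:C|B)_{\rho}$ does not depend on $\cR$, so only the fidelity term needs attention. By joint concavity of the fidelity (in particular concavity in its second argument), for $\bar\cR = \alpha \cR + (1-\alpha)\cR'$ we have
\begin{align}
F\bigl(\rho_{ABC}, \bar\cR_{B\to BC}(\rho_{AB})\bigr) \geq \alpha F\bigl(\rho_{ABC}, \cR_{B\to BC}(\rho_{AB})\bigr) + (1-\alpha) F\bigl(\rho_{ABC}, \cR'_{B\to BC}(\rho_{AB})\bigr),
\end{align}
which yields Property~\ref{property_2} immediately.

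For Properties~\ref{property_4} and~\ref{property_5}, I will argue the stronger statement that $\tilde\Delta_{\cR}(\rho)$ is jointly continuous in each of its arguments separately. Since $B$ and $C$ are finite-dimensional, the Alicki--Fannes inequality~\cite{AF03} ensures that $\rho \mapsto I(A:C|B)_{\rho}$ is trace-norm continuous, exactly as in the proof of Lemma~\ref{lem_deltaMeasRelProp1}. The map $\rho_{ABC} \mapsto \rho_{AB}$ is continuous (partial trace is bounded), $\cR_{B\to BC}$ is a bounded linear map, and the fidelity is jointly continuous in its arguments (cf.\ Lemma~B.9 of~\cite{FR14}). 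Composing these yields continuity of $\rho \mapsto F(\rho_{ABC},\cR_{B \to BC}(\rho_{AB}))$, proving Property~\ref{property_4}. For Property~\ref{property_5}, the map $\cR \mapsto \cR(\rho_{AB})$ is continuous on $\TPCP(B,B\otimes C)$ (with its natural topology, cf.\ Remark~\ref{rmk_TPCPcompact}), and joint continuity of the fidelity then gives continuity of $\cR \mapsto F(\rho_{ABC}, \cR_{B\to BC}(\rho_{AB}))$. In both cases continuity is strictly stronger than upper semicontinuity, so the properties follow.

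The only point that requires a little care is Property~\ref{property_1}, where one must verify the additive decomposition of the fidelity on the cq-extension; everything else is essentially routine once one recognizes that the continuity and concavity facts for the fidelity are standard and directly mirror the arguments already carried out for $\MD(\cdot \dd \cdot)$ in Lemma~\ref{lem_deltaMeasRelProp1}.
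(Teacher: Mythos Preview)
Your proposal is correct and follows essentially the same approach as the paper: the paper also reuses the conditional mutual information decomposition from Lemma~\ref{lem_deltaMeasRelProp1}, invokes the equality case of Lemma~\ref{lem_fidelityconcave} for the block-diagonal fidelity decomposition (Property~\ref{property_1}), the concavity part of Lemma~\ref{lem_fidelityconcave} for Property~\ref{property_2}, and continuity of the fidelity via Lemma~B.9 of~\cite{FR14} together with Lemmas~\ref{lem_contInRho} and~\ref{lem_contInR} for Properties~\ref{property_4} and~\ref{property_5}. The only cosmetic difference is that the paper packages the relevant fidelity facts in its own appendix lemmas rather than citing joint concavity directly.
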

\begin{proof}
We start by showing that $\tilde \Delta_{\cR}(\cdot)$ satisfies Property~\ref{property_1}.
   For $\rho^p_{\hat{A} A B C}$ as defined in~\eqref{eq_rhop}, we have
   \begin{align} \label{eq_ddecompose}
     F\bigl(\rho_{\hat A A B C}^p, \cR_{B \to BC}(\rho_{\hat A A B}^p)\bigr) = (1-p) F\bigl(\rho_{A B C}^0, \cR_{B \to BC}(\rho_{A B}^0)\bigr) + p F\bigl(\rho_{ABC}, \cR_{B \to BC}(\rho_{AB})\bigr) \ .
     \end{align}
     The density operator $ \cR_{B \to B C}(\rho^p_{\hat{A} A B})$ can be written as
    \begin{align}
   \cR_{B \to B C}(\rho^p_{\hat{A} A B}) = (1-p) \proj{0}_{\hat{A}} \otimes \cR_{B \to B C}(\rho^0_{A B}) + p \proj{1}_{\hat{A}} \otimes \cR_{B \to B C}(\rho_{A B}) \ .
   \end{align}
    The relevant density operators thus satisfy the orthogonality conditions for equality in Lemma~\ref{lem_fidelityconcave}, from which~\eqref{eq_ddecompose} follows.        
    Furthermore, as explained in the proof of Lemma~\ref{lem_deltaMeasRelProp1} we have 
    \begin{align} \label{eq_Idecompose}
      I(\hat A A :C|B)_{\rho^p} = (1-p) I(A:C|B)_{\rho^0} + p I(A:C|B)_{\rho} \ .
    \end{align}
    Equations~\eqref{eq_ddecompose} and~\eqref{eq_Idecompose} imply that
    \begin{align} \label{eq_Deltadecompose}
      \tilde \Delta_{\cR}(\rho^p) = (1-p) \tilde\Delta_{\cR}(\rho^0) + p \tilde\Delta_{\cR}(\rho) \ .
    \end{align}

We next verify that $\tilde\Delta_{\cR}(\cdot)$ fulfills Property~\ref{property_2}.
Let $\cR_{B \to BC}, \cR'_{B \to BC} \in \TPCP(B,B\otimes C)$, $\alpha \in [0,1]$ and $\bar \cR_{B \to BC} = \alpha \cR_{B \to BC} + (1-\alpha) \cR'_{B \to BC}$.
  Lemma~\ref{lem_fidelityconcave} implies that for any state $\rho_{ABC}$ on $A \otimes B \otimes C$
  \begin{multline}
     F\bigl(\rho_{ABC}, \bar{\cR}_{B \to B C}(\rho_{AB})\bigr) 
     = F\bigl(\rho_{ABC}, \alpha \cR_{B \to B C}(\rho_{AB}) + (1-\alpha) \cR'_{B \to B C}(\rho_{AB}) \bigl)\\
     \geq \alpha F\bigl(\rho_{ABC}, \cR_{B \to B C}(\rho_{AB})\bigr) + (1-\alpha) F\bigl(\rho_{ABC}, \cR'_{B \to B C}(\rho_{AB})\bigr) \ ,
  \end{multline}
  and, hence by the definition of $\tilde\Delta_{\cR}(\cdot)$
  \begin{align} \label{eq_Deltaconcave}
    \tilde\Delta_{\bar{\cR}}(\rho) \geq \alpha \tilde\Delta_{\cR}(\rho) + (1-\alpha) \tilde\Delta_{\cR'}(\rho) \ .
  \end{align}
  
The function $\rho \mapsto  \tilde\Delta_{\cR}(\rho)$ is continuous which clearly implies Property~\ref{property_4}. To see this, recall that by the Alicki-Fannes inequality $\rho \mapsto I(A:C|B)_{\rho}$ is continuous for a finite-dimensional $C$ system~\cite{AF03}. Furthermore, since $\rho_{AB} \mapsto \cR_{BC}(\rho_{AB})$ is continuous (see Lemma~\ref{lem_contInRho}), Lemma~B.9 of~\cite{FR14} implies that $\rho_{ABC} \mapsto F(\rho_{ABC},\cR_{B \to BC}(\rho_{AB}))$ is continuous, which then establishes Property~\ref{property_4}.
 
 Finally it remains to show that $\tilde\Delta_{\cR}(\cdot)$ satisfies Property~\ref{property_5}, which however follows directly by Lemma~\ref{lem_contInR}.

\end{proof}

Let $\cP \subseteq \TPCP(B,B \otimes C)$ be the convex hull of the set of trace-preserving completely positive mappings from the $B$ to the $B \otimes C$ system that are of the form~\eqref{eq_recoveryMapForm}. We note that the elements of $\cP$ are mappings of the form~\eqref{eq_unitalRecMap}, since a convex combination of unitary mappings are unital and a convex combination of trace-preserving maps remains trace-preserving. Proposition~\ref{prop_finite}, which is applicable as shown in Lemma~\ref{lem_linearizedProp} together with Step~\hyperlink{step_5_1}{1} therefore proves the assertion for a fixed $A$ system. 
\subsection*{\hypertarget{step_5_3}{Step 3: }Independence from the $A$ system}

Let $ \cS$ be the set of all density operators on $\bar A \otimes B \otimes C$ with a fixed marginal $\rho_{BC}$ on $B \otimes C$, where $B$ and $C$ are finite-dimensional Hilbert spaces and $\bar A$ is the infinite-dimensional Hilbert space $\ell^2$ of square summable sequences.

We note that the set of trace-preserving completely positive maps of the form~\eqref{eq_unitalRecMap} on finite-dimensional systems is compact, which follows by Remark~\ref{rmk_TPCPcompact} together with the fact that the intersection of a compact set and a closed set is compact. Hence, using Lemma~\ref{lem_linearizedProp} (in particular Properties~\ref{property_4} and~\ref{property_5}) and the result from Step~\hyperlink{step_5_2}{2} above, the same argument as in Step~\hyperlink{step_3_4}{4} of Section~\ref{sec_finiteDim} can be applied to conclude the existence of a recovery map $\cR_{B \to BC}$ of the form~\eqref{eq_unitalRecMap} such that $\tilde \Delta_{\cR}(\cS) \geq 0$. 

As every separable Hilbert space $A$ can isometrically embedded into $\bar A$~\cite[Theorem~II.7]{simon_book} and since $ \tilde \Delta_{\bar{\cR}}$ is invariant under isometries applied on the extension space $A$, we can conclude that the recovery map $\cR_{B \to BC}$ remains valid for any separable extension space $A$.
This proves the statement of Corollary~\ref{cor_main} for finite-dimensional $B$ and $C$ systems.

\section{Discussion} \label{sec_discussion}
Our main result is that for any density operator $\rho_{BC}$ on $B \otimes C$ there exists a recovery map $\cR_{B \to BC}$ such that the distance between \emph{any} extension $\rho_{ABC}$ of $\rho_{BC}$ acting on $A \otimes B \otimes C$ and $\cR_{B \to BC}(\rho_{AB})$ is bounded from above by the conditional mutual information $I(A:C|B)_{\rho}$. It is natural to ask whether such a map can be described as a simple and explicit function of $\rho_{BC}$. In fact, it was conjectured in~\cite{Kim13,BSW14} that~\eqref{eq_FR} holds for a very simple choice of map, namely
\begin{align} \label{eq_Petzmap}
 \cT_{B \to BC}\, : \, X_B \mapsto \rho_{B C}^{\frac{1}{2}} (\rho_B^{-\frac{1}{2}} X_B \rho_B^{-\frac{1}{2}} \otimes \id_C) \rho_{B C}^{\frac{1}{2}}  \ ,
\end{align}
called the \emph{transpose map} or \emph{Petz recovery map}. This conjecture, if correct, would have important consequences in obtaining remainder terms for the monotonicity of the relative entropy~\cite{BLW14}. As discussed in the introduction, if $\rho_{ABC}$ is such that it is a (perfect) quantum Markov chain or the $B$ system is classical, the claim of the conjecture is known to hold.  

One possible approach to prove a result of this form would be to start from the result~\eqref{eq_FR} for an unknown recovery map and then show that the transpose map $\cT_{B \to BC}$ cannot be much worse than any other recovery map. In fact, a theorem of Barnum and Knill~\cite{BK02} directly implies that when $\rho_{ABC}$ is pure, we have
\begin{align}\label{eq_BK}
F \big(\rho_{ABC}, \cT_{B \to BC}(\rho_{AB}) \big) \leq F(A;C|B)_{\rho} \leq \sqrt{F \big(\rho_{ABC}, \cT_{B \to BC}(\rho_{AB}) \big)} \ .
\end{align}
This shows that, if $\rho_{ABC}$ is pure, an inequality of the form~\eqref{eq_FR}, with the fidelity replaced by its square root, holds for the transpose map.
In order to generalize this to all states, one might hope that~\eqref{eq_BK} also holds for mixed states $\rho_{ABC}$. However, this turns out to be wrong even when the state $\rho_{ABC}$ is completely classical (see Appendix~\ref{app_transpose} for an example).

Another interesting question is whether the lower bound in terms of the measured relative entropy~\eqref{eq_main_new} can be improved to a relative entropy. Such an inequality is known to be false if we restrict the recovery map to be the transpose map~\eqref{eq_Petzmap}~\cite{WL12}, but it might be true when we optimize over all recovery maps. It is worth noting that in case such an inequality holds for any $\rho_{ABC}$ and a corresponding recovery map, then the argument presented in this work would imply that there exists a universal recovery map satisfying~\eqref{eq_main_new} with the relative entropy instead of the measured relative entropy. This can be seen by defining the function family $ \rho \mapsto \Delta_{\cR}(\rho):=I(A:C|B)_{\rho} - D(\rho_{ABC}\dd\cR_{B \to BC}(\rho_{AB}))$. Lemma~\ref{lem_relEnt}, the convexity of the relative entropy~\cite[Theorem~11.12]{nielsenChuang_book} and the lower semicontinuity of the relative entropy~\cite[Example~7.22]{holevo_book} imply that $\Delta_{\cR}(\cdot)$ satisfies Properties~\ref{property_1}-\ref{property_5}. As a result, Proposition~\ref{prop_finite} is applicable which can be used to prove the existence of a universal recovery map.

\appendix

\section*{Appendices}
\section{General facts about the fidelity}

The following lemma states a standard concavity property of the fidelity which is presented here for completeness and since we are interested in the case where equality holds.

\begin{lemma} \label{lem_fidelityconcave}
  For any density operators  $\rho$, $\rho'$, $\sigma$, and $\sigma'$, and for any $p \in [0,1]$ we have
  \begin{align} \label{eq_fidelityconcave}
    F\bigl(p \rho + (1-p) \rho', p \sigma + (1-p)\sigma' \bigr) \geq  p F(\rho, \sigma) + (1-p) F(\rho', \sigma') \ ,
  \end{align}
  with equality if both of $\rho$  and $\sigma$ are orthogonal to both of $\rho'$ and $\sigma'$. 
\end{lemma}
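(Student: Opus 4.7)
The plan is to prove the inequality via Uhlmann's theorem and then separately verify the equality case by a direct computation using the orthogonality of supports.

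For the inequality, I would invoke Uhlmann's theorem, which states that $F(\rho,\sigma) = \sup |\langle \psi | \phi \rangle|$ over all purifications $|\psi\rangle$ of $\rho$ and $|\phi\rangle$ of $\sigma$ on a common reference system. Choose purifications $|\psi\rangle, |\phi\rangle$ (respectively $|\psi'\rangle, |\phi'\rangle$) on a reference space $R$ achieving the suprema, and after adjusting global phases we may assume $\langle \psi|\phi\rangle = F(\rho,\sigma) \geq 0$ and $\langle \psi'|\phi'\rangle = F(\rho',\sigma') \geq 0$. Introduce an ancilla system with orthonormal states $|0\rangle,|1\rangle$ and define
\[
|\Psi\rangle = \sqrt{p}\,|0\rangle|\psi\rangle + \sqrt{1-p}\,|1\rangle|\psi'\rangle, \qquad |\Phi\rangle = \sqrt{p}\,|0\rangle|\phi\rangle + \sqrt{1-p}\,|1\rangle|\phi'\rangle.
\]
Tracing out the ancilla and $R$ shows these are purifications of $p\rho + (1-p)\rho'$ and $p\sigma + (1-p)\sigma'$ respectively. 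A direct expansion gives $\langle \Psi | \Phi \rangle = pF(\rho,\sigma) + (1-p)F(\rho',\sigma')$, and Uhlmann's theorem yields~\eqref{eq_fidelityconcave}.

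For the equality case, assume that $\mathrm{supp}(\rho) \perp \mathrm{supp}(\rho')$, $\mathrm{supp}(\rho) \perp \mathrm{supp}(\sigma')$, $\mathrm{supp}(\sigma) \perp \mathrm{supp}(\rho')$, and $\mathrm{supp}(\sigma) \perp \mathrm{supp}(\sigma')$. Since $\rho$ and $\rho'$ have orthogonal supports, $\sqrt{p\rho + (1-p)\rho'} = \sqrt{p}\sqrt{\rho} + \sqrt{1-p}\sqrt{\rho'}$, and analogously for $\sigma$. The orthogonality of supports further implies $\sqrt{\rho}\sqrt{\sigma'} = 0$ and $\sqrt{\rho'}\sqrt{\sigma} = 0$ (since $\sqrt{\sigma'}$ has range in $\mathrm{supp}(\sigma')$, on which $\sqrt{\rho}$ vanishes). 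Expanding the product therefore gives
\[
\sqrt{p\rho + (1-p)\rho'}\,\sqrt{p\sigma + (1-p)\sigma'} = p\sqrt{\rho}\sqrt{\sigma} + (1-p)\sqrt{\rho'}\sqrt{\sigma'}.
\]
The two summands on the right have ranges contained in the orthogonal subspaces $\mathrm{supp}(\rho)$ and $\mathrm{supp}(\rho')$ and co-ranges in the orthogonal subspaces $\mathrm{supp}(\sigma)$ and $\mathrm{supp}(\sigma')$, so by the standard additivity of the trace norm on operators with orthogonal row and column supports,
\[
\bigl\| p\sqrt{\rho}\sqrt{\sigma} + (1-p)\sqrt{\rho'}\sqrt{\sigma'} \bigr\|_1 = p\,\|\sqrt{\rho}\sqrt{\sigma}\|_1 + (1-p)\,\|\sqrt{\rho'}\sqrt{\sigma'}\|_1,
\]
which is exactly $pF(\rho,\sigma) + (1-p)F(\rho',\sigma')$, establishing equality.

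The main obstacle I anticipate is not in the inequality itself (the purification argument is routine) but in writing the equality case precisely: one must argue carefully that orthogonality of supports yields $\sqrt{\rho}\sqrt{\sigma'}=0$ (not just $\rho\sigma'=0$) and that the trace norm of a sum of two operators with mutually orthogonal ranges \emph{and} co-ranges decomposes additively, a fact most cleanly justified by noting that the singular value decomposition of the sum is the block-diagonal concatenation of the singular value decompositions of the two pieces.
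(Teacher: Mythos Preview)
Your proof is correct. The inequality argument via Uhlmann's theorem and the ancilla purification construction is essentially the same as the paper's (the paper uses the identical construction to establish the $\geq$ direction, though it detours through a reduction to the orthogonal case via monotonicity under partial trace before invoking it).

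The genuine difference lies in the equality case, specifically the $\leq$ direction. The paper stays within the Uhlmann picture: it takes an optimal purification of the mixture, projects it with the support projectors $\pi$ and $\pi'$ to obtain (sub)purifications of the pieces, and bounds $\langle\bar\phi|\bar\psi\rangle = \langle\bar\phi|\pi|\bar\psi\rangle + \langle\bar\phi|\pi'|\bar\psi\rangle$ by the sum of the individual fidelities. Your approach instead works directly with the trace-norm definition $F(\rho,\sigma)=\|\sqrt{\rho}\sqrt{\sigma}\|_1$: orthogonality lets you split the square roots, kill the cross terms, and invoke block-additivity of the trace norm. Your route is more algebraic and arguably more self-contained (no second appeal to Uhlmann), at the cost of needing the small lemma on $\|\cdot\|_1$-additivity for operators with orthogonal row and column supports, which you correctly identify and justify. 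Both arguments are clean; the paper's has the aesthetic advantage of using a single tool (Uhlmann) throughout, while yours makes the equality mechanism more transparent at the operator level.
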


\begin{proof}
  Note first that for any two normalized and mutually orthogonal vectors $\ket{0}$ and $\ket{1}$ in an ancilla space, we have
  \begin{align}
   F\bigl(p \rho\! + \!(1\!-\!p) \rho', p \sigma\! +\! (1-p)\sigma' \bigr) 
   \geq     F\bigl(p \rho\! \otimes\! \proj{0} \!+\! (1\!-\!p) \rho'\! \otimes \!\proj{1}, p \sigma \!\otimes\! \proj{0} + (1-p)\sigma'\! \otimes\! \proj{1} \bigr) \ , 
   \end{align}
   because of the monotonicity of the fidelity under the partial trace. Furthermore,  if both of $\rho$ and $\sigma$ are orthogonal to both of $\rho'$ and $\sigma'$ then there exists a trace-preserving completely positive map that generates the corresponding state $\ket{0}$ or $\ket{1}$ of the ancilla system. This implies that, in this case, the inequality also holds in the other direction.   It therefore suffices to  prove~\eqref{eq_fidelityconcave} with $\rho$ and $\sigma$ replaced by $\rho \otimes \proj{0}$ and $\sigma \otimes \proj{0}$, and with $\rho'$ and $\sigma'$ replaced by $\rho' \otimes \proj{1}$ and $\sigma' \otimes \proj{1}$, respectively. In other words, it remains to show that, for the case where $\rho$ and $\sigma$ are orthogonal to $\rho'$ and $\sigma'$,~\eqref{eq_fidelityconcave} holds with equality, i.e., 
 \begin{align} \label{eq_fidelityequality}
 F(\bar{\rho}, \bar{\sigma})  = p F(\rho, \sigma) + (1-p) F(\rho', \sigma') \ ,
 \end{align}
 where $\bar{\rho} = p \rho + (1-p) \rho'$  and $\bar{\sigma} = p \sigma + (1-p) \sigma'$. 
   
 For this, let  $\ket{\phi}$,  $\ket{\phi'}$,  $\ket{\psi}$, and $\ket{\psi'}$ be purifications of  $\rho$, $\rho'$, $\sigma$, and $\sigma'$, respectively, such that $F(\rho, \sigma) = \spr{\phi}{\psi}$ and $F(\rho', \sigma') = \spr{\phi'}{\psi'}$.  It is easy to verify that
        \begin{align}
    \ket{\bar{\phi}} = \sqrt{p} \ket{\phi} \otimes \ket{0} + \sqrt{1-p} \ket{\phi'}  \otimes \ket{1} \quad \text{and} \quad  \ket{\bar{\psi}} = \sqrt{p} \ket{\psi}  \otimes \ket{0} + \sqrt{1-p} \ket{\psi'} \otimes \ket{1} 
  \end{align}
  are purifications of $\bar{\rho}$ and of $\bar{\sigma}$, respectively. Hence,  
  \begin{align}
    p F(\rho, \sigma) + (1-p) F(\rho', \sigma') 
     = p \spr{\phi}{\psi} + (1-p) \spr{\phi'}{\psi'} 
  = \spr{\bar{\phi}}{\bar{\psi}} 
  \leq F(\bar{\rho}, \bar{\sigma}) \ ,
  \end{align}
which proves one direction of~\eqref{eq_fidelityequality}. 
  
  To prove the other direction, let $\pi$ be  the projector onto the joint support of $\rho$ and $\sigma$, i.e., $\pi \rho = \rho$ and $\pi \sigma = \sigma$. Similarly, let $\pi'$ be the projector onto the joint support of $\rho'$ and $\sigma'$, i.e,. $\pi' \rho' = \rho'$ and $\pi' \sigma' = \sigma'$. By the condition that  $\rho$ and $\sigma$ are orthogonal to $\rho'$ and $\sigma'$,  the two projectors must be orthogonal, i.e., $\pi \pi' = 0$.  Furthermore, let $\ket{\bar{\phi}}$ be a purification of $\bar{\rho}$ and let $\ket{\bar{\psi}}$ be a purification of $\bar{\sigma}$ such that $F(\bar{\rho}, \bar{\sigma}) = \spr{\bar{\phi}}{\bar{\psi}}$.  Because 
 \begin{align}
   p \rho = \pi \bar{\rho} \pi \quad \text{and} \quad  (1-p) \rho' = \pi' \bar{\rho} \pi'
 \end{align}
  $\pi \ket{\bar{\phi}}$ and $\pi' \ket{\bar{\phi}}$ are purifications of $p \rho$ and $(1-p) \rho'$, respectively. Similarly, $\pi \ket{\bar{\psi}}$ and $\pi' \ket{\bar{\psi}}$ are purifications of $p \sigma$ and $(1-p) \sigma'$, respectively. Hence, we have 
  \begin{multline}
    F(\bar{\rho}, \bar{\sigma}) = \spr{\bar{\phi}}{\bar{\psi}} 
    = \bra{\bar{\phi}} \pi \ket{\bar{\psi}} + \bra{\bar{\phi}} \pi' \ket{\bar{\psi}} 
    \leq F\bigl(p \rho, p \sigma\bigr) + F\bigl((1-p) \rho', (1-p) \sigma'\bigr) \\
    = p F(\rho, \sigma) + (1-p) F(\rho', \sigma') \ .
  \end{multline}
  This proves the other direction of~\eqref{eq_fidelityequality} and thus concludes the proof. 
  \end{proof}

The following lemma generalizes the Fuchs-van de Graaf inequality which has been proven for states to non-negative operators. The result is standard and stated here for completeness.
\begin{lemma} \label{lem_tracedistancefidelity}
For any two non-negative operators $\rho$ and $\sigma$ with $\tr(\rho) \geq \tr(\sigma)$, the trace norm of their difference is bounded from above by
 \begin{align} \label{eq_tracedistancefidelity}
\norm{\rho- \sigma}_1 
 \leq 2\sqrt{\tr(\rho)^2-F(\rho, \sigma)^2} \ . 
\end{align}
\end{lemma}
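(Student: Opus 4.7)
The plan is to reduce this to the standard Fuchs-van de Graaf inequality (which holds for normalized states) by a two-step trick: first equalize the traces via a direct-sum extension, then rescale to states.

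First I would embed $\rho$ and $\sigma$ into a slightly enlarged Hilbert space by setting $\tilde\rho := \rho \oplus 0$ and $\tilde\sigma := \sigma \oplus d$, where $d := \tr(\rho) - \tr(\sigma) \geq 0$ is treated as a one-dimensional block. A short computation using $\sqrt{\tilde\rho}\sqrt{\tilde\sigma} = \sqrt{\rho}\sqrt{\sigma} \oplus 0$ shows that $F(\tilde\rho,\tilde\sigma) = \|\sqrt{\rho}\sqrt{\sigma}\|_1 = F(\rho,\sigma)$, while the block structure of $\tilde\rho - \tilde\sigma$ gives $\|\tilde\rho - \tilde\sigma\|_1 = \|\rho-\sigma\|_1 + d \geq \|\rho-\sigma\|_1$. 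So it suffices to upper-bound $\|\tilde\rho-\tilde\sigma\|_1$.

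Now $\tilde\rho$ and $\tilde\sigma$ both have trace $t := \tr(\rho)$, so $\tilde\rho/t$ and $\tilde\sigma/t$ are bona fide density operators. Applying the standard Fuchs-van de Graaf inequality to these states,
\begin{equation*}
\tfrac{1}{t}\|\tilde\rho-\tilde\sigma\|_1 \leq 2\sqrt{1 - F(\tilde\rho/t,\tilde\sigma/t)^2} = 2\sqrt{1 - F(\rho,\sigma)^2/t^2} \, ,
\end{equation*}
where I used the homogeneity $F(\lambda \rho,\lambda\sigma) = \lambda F(\rho,\sigma)$ for $\lambda\geq 0$. Multiplying through by $t = \tr(\rho)$ and combining with the earlier inequality yields
\begin{equation*}
\|\rho-\sigma\|_1 \leq \|\tilde\rho-\tilde\sigma\|_1 \leq 2\sqrt{\tr(\rho)^2 - F(\rho,\sigma)^2} \, ,
\end{equation*}
which is exactly~\eqref{eq_tracedistancefidelity}.

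There is no genuine obstacle here; the only mild care needed is to verify that fidelity is invariant under the direct-sum padding (which follows from the block-diagonal structure of square roots) and that the trace norm only grows under it. The assumption $\tr(\rho) \geq \tr(\sigma)$ is used solely to ensure $d \geq 0$ so that $\tilde\sigma$ remains non-negative and the extension is well-defined.
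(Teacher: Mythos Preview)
Your proof is correct and follows essentially the same strategy as the paper: pad $\sigma$ with an orthogonal piece to equalize the traces (which preserves the fidelity and can only increase the trace distance), then rescale to states and invoke the standard Fuchs--van de Graaf inequality. Your direct-sum formulation is a clean way to guarantee the orthogonal support exists, and you correctly record the trace-norm relation as an inequality (which is all that is needed).
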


\begin{proof}
 Let $\omega$ be a non-negative operator with $\tr(\omega) = \tr(\rho) - \tr(\sigma)$, whose support is orthogonal to the support of both $\rho$ and $\sigma$, and define $\sigma' = \sigma + \omega$. Then $\tr(\rho) = \tr(\sigma')$ and
   \begin{align}
  \norm{\rho- \sigma}_1 = \norm{\rho- \sigma'}_1  \qquad \text{and} \qquad  F(\rho, \sigma) = F(\rho, \sigma')  \ .
 \end{align}
 It therefore suffices to show that the claim holds for operators with $\tr(\rho) = \tr(\sigma) = c \in \R^+$. Furthermore for $c>0$, defining $\bar{\rho} = \rho / c$ and $\bar{\sigma} = \sigma / c$ and noting that
 \begin{align}
 \norm{\rho- \sigma}_1 = c  \norm{\bar{\rho}- \bar{\sigma}}_1 \qquad \text{and} \qquad F(\rho, \sigma) = c F(\bar{\rho}, \bar{\sigma}) \ ,
 \end{align}  
 it suffices to verify that the claim holds for $\tr(\rho) = \tr(\sigma) = 1$ which follows by the Fuchs-van de Graaf inequality~\cite{fuchs99}.
\end{proof}

\section{General facts about the measured relative entropy} \label{ap_measRelEnt}


\begin{definition} \label{def_MeasRelEnt}
The \emph{measured relative entropy} between density operators $\rho$ and $\sigma$ is defined as the supremum of the relative entropy with measured inputs over all POVMs $\cM = \{M_x\}$, i.e., 
\begin{align}
\MD(\rho \dd \sigma) = \sup \bigl\{ D( \cM(\rho) \dd \cM(\sigma) ) : \cM(\rho) = \sum_{x} \tr(\rho M_x) \proj{x} \text{ with } \sum_{x} M_x = \id \bigr\} \ ,
\end{align}
where $\{\ket{x}\}$ is a finite set of orthonormal vectors.
\end{definition}
This quantity was studied in~\cite{HP91,Hay01} where it was shown that $\frac{1}{n} \MD(\rho^{\otimes n} \dd \sigma^{\otimes n})$ converges to the relative entropy $D(\rho \dd \sigma):= \tr(\rho(\log \rho - \log \sigma))$.

\begin{lemma} \label{lem_relEnt}
Let $\rho$, $\rho'$, $\sigma$, and $\sigma'$ be density operators such that both $\rho$ and $\sigma$ are orthogonal to both $\rho'$ and $\sigma'$. For any $p\in [0,1]$ we have
\begin{align}
D\bigl(p \rho + (1-p) \rho' \, \dd \,p \sigma + (1-p) \sigma' \bigr) = p D(\rho \dd \sigma) + (1-p) D(\rho' \dd \sigma') \ .
\end{align}
\end{lemma}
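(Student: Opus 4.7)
The plan is to reduce the identity to a direct block-diagonal computation using the functional calculus. Set $\bar\rho := p\rho + (1-p)\rho'$ and $\bar\sigma := p\sigma + (1-p)\sigma'$, and first dispose of the trivial cases $p \in \{0,1\}$. For $p\in(0,1)$, let $\pi$ be the projector onto $\mathrm{supp}(\rho)+\mathrm{supp}(\sigma)$ and $\pi'$ the projector onto $\mathrm{supp}(\rho')+\mathrm{supp}(\sigma')$. The orthogonality hypothesis ensures $\pi\pi'=0$, and hence $\bar\rho = \pi\bar\rho\pi + \pi'\bar\rho\pi'$ and $\bar\sigma = \pi\bar\sigma\pi + \pi'\bar\sigma\pi'$ are block diagonal with respect to $\pi$ and $\pi'$. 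Moreover, inside the $\pi$-block one has $\bar\rho = p\rho$ and $\bar\sigma = p\sigma$, while inside the $\pi'$-block one has $\bar\rho = (1-p)\rho'$ and $\bar\sigma = (1-p)\sigma'$.

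Next I would treat the support condition. If either $\mathrm{supp}(\rho)\not\subseteq\mathrm{supp}(\sigma)$ or $\mathrm{supp}(\rho')\not\subseteq\mathrm{supp}(\sigma')$, then the orthogonality of the two blocks implies $\mathrm{supp}(\bar\rho)\not\subseteq\mathrm{supp}(\bar\sigma)$, so both sides of the claimed identity equal $+\infty$ and we are done. So assume both support inclusions hold. Because the logarithm respects the block decomposition, $\log\bar\rho$ acts as $\log(p\rho) = (\log p)\,\id + \log\rho$ on $\mathrm{supp}(\rho)$ and as $\log((1-p)\rho') = (\log(1-p))\,\id + \log\rho'$ on $\mathrm{supp}(\rho')$, with the analogous statement for $\log\bar\sigma$.

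Using $\tr(\rho)=\tr(\rho')=1$ and these expressions I would compute
\begin{align*}
\tr(\bar\rho\log\bar\rho) &= p\log p + (1-p)\log(1-p) + p\tr(\rho\log\rho) + (1-p)\tr(\rho'\log\rho'), \\
\tr(\bar\rho\log\bar\sigma) &= p\log p + (1-p)\log(1-p) + p\tr(\rho\log\sigma) + (1-p)\tr(\rho'\log\sigma'),
\end{align*}
where the cross terms vanish because $\pi\bar\rho\pi$ annihilates the $\pi'$-block of $\log\bar\sigma$ and vice versa. Subtracting gives
\begin{equation*}
D(\bar\rho \dd \bar\sigma) = p\,D(\rho\dd\sigma) + (1-p)\,D(\rho'\dd\sigma'),
\end{equation*}
which is the claim.

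There is no serious obstacle; the only thing to be careful about is the bookkeeping of supports (so that $0\log 0 = 0$ and the $+\infty$ case are handled consistently) and the fact that the orthogonality hypothesis is used twice: once to guarantee that $\bar\rho,\bar\sigma$ are simultaneously block diagonal so the functional calculus factorizes, and once to ensure that cross terms of the form $\tr(\pi\bar\rho\pi \log(\pi'\bar\sigma\pi'))$ vanish.
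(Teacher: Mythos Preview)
Your proof is correct and follows essentially the same approach as the paper: exploit the orthogonality to block-diagonalize, split the logarithm as $\log(p\rho+(1-p)\rho')=\log(p\rho)+\log((1-p)\rho')$ on the respective supports, and expand. Your version is in fact more careful than the paper's terse argument, explicitly handling the $+\infty$ case and the bookkeeping of supports.
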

\begin{proof}
By the orthogonality of $\rho$ and $\rho'$ (respectively $\sigma$ and $\sigma'$) we have
\begin{align}
\log\bigl(p \rho + (1-p) \rho' \bigr) = \log(p \rho) + \log \bigl((1-p) \rho' \bigr) = \log(p) + \log(1-p) + \log(\rho) + \log(\rho') 
\end{align}
and $\rho \log \rho' =  0$. Thus by definition of the relative entropy we obtain the desired statement.
\end{proof}

\begin{lemma}\label{lem_cqMeasRelEnt}
Let $\rho$, $\rho'$, $\sigma$, and $\sigma'$ be density operators such that both $\rho$ and $\sigma$ are orthogonal to both $\rho'$ and $\sigma'$. For any $p\in [0,1]$ we have
\begin{align}
\MD\bigl(p \rho + (1-p) \rho' \, \dd \,p \sigma + (1-p) \sigma' \bigr) = p \MD(\rho \dd \sigma) + (1-p) \MD(\rho' \dd \sigma') \ .
\end{align}
\end{lemma}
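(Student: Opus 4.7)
The plan is to establish the two inequalities $\leq$ and $\geq$ separately. Let $\pi$ be the projector onto the joint support of $\{\rho,\sigma\}$ and $\pi'$ the projector onto the joint support of $\{\rho',\sigma'\}$; the orthogonality hypothesis gives $\pi\pi' = 0$, so $\pi\rho = \rho$, $\pi'\rho = 0$, and analogously for $\sigma$, while the roles are reversed for $\rho'$ and $\sigma'$. This block structure is the only input needed.

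For the direction $\geq$, I would fix $\varepsilon > 0$ and pick POVMs $\{M^{(1)}_x\}_x$ on $\pi$ and $\{M^{(2)}_y\}_y$ on $\pi'$ that are $\varepsilon$-optimal for $\MD(\rho\dd\sigma)$ and $\MD(\rho'\dd\sigma')$, respectively. Combining them into a single POVM on the full space with \emph{disjoint} outcome labels $x \sqcup y$ (and one extra element $\id - \pi - \pi'$ to enforce the normalization $\sum = \id$), the outcome distributions of $p\rho+(1-p)\rho'$ and $p\sigma+(1-p)\sigma'$ have disjoint supports, with masses $p P_1$ on the first label set and $(1-p) P_2$ on the second, and similarly $p Q_1$, $(1-p)Q_2$ on the $\sigma$-side. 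A direct computation (or the classical analogue of Lemma~\ref{lem_relEnt}) gives
\begin{equation*}
D\bigl(p P_1 \oplus (1-p) P_2 \dd p Q_1 \oplus (1-p) Q_2\bigr) = p\, D(P_1 \dd Q_1) + (1-p)\, D(P_2 \dd Q_2) \geq p\, \MD(\rho\dd\sigma) + (1-p)\, \MD(\rho'\dd\sigma') - \varepsilon,
\end{equation*}
and letting $\varepsilon \to 0$ finishes this direction.

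For the direction $\leq$, I would fix an arbitrary POVM $\{M_x\}_x$ on the full space with $\sum_x M_x = \id$ and exploit the block-orthogonality to write
\begin{equation*}
\tr\bigl(M_x \bigl(p\rho + (1-p)\rho'\bigr)\bigr) = p \tr(\pi M_x \pi\, \rho) + (1-p) \tr(\pi' M_x \pi'\, \rho'),
\end{equation*}
with the analogous decomposition for $\sigma, \sigma'$. Setting $P_1(x) = \tr(\pi M_x \pi\, \rho)$, $P_2(x) = \tr(\pi' M_x \pi'\, \rho')$, $Q_1(x) = \tr(\pi M_x \pi\, \sigma)$, $Q_2(x) = \tr(\pi' M_x \pi'\, \sigma')$, each is a probability distribution because $\sum_x \pi M_x \pi = \pi$ and $\tr(\pi \rho) = 1$, etc. Joint convexity of classical relative entropy then yields
\begin{equation*}
D\bigl(\cM(p\rho + (1-p)\rho') \dd \cM(p\sigma + (1-p)\sigma')\bigr) \leq p\, D(P_1 \dd Q_1) + (1-p)\, D(P_2 \dd Q_2).
\end{equation*}
Extending $\{\pi M_x \pi\}_x$ by the outcome $\id - \pi$ (which has zero probability against both $\rho$ and $\sigma$) produces a valid POVM on the full space that certifies $D(P_1 \dd Q_1) \leq \MD(\rho\dd\sigma)$, and analogously $D(P_2 \dd Q_2) \leq \MD(\rho'\dd\sigma')$; taking the supremum over $\cM$ completes the proof. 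The main obstacle is this $\leq$ direction, specifically dealing with the off-block pieces $\pi M_x \pi'$ and $\pi' M_x \pi$ of an arbitrary POVM element; the key observation that makes the argument work cleanly is that these cross terms are annihilated by $\rho, \sigma, \rho', \sigma'$ thanks to orthogonality, so the outcome statistics factor through the two block-diagonal POVMs $\{\pi M_x \pi\}$ and $\{\pi' M_x \pi'\}$, after which joint convexity of the classical KL divergence closes the gap.
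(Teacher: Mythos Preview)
Your proposal is correct and follows essentially the same route as the paper: for $\geq$ you combine (near\nobreakdash-)optimal POVMs on the two blocks into one POVM and invoke the block-diagonal additivity of the relative entropy (Lemma~\ref{lem_relEnt}), while for $\leq$ you apply joint convexity of the classical relative entropy to an arbitrary measurement and bound each summand by the corresponding measured relative entropy. Your treatment is slightly more explicit about normalization (adding the element $\id-\pi-\pi'$) and about why the cross-block pieces of $M_x$ are harmless, but the structure and key ideas match the paper's proof.
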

\begin{proof}
Let $\cM = \{M_x\}$, $\cM' = \{M'_y\}$ be measurements and define the POVM on $\cN$  whose elements are given by $\{ M_x \}_x \cup \{M'_y \}_{y}$. Then we can write
\begin{align}
\cN\bigl( p \rho + (1-p) \rho' \bigr) &= p \sum_{x} \tr(M_x \rho) \proj{x} + (1-p) \sum_{y} \tr(M'_y \rho') \proj{y} \ .
\end{align}
As a result using Lemma~\ref{lem_relEnt},
\begin{multline}
\MD\bigl(p \rho + (1-p) \rho' \, \dd \,p \sigma + (1-p) \sigma' \bigr) \geq D\Bigl( \cN\bigl(p \rho + (1-p) \rho' \bigr) \,\Big|\hspace{-0.6mm} \Big| \, \cN\bigl(p \sigma + (1-p) \sigma' \bigr) \Bigr)  \\
= p D\Bigl(\sum_{x} \tr(M_x \rho) \proj{x} \Big|\hspace{-0.6mm} \Big| \sum_{x} \tr(M_x \sigma) \proj{x}\Bigr)+(1-p) D\Bigl(\sum_{y} \tr(M'_y \rho') \proj{y} \Big|\hspace{-0.6mm} \Big| \sum_{y} \tr(M'_y \sigma') \proj{y}\Bigr) \ .
\end{multline}
As this inequality is valid for any measurements $\cM$ and $\cM'$, taking the supremum over such measurements gives
\begin{align}
\MD\bigl(p \rho + (1-p) \rho' \, \dd \,p \sigma + (1-p) \sigma' \bigr) \geq p \MD(\rho \dd \sigma) + (1-p) \MD(\rho' \dd \sigma') \ .
\end{align}

For the other direction, consider a measurement $\cM = \{M_x\}$. We can write
\begin{align}
\cM\bigl(p \rho + (1-p) \rho' \bigr) 
&= \sum_{x} p \,  \tr(M_x  \rho) \proj{x} + (1-p) \, \tr(M_x \rho') \proj{x} \ .
\end{align}
Combining this with the joint convexity of the relative entropy~\cite[Theorem~11.12]{nielsenChuang_book}, we get
\begin{align}
&\MD\bigl(p \rho + (1-p) \rho' \, \dd \,p \sigma + (1-p) \sigma' \bigr) = D\Bigl(\cM\bigl(p \rho + (1-p) \rho' \bigr) \Big|\hspace{-0.6mm} \Big| \cM\bigl(p \sigma + (1-p) \sigma'\bigr) \Bigr) \nonumber \\
&\leq p \, D\Bigl(\sum_{x} \tr( M_x \rho) \proj{x} \Big|\hspace{-0.6mm} \Big| \sum_{x} \tr( M_x \sigma) \proj{x} \Bigr)  + (1-p) \, D\Bigl(\sum_{x} \tr( M_x  \rho') \proj{x} \Big|\hspace{-0.6mm} \Big| \sum_{x} \tr( M_x \sigma') \proj{x}\Bigr) \nonumber\\
&\leq p \MD(\rho \dd \sigma)+ (1-p) \MD( \rho'\dd \sigma')  \ .
\end{align}
\end{proof}

\begin{lemma} \label{lem_convexityMeasRelEnt}
For density operators $\rho$, $\sigma$, and $\sigma'$ and $p\in [0,1]$ the measured relative entropy satisfies 
\begin{align}
\MD\bigl(\rho \dd p \sigma+ (1-p) \sigma'   \bigr) \leq p \, \MD(\rho \dd \sigma)+  (1-p) \, \MD(\rho \dd \sigma')  \ .
\end{align}
\end{lemma}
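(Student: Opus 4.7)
The plan is to reduce the statement to the analogous convexity of the \emph{classical} relative entropy in its second argument, and then take the supremum over measurements. Concretely, I would start from the definition
\[
\MD\bigl(\rho\dd p\sigma+(1-p)\sigma'\bigr)=\sup_{\cM\in\MM} D\bigl(\cM(\rho)\dd\cM(p\sigma+(1-p)\sigma')\bigr),
\]
and use that every $\cM\in\MM$ is linear, so $\cM(p\sigma+(1-p)\sigma')=p\,\cM(\sigma)+(1-p)\,\cM(\sigma')$.

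Next, for any fixed measurement $\cM$, the outputs $\cM(\rho)$, $\cM(\sigma)$ and $\cM(\sigma')$ are classical (diagonal) states supported on $\{\ket{x}\}$. For classical distributions with a fixed first argument $P$, the map $Q\mapsto D(P\dd Q)=\sum_x P(x)\log\bigl(P(x)/Q(x)\bigr)$ is convex because $-\log$ is convex; this is the standard consequence of the joint convexity of the relative entropy (see \cite[Theorem~11.12]{nielsenChuang_book}, restricted to the classical case, or an elementary direct computation). Applying this yields
\[
D\bigl(\cM(\rho)\dd p\,\cM(\sigma)+(1-p)\,\cM(\sigma')\bigr)\le p\,D\bigl(\cM(\rho)\dd\cM(\sigma)\bigr)+(1-p)\,D\bigl(\cM(\rho)\dd\cM(\sigma')\bigr).
\]

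Finally, each term on the right-hand side is bounded above by the corresponding measured relative entropy, i.e., $D(\cM(\rho)\dd\cM(\sigma))\le\MD(\rho\dd\sigma)$ and likewise for $\sigma'$. Since the resulting bound $p\,\MD(\rho\dd\sigma)+(1-p)\,\MD(\rho\dd\sigma')$ is independent of $\cM$, taking the supremum over $\cM\in\MM$ on the left gives the claim.

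There is no real obstacle here: the only point worth emphasising is that linearity of $\cM$ is essential to pull the convex combination outside the measurement, and that the convexity in the second argument is used at the classical level (after measurement), so one does not need the deeper operator-convexity results — the elementary scalar convexity of $-\log$ suffices.
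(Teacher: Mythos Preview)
Your proposal is correct and follows essentially the same route as the paper: use linearity of $\cM$, apply convexity of the relative entropy in the second argument (the paper also cites \cite[Theorem~11.12]{nielsenChuang_book}), bound each term by the corresponding measured relative entropy, and take the supremum over $\cM$. Your remark that only the elementary scalar convexity of $-\log$ is needed (since the post-measurement states are classical) is a nice sharpening but does not change the structure of the argument.
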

\begin{proof}
For any measurement $\cM$,
\begin{align}
D\bigl(\cM(\rho) \dd \cM(  p \sigma + (1-p) \sigma') \bigr)
&= D\bigl(\cM(\rho) \dd \, p \, \cM(\sigma) + (1-p) \cM( \sigma') \bigr)\nonumber \\
&\leq  p \, D\bigl(\cM(\rho) \dd \cM(\sigma)\bigr) + (1-p) \, D\bigl(\cM(\rho) \dd \cM(\sigma')\bigr)\nonumber  \\
&\leq p \MD(\rho \dd \sigma) + (1-p) \MD(\rho \dd \sigma')  \ , 
\end{align}
where the first inequality step uses the convexity of the relative entropy~\cite[Theorem~11.12]{nielsenChuang_book}.
Taking the supremum over $\cM$, we get the desired result.
\end{proof}


\section{Basic topological facts} 
For completeness we state here some standard topological facts about density operators and trace-preserving completely positive maps.
\begin{lemma}\label{lem_DensityOpCompact}
Let $\alpha \in \R^+$. The space of non-negative operators on a finite-dimensional Hilbert space $E$ with trace smaller or equal to $\alpha$ (respectively equal to $\alpha$) is compact.
\end{lemma}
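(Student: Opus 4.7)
The plan is to invoke the Heine–Borel theorem: since the space of linear operators on a finite-dimensional Hilbert space $E$ is itself a finite-dimensional normed vector space (for instance under the Hilbert–Schmidt norm), a subset is compact if and only if it is closed and bounded. So the whole argument reduces to verifying these two properties for the sets
\[
\cK_{\leq \alpha} := \{A \in \Pos(E) : \tr(A) \leq \alpha\} \quad \text{and} \quad \cK_{=\alpha} := \{A \in \Pos(E) : \tr(A) = \alpha\}.
\]

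First I would handle closedness. The set $\Pos(E)$ of non-negative operators can be written as $\{A : A = A^{\dagger} \text{ and } \langle \psi | A | \psi \rangle \geq 0 \text{ for all } \ket{\psi} \in E\}$. Self-adjointness is a linear (hence closed) condition, and for each fixed $\ket{\psi}$ the map $A \mapsto \langle \psi | A | \psi \rangle$ is continuous, so $\Pos(E)$ is an intersection of closed sets. The trace $A \mapsto \tr(A)$ is a continuous linear functional, and the preimages of the closed sets $(-\infty,\alpha]$ and $\{\alpha\}$ are closed. Intersecting these with $\Pos(E)$ gives $\cK_{\leq \alpha}$ and $\cK_{=\alpha}$, both of which are therefore closed.

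Next I would handle boundedness. For any $A \in \Pos(E)$, the eigenvalues $\lambda_1, \ldots, \lambda_d$ (with $d = \dim E$) are non-negative and satisfy $\sum_i \lambda_i = \tr(A)$. If $\tr(A) \leq \alpha$ then each $\lambda_i \in [0,\alpha]$, and hence the Hilbert–Schmidt norm obeys $\|A\|_2 = \sqrt{\sum_i \lambda_i^2} \leq \sqrt{d}\,\alpha$. This shows that both $\cK_{\leq \alpha}$ and $\cK_{=\alpha} \subseteq \cK_{\leq \alpha}$ are bounded in a norm that generates the topology of the finite-dimensional operator space. Combining closedness and boundedness, Heine–Borel yields compactness of both sets.

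There is no real obstacle here; the only point requiring a modicum of care is emphasizing that the ambient space is finite-dimensional, since in infinite dimensions closed bounded sets need not be compact. All ingredients (continuity of inner products and of the trace, control of eigenvalues via the trace for non-negative operators, Heine–Borel) are completely standard and assemble into a short, direct proof.
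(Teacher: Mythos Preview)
Your proof is correct. It differs from the paper's argument, though: the paper builds the set up from rank-one operators, observing that the map $e \mapsto e e^{\dagger}$ sends a compact ball in $E$ onto a compact set of rank-one positive operators, and then invokes the spectral theorem to write the full set as the convex hull of that image (the convex hull of a compact set in finite dimensions being compact). Your approach instead verifies closedness and boundedness directly and appeals to Heine--Borel. Your route is more elementary and avoids any appeal to the spectral decomposition or to facts about convex hulls; the paper's route is slightly more structural, in that it exhibits the set as generated from a simple compact family. Either is perfectly adequate for this standard fact.
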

\begin{proof} 
Let $\D'(E):=\{\rho \in \Pos(E):\tr(\rho)\leq\alpha \}$ denote the set non-negative operators on $E$ with trace not larger than one, where $\Pos(E)$ is the set of non-negative operators on $E$. Consider the ball $\cB:=\{e \in E: \norm{e}\leq\alpha\}$ which is compact. The function $\cB \ni e \mapsto f(e)=e e^{\dagger} \in \D'(E)$ is continuous and thus the set $f(\cB)=\{e e^\dagger : e \in E, \norm{e}\leq\alpha\}$ is compact, as continuous functions map compact sets to compact sets. By the spectral theorem it follows that $\D'(E)= \mathrm{conv} f(\cB)$. As the convex hull of every compact set is compact this proves the assertion. The same argumentation (by replacing the inequalities with equalities) proves that the set of non-negative operators on $E$ with trace $\alpha$ is compact.
\end{proof}

\begin{lemma}\label{lem_DensityOpCompactFixedMarginal}
Let $E$, $G$ be finite-dimensional Hilbert spaces and let $\sigma_G \in \Pos(G)$. The space of non-negative operators on $E \otimes G$ with a marginal on $G$ smaller or equal to $\sigma_G$ (respectively equal to $\sigma_G$) is compact.
\end{lemma}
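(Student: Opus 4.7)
The plan is to invoke the Heine--Borel theorem: since $E \otimes G$ is finite-dimensional, the space of operators on it is a finite-dimensional normed space, and any subset is compact if and only if it is closed and bounded. So I would verify these two properties for the set
\[
\cK := \bigl\{\rho_{EG} \in \Pos(E \otimes G) : \tr_E \rho_{EG} \leq \sigma_G \bigr\}
\]
(and for its variant with equality in place of $\leq$).

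For boundedness, I would observe that every $\rho_{EG} \in \cK$ satisfies $\tr \rho_{EG} = \tr(\tr_E \rho_{EG}) \leq \tr \sigma_G =: \alpha$, so $\cK$ is contained in the set of non-negative operators on $E \otimes G$ with trace at most $\alpha$. By Lemma~\ref{lem_DensityOpCompact} this enclosing set is compact, in particular bounded.

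For closedness, I would use that the partial trace $\tr_E : \Pos(E \otimes G) \to \Pos(G)$ is linear and hence continuous in finite dimension. The order interval $\{Y_G \in \Pos(G) : Y_G \leq \sigma_G\}$ is closed, because it is the intersection of the closed set $\Pos(G)$ with the closed set $\{Y_G : \sigma_G - Y_G \in \Pos(G)\}$ (the latter being the preimage of the closed cone $\Pos(G)$ under the continuous affine map $Y_G \mapsto \sigma_G - Y_G$). Hence $\cK$, being the intersection of $\Pos(E \otimes G)$ (closed) with the preimage under $\tr_E$ of a closed set, is closed. A closed subset of a compact set in a Hausdorff space is compact, which gives the desired conclusion.

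For the case with a fixed marginal $\tr_E \rho_{EG} = \sigma_G$, the only change is to replace the order interval by the singleton $\{\sigma_G\}$, which is trivially closed; the rest of the argument is unchanged. There is really no serious obstacle here—the only thing to keep in mind is to treat the two cases uniformly and to make explicit that everything is set in finite dimension so that Heine--Borel and continuity of the partial trace both apply without qualification.
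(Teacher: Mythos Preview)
Your proof is correct and follows essentially the same approach as the paper: both use Lemma~\ref{lem_DensityOpCompact} to obtain a compact enclosing set and then argue that the marginal constraint defines a closed subset, whence the intersection is compact. Your version is simply more explicit about why the constraint set is closed (continuity of the partial trace, closedness of the order interval), whereas the paper states this without elaboration.
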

\begin{proof}
Let $\sigma_G \in \Pos(G)$. By Lemma~\ref{lem_DensityOpCompact}, the set of non-negative operators on $E \otimes G$ with trace not larger than $\alpha \in \R^+$ is compact.
The set $\{X \in E \otimes G: \tr_{E}(X)\leq\rho_{G} \}$ is closed. The intersection of a compact set and a closed set is compact which implies that $\{ X \in \Pos(E \otimes G ): \tr_{E}(X)\leq\rho_{G} \}$ is compact. Since the set $\{X \in E \otimes G: \tr_{E}(X)=\rho_{G} \}$ is closed the same argumentation shows that $\{ X \in \Pos(E \otimes G ): \tr_{E}(X)=\rho_{G} \}$ is compact.
\end{proof}

\begin{remark}  \label{rmk_TPCPcompact}
Let $E$ and $G$ be two finite-dimensional Hilbert spaces. The space of trace-non-increasing (respectively trace-preserving) completely positive maps from $E$ to $G$ is compact. To see this note that Lemma~\ref{lem_DensityOpCompactFixedMarginal} implies that the set $\cF := \{ X \in \Pos(E \otimes G ): \tr_{G}(X)\leq\id_{E} \}$ is compact. By the Choi-Jamiolkowski representation $\cF$ is however isomorphic to the set of all trace-non-increasing completely positive maps from $E$ to $G$. The same argumentation applied to the set $\cF := \{ X \in \Pos(E \otimes G ): \tr_{G}(X)=\id_{E} \}$ shows that the set of trace-preserving completely positive maps from $E$ to $G$ is compact.
\end{remark}

\begin{lemma} \label{lem_contInR}
Let $G$ and $K$ be finite-dimensional Hilbert spaces and let $\sigma_{EGK} \in \D(E \otimes G \otimes K)$. The mapping $\TPCP(G,G\otimes K) \ni \cR \mapsto F(\sigma_{EGK}, \cR_{G \to GK}(\sigma_{EGK})) \in [0,1]$ is continuous.
\end{lemma}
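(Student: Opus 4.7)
The plan is to decompose the stated map as the composition $\cR \mapsto (\cR_{G\to GK}\otimes \id_E)(\sigma_{EG}) \mapsto F(\sigma_{EGK},\,\cdot\,)$ and establish continuity of each arrow separately. The second arrow is already handled by Lemma~B.9 of~\cite{FR14}, which asserts continuity of the fidelity with respect to the trace norm on each argument. So the work reduces to proving that $\cR \mapsto (\cR_{G\to GK}\otimes \id_E)(\sigma_{EG})$ is continuous as a map into the trace-class operators on $E \otimes G \otimes K$.

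Since $G$ and $K$ are finite-dimensional, the vector space of linear maps from $B(G)$ to $B(G \otimes K)$ is finite-dimensional, and $\TPCP(G,G \otimes K)$ carries the (unique) subspace topology; in particular, any notion of convergence on $\TPCP(G,G\otimes K)$ coming from an ambient norm (e.g., the entry-wise norm on Choi matrices) is equivalent. To get a trace-norm estimate that is insensitive to whether $E$ is finite- or infinite-dimensional, I would fix an orthonormal basis $\{\ket{i}\}_{i=1}^{\dim G}$ of $G$ and expand
\begin{align*}
\sigma_{EG} = \sum_{i,j=1}^{\dim G} \sigma_E^{ij} \otimes \ket{i}\!\bra{j}, \qquad \sigma_E^{ij} := (\id_E \otimes \bra{i})\,\sigma_{EG}\,(\id_E \otimes \ket{j}),
\end{align*}
where each $\sigma_E^{ij}$ is a trace-class operator on $E$ with $\|\sigma_E^{ij}\|_1 \leq \|\sigma_{EG}\|_1 \leq 1$. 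Then
\begin{align*}
(\cR_{G\to GK}\otimes \id_E)(\sigma_{EG}) = \sum_{i,j=1}^{\dim G} \cR_{G\to GK}(\ket{i}\!\bra{j}) \otimes \sigma_E^{ij},
\end{align*}
and for any sequence $\cR_n \to \cR$ in $\TPCP(G,G\otimes K)$, the triangle inequality for the trace norm gives
\begin{align*}
\bigl\| \bigl((\cR_n - \cR)\otimes \id_E\bigr)(\sigma_{EG}) \bigr\|_1
\leq \sum_{i,j=1}^{\dim G} \bigl\| (\cR_n - \cR)(\ket{i}\!\bra{j}) \bigr\|_1 \, \|\sigma_E^{ij}\|_1.
\end{align*}
Each factor $\|(\cR_n - \cR)(\ket{i}\!\bra{j})\|_1$ tends to zero because $\cR \mapsto \cR(\ket{i}\!\bra{j})$ is a linear map between finite-dimensional normed spaces, hence continuous. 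Composing with the fidelity then yields the desired continuity.

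There is no serious obstacle here; the only point that requires a little care is that $E$ is not assumed finite-dimensional. That concern is dissolved by the expansion above, which reduces the entire estimate to a finite sum of trace-norm bounds involving only the finite-dimensional factor $B(G)$, together with uniformly bounded trace-class coefficients on $E$.
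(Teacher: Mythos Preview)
Your proposal is correct and follows exactly the same decomposition as the paper's proof: the paper simply states ``This follows directly from the continuity of $\cR \mapsto \cR_{G \to GK}(\sigma_{EG})$ and the continuity of the fidelity (see, e.g., Lemma~B.9 of~\cite{FR14}).'' Your argument supplies the explicit trace-norm estimate for the first arrow (which the paper leaves implicit), carefully handling the possibly infinite-dimensional $E$ via the finite basis expansion on $G$; this is a strictly more detailed version of the same approach.
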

\begin{proof}
This follows directly from the continuity of $\cR \mapsto \cR_{G \to GK}(\sigma_{EG})$ and the continuity of the fidelity (see, e.g., Lemma~B.9 of~\cite{FR14}).
\end{proof}

\begin{lemma} \label{lem_contInRho}
Let $E$, $G$, and $K$ be separable Hilbert spaces and $\cR \in \TPCP(G, K) $. Then the mapping $\D(E \otimes G) \ni~X \mapsto \cI_{E} \otimes \cR_{G \to K}(X_{EG})\in \D(E \otimes K)$ is continuous.
\end{lemma}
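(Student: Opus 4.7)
The strategy is to reduce the claim to the well-known fact that every trace-preserving completely positive map is a contraction with respect to the trace norm. Since continuity in the trace-norm topology on the set of states is exactly what is meant by continuity here (the set $\D(E\otimes G)$ inherits the trace-norm metric as a subset of the Banach space of trace-class operators), it suffices to exhibit a Lipschitz bound of the form $\|\cI_E\otimes\cR_{G\to K}(X_{EG}-Y_{EG})\|_1 \leq \|X_{EG}-Y_{EG}\|_1$.

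First I would verify that $\cI_E\otimes\cR_{G\to K}$ is well defined as a trace-preserving completely positive map from the trace-class operators on $E\otimes G$ to those on $E\otimes K$, even when $E$ is infinite-dimensional. This follows from a Stinespring (or Kraus) representation $\cR_{G\to K}(Y) = \sum_k V_k Y V_k^{\dagger}$ with $\sum_k V_k^\dagger V_k = \id_G$ in the strong operator topology; tensoring each $V_k$ with $\id_E$ gives operators $\id_E\otimes V_k$ on $E\otimes G \to E\otimes K$ satisfying the analogous completeness relation, and the resulting sum converges in trace norm on any trace-class $X_{EG}$ since the partial sums are monotonically increasing under the trace.

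Next I would invoke the standard fact that any trace-preserving completely positive map $\Phi$ between trace-class operators is a contraction in trace norm: $\|\Phi(Z)\|_1 \leq \|Z\|_1$ for all trace-class $Z$. This is a direct consequence of the Russo–Dye theorem applied to the dual (a unital CP map between C*-algebras is a contraction in operator norm), combined with the duality between the trace norm and the operator norm. Applying this to $\Phi = \cI_E\otimes\cR_{G\to K}$, which we have just identified as TPCP, yields
\begin{align}
\|\cI_E\otimes\cR_{G\to K}(X_{EG}-Y_{EG})\|_1 \leq \|X_{EG}-Y_{EG}\|_1,
\end{align}
so the map is $1$-Lipschitz and in particular continuous.

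I do not anticipate a genuine obstacle; the only delicate point is checking that the Kraus sum for $\cI_E\otimes\cR_{G\to K}$ converges properly in the separable infinite-dimensional setting, which is handled by approximating a trace-class input by finite-rank truncations and using the trace-norm contractivity on each truncation together with a $3\varepsilon$-argument. This is a standard manipulation for channels on separable Hilbert spaces and requires no new ideas beyond those already used in Section~\ref{sec_infiniteSystems}.
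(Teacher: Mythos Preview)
Your proposal is correct and follows essentially the same strategy as the paper: reduce continuity to boundedness of the linear map and establish the trace-norm contraction $\|\cI_E\otimes\cR(Z)\|_1 \leq \|Z\|_1$. The only difference is that where you invoke Russo--Dye and duality, the paper argues more directly by writing $Z = P - N$ as the difference of orthogonal non-negative parts and using $\|\cI_E\otimes\cR(P)\|_1 = \tr(P)$ (and similarly for $N$) since a TPCP map preserves positivity and trace; this avoids any separate discussion of the Kraus-sum convergence or well-definedness of the tensor extension.
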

\begin{proof}
As the map is linear it suffices to show that it is bounded. For that we can decompose $X = P - N$ with $P$ and $N$ orthogonal non-negative operators. Then we have 
\begin{align}
\| \cI_{E} \otimes \cR_{G \to K}(X) \|_1 \leq \| \cI_{E} \otimes \cR_{G \to K}(P) \|_1 + \| \cI_{E} \otimes \cR_{G \to K}(N) \|_1 
= \tr(P) + \tr(N) 
= \| X \|_{1} \ .
\end{align}
\end{proof}

\section{Touching sets lemma}
We prove here a basic fact that is used in the proof of Theorem~\ref{thm_main}.
\begin{lemma} \label{lem_meanVal}
Let $K_0$ and $K_1$ be two sets such that $K_0 \cup K_1 = [0,1]$ and $0 \in K_0$, $1 \in K_1$. Then for any $\delta > 0$ there exists $u\in K_0$ and $v\in K_1$ such that $0 \leq v-u \leq \delta$.
\end{lemma}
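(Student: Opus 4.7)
The plan is to use a simple finite-partition argument. First I would pick any integer $n$ with $1/n \leq \delta$ and consider the equispaced grid $t_i := i/n$ for $i = 0, 1, \ldots, n$, so that consecutive points are at distance $1/n \leq \delta$. Since $[0,1] = K_0 \cup K_1$, every grid point lies in $K_0$ or in $K_1$ (or both).

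The core step is then a standard ``first switch'' argument. Define
\begin{align*}
i^* := \max\{\, i \in \{0, 1, \ldots, n\} : t_i \in K_0 \,\} \ ,
\end{align*}
which is well defined because $0 = t_0 \in K_0$ by hypothesis. Two cases arise. If $i^* = n$, then $1 = t_n \in K_0$, and since we also have $1 \in K_1$ by hypothesis, the choice $u = v = 1$ satisfies $0 \leq v - u = 0 \leq \delta$. If $i^* < n$, then by maximality $t_{i^*+1} \notin K_0$, so the covering assumption forces $t_{i^*+1} \in K_1$; setting $u := t_{i^*} \in K_0$ and $v := t_{i^*+1} \in K_1$ yields $0 \leq v - u = 1/n \leq \delta$.

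There is essentially no obstacle here: the only subtlety is that $K_0$ and $K_1$ need not be disjoint, which is why I handle the boundary case $i^* = n$ separately and exploit the hypothesis $1 \in K_1$. No topological properties of $K_0$ or $K_1$ (closedness, measurability, etc.) are needed, which matches the way the lemma is invoked in Step~1 of the proof of Proposition~\ref{prop_finite}.
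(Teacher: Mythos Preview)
Your proof is correct and complete. It is, however, genuinely different from the paper's argument. The paper works in the continuum: it sets $\mu := \inf K_1$ and then splits into the cases $\mu \in K_0$ and $\mu \notin K_0$, in each case using the defining property of the infimum to locate a nearby point in the other set. Your approach is purely finitary: you lay down a grid of mesh $\leq \delta$ and find the last grid point in $K_0$, which forces the next one into $K_1$. Your route is slightly more elementary, since it sidesteps any appeal to completeness of $\mathbb{R}$ and any edge-case analysis near $0$; the paper's route, on the other hand, identifies a canonical ``switching point'' $\mu$ rather than an arbitrary grid location, which is conceptually a bit sharper but not needed for how the lemma is used.
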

\begin{proof}
We define $\mu := \inf K_1$ and distinguish between the two cases $\mu \in K_0$ and $\mu \not \in K_0$.

If $\mu \in K_0$, it suffices to show that for any $\delta > 0$ we have $[\mu, \mu+\delta] \cap K_1 \neq \emptyset$, since by choosing $u=\mu$ this implies that $u \in K_0$ and that there exists a $v \in  [\mu, \mu+\delta]$ such that $v \in K_1$. By contradiction, we assume that $[\mu, \mu+\delta] \cap K_1 = \emptyset$. This implies that either $\inf K_1 < \mu$ or $\inf K_1 \geq \mu + \delta$, which contradicts $\mu := \inf K_1$.

If $\mu \not \in K_0$ it suffices to show that for any $\delta >0$ we have $[\mu - \delta, \mu] \cap K_0 \neq \emptyset$, since by choosing $v=\mu$ this ensures that $ v \in K_1$ and that there exists a $u \in [\mu - \delta, \mu]$ such that $u \in K_0$. Assume by contradiction that $[\mu - \delta, \mu] \cap K_0 = \emptyset$, which implies that $[\mu - \delta, \mu] \subset K_1$. This however contradicts $\mu := \inf K_1$.

\end{proof}

\section{Properties of projected states} 
We first prove variant of the \emph{gentle measurement lemma}~\cite{winter99}, which is used repeatedly in the proof of Theorem~\ref{thm_main}. 
\begin{lemma} \label{lem_varGentleMeas}
Let $E$ and $G$ be separable Hilbert spaces and let $\Pi_G$ be a finite-rank projector on $G$. For any non-negative operator $\sigma_{EG}$ on $E \otimes G$ we have
\begin{align}
F\!\left( \sigma_{EG}, \frac{(\id_E \otimes \Pi_G) \sigma_{EG} (\id_E \otimes \Pi_G)}{\tr\bigl((\id_E \otimes \Pi_G) \sigma_{EG}\bigr)} \right)^2 \geq \tr(\Pi_G \sigma_{EG}) 
\end{align}
and
\begin{align}
F\bigl( \sigma_{EG}, (\id_E \otimes \Pi_G) \sigma_{EG} (\id_E \otimes \Pi_G) \bigr) \geq  \tr(\Pi_G \sigma_{EG}) \ .
\end{align}

\end{lemma}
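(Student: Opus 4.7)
The plan is to derive the second inequality from Uhlmann's theorem and then obtain the first as an immediate consequence via the scaling property of the fidelity, so there is essentially one substantive step to execute.

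First I would fix an auxiliary separable Hilbert space $R$ and a purification $|\psi\rangle_{EGR}$ of $\sigma_{EG}$ (which exists since $\sigma_{EG}$ is a non-negative trace-class operator on the separable space $E\otimes G$). The key observation is that the sub-normalized vector $|\phi\rangle_{EGR} := (\id_E \otimes \Pi_G \otimes \id_R)|\psi\rangle_{EGR}$ is a purification of $(\id_E \otimes \Pi_G)\sigma_{EG}(\id_E \otimes \Pi_G)$, because tracing out $R$ returns exactly this projected operator. Applying Uhlmann's theorem (in the version for non-negative trace-class operators, obtained from the normalized case by rescaling) to these purifications would yield
\[
F\bigl(\sigma_{EG}, (\id_E \otimes \Pi_G)\sigma_{EG}(\id_E \otimes \Pi_G)\bigr) \geq |\langle\psi|\phi\rangle| = \langle\psi|\,\id_E \otimes \Pi_G \otimes \id_R\,|\psi\rangle = \tr(\Pi_G \sigma_{EG}),
\]
which is precisely the second claimed inequality.

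The first inequality then follows by two elementary manipulations: the scaling identity $F(\rho, c\tau)^2 = c\, F(\rho, \tau)^2$ for $c > 0$, which is immediate from the definition $F(\rho, \tau) = \|\sqrt{\rho}\sqrt{\tau}\|_1$, and the projector identity $\tr\bigl((\id_E \otimes \Pi_G)\sigma_{EG}(\id_E \otimes \Pi_G)\bigr) = \tr(\Pi_G \sigma_{EG})$, which uses $\Pi_G^2 = \Pi_G$ and cyclicity of the trace. Squaring the second inequality and dividing both sides by $\tr(\Pi_G \sigma_{EG})$ would then deliver the first.

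The only delicate point is justifying Uhlmann's theorem in the sub-normalized, potentially infinite-dimensional setting; this is a standard reduction (rescale both operators to unit trace and invoke the usual formulation), and no other genuine obstacle is anticipated.
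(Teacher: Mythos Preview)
Your proposal is correct and takes essentially the same approach as the paper: both use a purification $\ket{\psi}$ of $\sigma_{EG}$, observe that $(\id_E\otimes\Pi_G\otimes\id_R)\ket{\psi}$ purifies the projected operator, and apply Uhlmann's theorem to get $F\geq\bra{\psi}\Pi_G\ket{\psi}=\tr(\Pi_G\sigma_{EG})$. The only cosmetic difference is that the paper invokes Uhlmann separately for each inequality, whereas you prove the second and deduce the first by the scaling identity $F(\rho,c\tau)^2=c\,F(\rho,\tau)^2$; the computations are identical in substance.
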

\begin{proof}
Let $\ket{\psi}$ be a purification of $\sigma_{EG}$ then by Uhlmann's theorem~\cite{Uhl76} we find
\begin{align}
F\!\left( \sigma_{EG}, \frac{(\id_E \otimes \Pi_G) \sigma_{EG} (\id_E \otimes \Pi_G)}{\tr\bigl((\id_E \otimes \Pi_G) \sigma_{EG}\bigr)} \right)^2 \geq \frac{(\bra{\psi} \Pi_G \ket{\psi})^2}{\tr\bigl((\id_E \otimes \Pi_G) \sigma_{EG}\bigr)} = \tr(\Pi_G \sigma_{EG}) 
\end{align}
and
\begin{align}
F\bigl( \sigma_{EG}, (\id_E \otimes \Pi_G) \sigma_{EG} (\id_E \otimes \Pi_G) \bigr)^2 \geq (\bra{\psi} \Pi_G \ket{\psi})^2= \tr(\Pi_G \sigma_{EG})^2 \ .
\end{align}
\end{proof}

We next prove a basic statement about converging projectors that is used several times in the proof of Theorem~\ref{thm_main}.
\begin{lemma} \label{lem_convergenceWeak}
Let $E$ be a separable Hilbert space and let $\{ \Pi_E^e\}_{e \in E}$ be a sequence of finite-rank projectors on $E$ which converges to $\id_E$ with respect to the weak operator topology. Then for any density operator $\sigma_E$ on $E$ we have $\lim_{e \to \infty}\tr(\Pi_E^e \sigma_E )=\tr(\sigma_E)$.
\end{lemma}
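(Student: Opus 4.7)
The plan is to reduce the statement to an application of the dominated convergence theorem on the eigenvalue decomposition of $\sigma_E$. First I would write the spectral decomposition of the density operator as $\sigma_E = \sum_{i \in \N} \lambda_i \proj{\psi_i}$, where $\{\lambda_i\}_{i \in \N}$ is a summable sequence of non-negative reals with $\sum_i \lambda_i = \tr(\sigma_E)$ and $\{\ket{\psi_i}\}_{i \in \N}$ is an orthonormal family in $E$. Existence of such a decomposition is guaranteed by the spectral theorem for compact self-adjoint operators, since every density operator on a separable Hilbert space is trace-class and hence compact. Using the cyclicity of the trace and this decomposition, I would rewrite
\begin{align}
\tr(\Pi_E^e \sigma_E) = \sum_{i \in \N} \lambda_i \, \langle \psi_i | \Pi_E^e | \psi_i \rangle \ .
\end{align}

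Next I would invoke the assumption that $\Pi_E^e$ converges to $\id_E$ in the weak operator topology. By definition this means that $\langle \phi | \Pi_E^e | \psi \rangle \to \langle \phi | \psi \rangle$ for every $\phi, \psi \in E$. Applied pointwise to each eigenvector, this gives $\lim_{e \to \infty} \langle \psi_i | \Pi_E^e | \psi_i \rangle = \langle \psi_i | \psi_i \rangle = 1$ for every $i \in \N$.

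To interchange the limit with the infinite sum, I would apply the dominated convergence theorem on the counting measure on $\N$. The dominating sequence is $\{\lambda_i\}_{i \in \N}$ itself, which is summable with sum $\tr(\sigma_E)$; the required bound $0 \leq \lambda_i \langle \psi_i | \Pi_E^e | \psi_i \rangle \leq \lambda_i$ follows from the fact that $\Pi_E^e$ is an orthogonal projector, so $0 \leq \Pi_E^e \leq \id_E$ and hence $0 \leq \langle \psi_i | \Pi_E^e | \psi_i \rangle \leq 1$. This yields
\begin{align}
\lim_{e \to \infty} \tr(\Pi_E^e \sigma_E) = \sum_{i \in \N} \lambda_i \lim_{e \to \infty} \langle \psi_i | \Pi_E^e | \psi_i \rangle = \sum_{i \in \N} \lambda_i = \tr(\sigma_E) \ ,
\end{align}
which is the desired statement.

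The only subtlety, and arguably the only non-routine point of the argument, is the justification of the interchange of the limit and the sum: a naive attempt to argue only from the pointwise convergence of the matrix elements would be insufficient because weak operator convergence does not in general imply norm convergence on the identity. The uniform bound $\langle \psi_i | \Pi_E^e | \psi_i \rangle \leq 1$, which is available precisely because the $\Pi_E^e$ are projectors (as opposed to arbitrary bounded operators with the same weak limit), is exactly what enables the use of dominated convergence and thus closes the argument.
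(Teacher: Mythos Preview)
Your proof is correct and follows essentially the same approach as the paper: spectral decomposition of $\sigma_E$, pointwise convergence of the diagonal matrix elements from weak operator convergence, and dominated convergence (with the summable eigenvalue sequence as dominating function) to interchange limit and sum. Your justification of the bound $0 \leq \langle \psi_i | \Pi_E^e | \psi_i \rangle \leq 1$ via $0 \leq \Pi_E^e \leq \id_E$ is in fact slightly more explicit than the paper's.
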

\begin{proof}
By assumption the Hilbert space $E$ is separable which implies that any state $\sigma_{E}$ can be written as $\sigma_{E}=\sum_i p_i \proj{x_i}$, where $p_i \geq 0$, $\sum_i p_i =1$ and $\{\ket{x_i} \}_i$ is an orthonormal basis on $E$. As the sequence $\{\Pi_E^e \}_{e \in \N}$ weakly converges to $\id_E$, we find
\begin{align} \label{eq_convergenceSeqA}
\lim_{e \to \infty} \tr (\Pi_E^e  \sigma_{E} ) 
= \lim_{e \to \infty} \sum_i p_i \bra{x_i} \Pi_E^e\ket{x_i}
= \sum_i  p_i  \lim_{e \to \infty}\bra{x_i} \Pi_E^e\ket{x_i}  
= \sum_i p_i \bra{x_i}  \id_{E} \ket{x_i} =\tr(\sigma_E) \ ,
\end{align}
where the second step uses dominated convergence that is applicable since $|\bra{x_i} \Pi_E^e\ket{x_i}| \leq |\bra{x_i} \id_E \ket{x_i}|$ for all $e \in \N$. 
\end{proof}

Let $E$ and $G$ be separable Hilbert spaces and let $\cS$ denote the set of bipartite density operators on $E \otimes G$ with a fixed marginal $\sigma_{G}$ on $G$. Let $\{\Pi_E^e \}_{e \in \N}$ be a sequence of projectors with rank $e$ that weakly converge to $\id_E$ and $\cS^e$ be the set of bipartite states on $E \otimes G$ whose marginal on $E$ is contained in the support of $\Pi^e_E$ and whose marginal on $G$ is identical to $\sigma_{G}$.

\begin{lemma} \label{lem_projSequence}
For every $\sigma_{EG} \in \cS$ there exists a sequence $\{\sigma_{EG}^e \}_{e \in \N}$ with $\sigma_{EG}^e \in \cS^e$ that converges to $\sigma_{EG}$ with respect to the trace norm.
\end{lemma}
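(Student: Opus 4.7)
The plan is to build $\sigma_{EG}^e$ by truncating $\sigma_{EG}$ on the $E$ side to the range of $\Pi_E^e$ and then restoring the resulting loss of $G$-marginal by injecting it back via a fixed state supported on $\Pi_E^e$. Explicitly, let $\tilde\sigma_{EG}^e := (\Pi_E^e \otimes \id_G)\, \sigma_{EG}\, (\Pi_E^e \otimes \id_G)$ and $\Delta_G^e := \sigma_G - \tr_E(\tilde\sigma_{EG}^e)$. Since $\id_E - \Pi_E^e \geq 0$, the operator $\tr_E\bigl((\id_E - \Pi_E^e)\,\sigma_{EG}\bigr)$ is non-negative, so $\Delta_G^e \geq 0$. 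Fixing any unit vector $\ket{\phi^e}$ in the range of $\Pi_E^e$, define
$$\sigma_{EG}^e := \tilde\sigma_{EG}^e + \proj{\phi^e} \otimes \Delta_G^e .$$

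Next I would verify that $\sigma_{EG}^e \in \cS^e$: it is a sum of two non-negative operators, hence non-negative; its $G$-marginal equals $\tr_E(\tilde\sigma_{EG}^e) + \Delta_G^e = \sigma_G$ by construction; and its $E$-marginal $\Pi_E^e \sigma_E \Pi_E^e + \tr(\Delta_G^e)\proj{\phi^e}$ has support inside the range of $\Pi_E^e$ by the choice of $\ket{\phi^e}$. In particular $\tr(\sigma_{EG}^e) = \tr(\sigma_G) = 1$, so $\sigma_{EG}^e$ is a bona fide density operator with the required marginal and support properties.

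For convergence in trace norm, the triangle inequality gives
$$\norm{\sigma_{EG}^e - \sigma_{EG}}_1 \leq \norm{\tilde\sigma_{EG}^e - \sigma_{EG}}_1 + \norm{\proj{\phi^e} \otimes \Delta_G^e}_1 = \norm{\tilde\sigma_{EG}^e - \sigma_{EG}}_1 + \tr(\Delta_G^e).$$
The correction term satisfies $\tr(\Delta_G^e) = 1 - \tr(\Pi_E^e \sigma_E)$, which tends to $0$ as $e \to \infty$ by Lemma~\ref{lem_convergenceWeak}. For the projection term, Lemma~\ref{lem_varGentleMeas} yields $F(\sigma_{EG}, \tilde\sigma_{EG}^e) \geq \tr(\Pi_E^e \sigma_E)$; combined with Lemma~\ref{lem_tracedistancefidelity} (applicable since $\tr(\sigma_{EG}) = 1 \geq \tr(\tilde\sigma_{EG}^e)$) this gives $\norm{\tilde\sigma_{EG}^e - \sigma_{EG}}_1 \leq 2\sqrt{1 - \tr(\Pi_E^e \sigma_E)^2}$, which also vanishes in the limit.

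The only real subtlety is that the naive candidate $\tilde\sigma_{EG}^e$ already satisfies the $E$-support requirement but generally fails to have $\sigma_G$ as its exact $G$-marginal, whereas a pure rescaling would destroy the trace-norm approximation; the additive correction $\proj{\phi^e} \otimes \Delta_G^e$ simultaneously restores the marginal and respects the $E$-support constraint, and the vanishing of $\tr(\Delta_G^e)$, controlled by the weak convergence of $\Pi_E^e$ to $\id_E$ via Lemma~\ref{lem_convergenceWeak}, ensures the correction is asymptotically negligible.
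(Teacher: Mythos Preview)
Your proof is correct and follows essentially the same strategy as the paper: project $\sigma_{EG}$ onto the range of $\Pi_E^e$ on the $E$ side, then restore the $G$-marginal by adding a tensor product of a fixed rank-one state on $E$ with the deficit $\Delta_G^e$. The paper writes the correction as $\proj{0}_E \otimes \tr_E\bigl((\Pi_E^{e\perp}\otimes\id_G)\sigma_{EG}(\Pi_E^{e\perp}\otimes\id_G)\bigr)$, which is exactly your $\proj{\phi^e}\otimes\Delta_G^e$; your explicit requirement that $\ket{\phi^e}$ lie in the range of $\Pi_E^e$ is in fact a small improvement in precision, and your direct use of Lemma~\ref{lem_tracedistancefidelity} on the unnormalized $\tilde\sigma_{EG}^e$ slightly streamlines the convergence estimate.
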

\begin{proof}
For $\sigma_{EG} \in \cS$, let
\begin{align}
\bar \sigma_{EG}^e := \frac{(\Pi_E^e \otimes \id_{G})\sigma_{EG}(\Pi_E^e \otimes \id_{G})}{\tr\bigl( (\Pi_E^e \otimes \id_{G}) \sigma_{EG} \bigr)} \ ,
\end{align}
which has the desired support on $E$, however, $\bar \sigma_{G}^e \ne \sigma_{G}$ in general. This is fixed by considering 
\begin{align}
\sigma_{EG}^e := \tr\bigl( (\Pi_E^e \otimes \id_{G}) \sigma_{EG} \bigr) \bar \sigma_{EG}^e + \proj{0}_E \otimes\tr_E\bigl((\Pi_E^{e\perp} \otimes \id_{G})\sigma_{EG}(\Pi_E^{e\perp} \otimes \id_{G})\bigr)_{G} \ ,
\end{align}
where $\ket{0}_E$ is a normalized state on $E$.
Since the partial trace on $E$ is cyclic on $E$ we obtain
\begin{multline}
\sigma_{G}^e = \tr_E(\sigma_{EG}^e) 
= \tr_E\bigl((\Pi_E^e \otimes \id_{G})\sigma_{EG}(\Pi_E^e \otimes \id_{G})\bigr)+\tr_E\bigl((\Pi_E^{e\perp} \otimes \id_{G})\sigma_{EG}(\Pi_E^{e\perp} \otimes \id_{G})\bigr) \\
= \tr_E\bigl((\Pi_E^e \otimes \id_{G})\sigma_{EG}\bigr)+\tr_E\bigl((\Pi_E^{e\perp} \otimes \id_{G})\sigma_{EG}\bigr)
=\tr_E(\sigma_{EG}) = \sigma_{G} \ .
\end{multline}
By the multiplicativity of the trace norm under tensor products and since $\norm{A}_1=\tr(\sqrt{A^\dagger A})$, the triangle inequality implies that
\begin{multline}
\norm{\bar \sigma_{EG}^e - \sigma_{EG}^e}_1
\leq 1-  \tr\bigl( (\Pi_E^e \otimes \id_{G}) \sigma_{EG} \bigr) + \norm{\tr_E\bigl((\Pi_E^{e\perp} \otimes \id_{G})\sigma_{EG}(\Pi_E^{e\perp} \otimes \id_{G})\bigr)}_1 \\
= 1-  \tr\bigl( (\Pi_E^e \otimes \id_{G}) \sigma_{EG} \bigr) + \tr\bigl((\Pi_E^{e\perp} \otimes \id_{G})\sigma_{EG}\bigr)
= 2 \bigl(1-  \tr (\Pi_E^e  \sigma_{E} ) \bigr) \ . \label{eq_convergenceSeqAA}
\end{multline}
Lemma~\ref{lem_convergenceWeak} now implies that $\lim_{e \to \infty} \tr (\Pi_E^e  \sigma_{E} )  = 1$.
We note that the sequence $\{\bar \sigma_{EG}^e\}_{e \in \N}$ converges to $\sigma_{EG}$ in the trace norm since by the Fuchs-van de Graaf inequality~\cite{fuchs99}, Lemma~\ref{lem_varGentleMeas} and Lemma~\ref{lem_convergenceWeak}
\begin{align}
\lim_{e \to \infty} \norm{\sigma_{EG} - \bar \sigma_{EG}^e}_1 \leq \lim_{e \to \infty} 2\sqrt{1-F(\sigma_{EG},\bar \sigma_{EG}^e)^2} \leq 
 \lim_{e \to \infty} 2\sqrt{1-\tr(\Pi_E^e \sigma_{E})} =0 \ .
\end{align}
Combining this with~\eqref{eq_convergenceSeqAA} and the triangle inequality proves that $\{\sigma_{EG}^e \}_{e\in \N}$ converges to $\sigma_{EG}$ in the trace norm. 
\end{proof}

\section{The transpose map is not square-root optimal}
\label{app_transpose}
As discussed in Section~\ref{sec_discussion}, for pure states $\rho_{ABC}$ it is known~\cite{BK02} that
\begin{align}\label{eq_BKApp}
 F(A;C|B)_{\rho} \leq \sqrt{F \big(\rho_{ABC}, \cT_{B \to BC}(\rho_{AB}) \big)} \ 
\end{align}
holds for $\cT_{B \to BC}$ the transpose map. In this appendix we show that~\eqref{eq_BKApp} does not hold for all mixed states.
Let $\dim A = \dim B = \dim C = 2$ and consider the state
\begin{align}
\rho_{ABC} = \frac{1}{2} \proj{0}_{A} \otimes \proj{0}_{B} \otimes \proj{0}_C + \frac{1}{8} \proj{1}_A \otimes \id_{BC} \ .
\end{align}
The transpose map satisfies
\begin{align}
\cT_{B \to BC}(\proj{0}_B) = \frac{5}{6} \proj{00}_{BC} + \frac{1}{6} \proj{01}_{BC} \quad \text{and} \quad \cT_{B \to BC}(\proj{1}_B) = \frac{1}{2} \proj{10}_{BC} + \frac{1}{2} \proj{11}_{BC} \ .
\end{align}
If we consider a recovery map $\cR_{B \to BC}$ that is defined by
\begin{align}
\cR_{B \to BC}(\proj{0}_B) = \proj{00}_{BC} \quad \text{ and } \quad \cR_{B \to BC}(\proj{1}_B) = \frac{1}{3} \left(\proj{01}_{BC} + \proj{10}_{BC} + \proj{11}_{BC}\right) \ ,
\end{align}
we find $F(\rho_{ABC}, \cR_{B \to BC}(\rho_{AB})) > 0.9829$ and $\sqrt{F(\rho_{ABC}, \cT_{B \to BC}(\rho_{AB}))} < 0.9696$, which shows that \eqref{eq_BKApp} cannot hold since $F\bigl(\rho_{ABC}, \cR_{B \to BC}(\rho_{AB})\bigr) \leq F(A;C|B)_{\rho}$.

%
%
This does not show that one cannot prove a non-trivial guarantee on the performance of the transpose map relative to the optimal recovery map, but it suggests that such a guarantee would have to be worse than the square root (and actually worse that the fourth root as well using another example), or perhaps it is more naturally expressed using a different distance measure (using similar examples, the trace distance does not seem to be a good candidate, either).
We further note that this example does not show that~\eqref{eq_FR} is wrong for the transpose map.

\section*{Acknowledgments}
We thank Mario Berta, Marco Tomamichel, and Volkher Scholz for making us aware of the application of our result to topological order of quantum states (described in Section~\ref{sec_applications}).
We also thank Mario Berta, Fernando Brand{\~a}o, Philipp Kammerlander, Joseph Renes, Burak \c{S}ahino\u{g}lu, Volkher Scholz, Marco Tomamichel, Michael Walter and Mark Wilde for discussions about approximate Markov chains. This project was supported by the European Research Council (ERC) via grant No.~258932, by the Swiss National Science Foundation (SNSF) via the National Centre of Competence in Research ``QSIT'', and by the European Commission via the project ``RAQUEL''.

  \bibliographystyle{abbrv}
  
  \bibliography{bibliofile}
  
  \end{document}